\newif\ifarxiv
\arxivfalse 
\arxivtrue 
\ifarxiv 
\documentclass[letterpaper]{article}
\usepackage[margin=1in]{geometry}
\usepackage{natbib} 
\PassOptionsToPackage{hyphens}{url}\usepackage{hyperref}
\usepackage{graphicx}
\usepackage{amsthm}
\else 
\documentclass{article}

    \PassOptionsToPackage{numbers, compress}{natbib}

\usepackage{neurips_2026}
\fi 

\usepackage[utf8]{inputenc} 
\usepackage[T1]{fontenc}    
\usepackage{hyperref}       
\usepackage{url}            
\usepackage{booktabs}       
\usepackage{amsfonts}       
\usepackage{nicefrac}       
\usepackage{microtype}      
\usepackage{xcolor}         

\usepackage{amsthm}
\usepackage{amsmath}               
\usepackage{geometry}              
\usepackage{fancyhdr}              
\usepackage[shortlabels]{enumitem} 
\usepackage{enumitem}
\usepackage{mathrsfs}
\usepackage{mathtools}
\usepackage{tikz}
\usetikzlibrary{angles,quotes,positioning}
\tikzset{main node/.style={circle,fill=none,draw,minimum size=1cm,inner sep=0pt},}
\PassOptionsToPackage{hyphens}{url}\usepackage{hyperref}
\usepackage[autostyle=true,german=quotes]{csquotes}
\usepackage{cleveref}

\newcommand\numberthis{\addtocounter{equation}{1}\tag{\theequation}}

\usepackage{bm}
\usepackage{bbm}
 
\usepackage{graphicx}
\usepackage{color}

\usepackage{multirow} 

\newlist{todolist}{itemize}{2}
\setlist[todolist]{label=$\square$}
\usepackage{pifont}

\usepackage{caption}
\usepackage{subcaption}

\newtheorem{lemma}{Lemma}

\newtheorem{observationstar}{Observation}

\newtheorem{definition}{Definition}

\newcommand{\p}[1]{\left( #1 \right)}

\newcommand{\abs}[1]{\left \vert #1 \right \vert}

\newcommand{\cd}[0]{\cdot}
\newcommand{\nagents}[0]{\ensuremath{N}}
\newcommand{\ntasks}[0]{\ensuremath{T}}
\newcommand{\task}[0]{\ensuremath{t}}
\newcommand{\val}[0]{\ensuremath{v}}
\newcommand{\Val}[0]{\ensuremath{V}}

\newcommand{\modelset}[0]{\ensuremath{\mathcal{M}}}
\newcommand{\model}[0]{\ensuremath{m}}
\newcommand{\agg}[1]{\ensuremath{u(#1)}}
\newcommand{\ppick}[0]{\ensuremath{p}}

\newcommand{\valvec}[0]{\ensuremath{\vec{\val}}}

\newcommand{\abstracttext}[0]{
\begin{abstract}
Consider a marketplace of AI tools, each with slightly different strengths and weaknesses. By picking the right model for the task at hand, a user can do better than simply using the same model for everything. Routers operate under a similar principle, where sophisticated model selection can increase overall performance. However, aggregation is often noisy, reflecting imperfect user choices or routing decisions. This leads to two main questions: first, what does a \enquote{healthy marketplace} of models look like for maximizing consumer utility? Secondly, how can we incentivize producers to create such models? 
We show that winrate, a standard benchmark in LLM evaluation, can incentivize model creators to \emph{homogenize} for both types of model changes, reducing consumer welfare. We propose a new mechanism, weighted winrate, which rewards models for answers that are higher quality, and show that it provably improves incentives for producers to specialize and increases consumer welfare. We conclude by exploring the impact of our theoretical results in empirical benchmark datasets and discussing implications for benchmark design. 
\end{abstract}
}

\usepackage{framed, listings}
\usepackage{thmtools,thm-restate}

\usepackage{array}
\newcolumntype{C}[1]{>{\centering\let\newline\\\arraybackslash\hspace{0pt}}m{#1}}

\ifarxiv 
\else 
\title{Impacts of Aggregation on Model Diversity and Consumer Utility}

\author{%
  David S.~Hippocampus\thanks{Use footnote for providing further information
    about author (webpage, alternative address)---\emph{not} for acknowledging
    funding agencies.} \\
  Department of Computer Science\\
  Cranberry-Lemon University\\
  Pittsburgh, PA 15213 \\
  \texttt{hippo@cs.cranberry-lemon.edu} \\
}
\fi 

\begin{document}

\ifarxiv 
\title{Impacts of Aggregation on Model Diversity and Consumer Utility
}

\author{Kate Donahue\thanks{\url{kpd@illinois.edu}, CSAIL/LIDS,  Massachusetts Institute of Technology \& Siebel School of Computing and Data Science, University of Illinois, Urbana-Champaign} \and Manish Raghavan\thanks{\url{mragh@mit.edu}, MIT Sloan School of Management \& Department of Electrical Engineering and Computer Science.}}
\date{} 
\else 
\fi 

\maketitle

\abstracttext

\section{Introduction}

Modern AI models exhibit incredible  versatility. They are capable of performing a broad range of tasks---coding, translation, writing, and many more. In some settings, AI models exhibit varying abilities based on the task at hand: some write superior code, while others excel at legal reasoning. When this is the case, savvy consumers of models can benefit from \textit{aggregation} of models: strategically relying on the right model for the job, with the goal of outperforming any model in isolation. For example, LMArena ranks individual models as well as a router (called Max)
that aggregates individual responses across models; the router Max achieves the highest ELO score, outperforming individual
models~\citep{arena_blog_max_2026, lmarena}.
Anecdotally,
consumers are also beginning to act similarly, by strategically relying on
certain models for particular tasks and other models for different tasks. In
this sense, consumers in our setting derive utility from a \textit{set} of
models, not just a single one.

The potential benefits of aggregation are well-known. Learning paradigms like boosting~\citep{freund1997decision} rely on the aggregation of weak learners to build a strong overall system. Similarly, the literature on collective intelligence and team formation finds that diverse teams can succeed by leveraging the complementary strengths of individuals~\citep{hong2004groups}.
On the other hand, recent theory and evidence suggests that AI models are \textit{homogeneous}, failing to demonstrate the heterogeneity that might yield productive aggregation in practice~\citep{kleinberg2021algorithmic,bommasani2022picking,goel2025great,kim2025correlated}. Why is this the case, and what might lead to more diverse model ecosystems?

Our work investigates the \textit{incentives} created by the aggregation
  process itself. If model producers are rewarded based on how often their
  models are selected for use, what kinds of model ecosystems should we expect
  to see form? We focus on the incentive structures that govern the relationship
  between aggregation and ecosystem-level properties like homogeneity, asking
  whether alternative mechanisms could better align the objectives of producers
with the utility of users. Along the way, we find several counterintuitive
properties of aggregation that pose challenges for utility alignment.

\textbf{Contributions.}
In this paper, we investigate how the imperfect aggregation of agent
outputs---whether through automated routers or strategic human
selection---shapes both the realized utility of generative AI and the long-term
evolution of the agent marketplace. We focus on two central questions: what does a \enquote{healthy marketplace} of models that satisfies consumer demand look like? Second, how can we incentive self-interested producers to create such models? We begin by proposing a formal model of aggregation in
Section~\ref{sec:model}: the world is divided into discrete
\textit{tasks}, where agents have different performance on each
task.\footnote{We will sometimes use ``agent'' to refer to an AI model to avoid overloading the
term ``model.''} 
Consumers
noisily choose a model for each task according to a choice model---most of our
results focus on the Bradley-Terry choice model \citep{bradley1952rank, plackett1975analysis, luce1959individual}, though we consider
alternatives as well. In Section~\ref{sec:helpuser}, we focus on consumer welfare and demonstrate that the
addition of a new model or improvement of an existing model does not necessarily benefit consumers.
Intuitively, unless consumer choice is perfect, a worse model can ``distract'' a
user and lead to worse outputs. In Sections~\ref{sec:incententry} and~\ref{sec:incentretrain}, we study how how producer incentives affects consumer utility under aggregation. If producers are motivated by winrate (the probability their model is picked), we show that
competition leads to homogenization: however, consumer welfare would be
maximized by specialization, where different models invest in improvements on
different tasks. We propose and analyze a mechanism that better aligns producer
objectives with consumer welfare. Finally, we include experiments in
Section~\ref{sec:experiments} to explore how our results change incentives in real \enquote{markets} of models, before concluding.

\section{Model}\label{sec:model}
\ifarxiv
Here, we will briefly present our theoretical model for tasks, aggregation across tasks, objectives of consumers and model producers, as well as actions available to producers. 
\else 
\fi 

\noindent \textbf{Task structure:}  We assume that there are $\ntasks\geq 2$ different \enquote{tasks}
or \enquote{task types}, and a set of $\modelset$ different models. For each
task $\task \in [\ntasks]$, a given model $\model$ has value $\val_{\model,
\task}\geq 0$  or $\val_{\model, \task} = \emptyset$ (that is, fails to return a
response for this task). We will assume that such value is deterministic or,
alternatively, we will consider $v_{\model, t}$ to be the expected value when
using model $\model$ on task $t$. For a given set of models $\modelset$, we will use
$\valvec_{\task}$ to denote the list of values of responses on task $\task$. We will use $\oplus$ to denote adding another value to the list: that is $\valvec \oplus  \val$ is the vector $\valvec$ with the value $\val$ appended to it. We will often use $\nagents = \abs{\modelset}$, the number of models in a market. 
With this task structure, we can express the fact that
different models have different strengths and weaknesses. A model's
\emph{total value} is given by its sum of values over tasks, $\Val =
\sum_{\task \in [\ntasks]} \val_{\task}$, and its \emph{average value} is its total value, divided by the number of tasks: these quantities will turn out to be
crucial for later properties of consumer welfare and producer incentives. 

\noindent \textbf{Aggregation:}
The core phenomenon that we will study is \textit{aggregation}, when a user
derives utility across multiple available models. We will focus on \emph{choice-based aggregation}, where for each task a user has some probability $\ppick_{\model}(\valvec_{\task})\in [0, 1]$ of picking a particular model. We require that the total probability of models being picked sums up to 1: $\sum_{\model \in
[\nagents]} \ppick_{\model}(\valvec_{\task})) = 1$. A user's utility for a particular task given choice-based aggregation is given by the expected value of the model she picks: $\agg{\valvec_{\task}} = \sum_{\model \in [\nagents]}
\val_{\model, \task} \cd \ppick_{\model}(\valvec_{\task})$. We will define
$\ppick_{i}(\valvec) = 0$ if $\valvec_i = \emptyset$: any model that fails to
return an answer has probability 0 of being picked.  A few special cases of choice-based aggregation are: \emph{random} aggregation, where the choice model picks randomly among items, and \emph{optimal} aggregation, which always picks the item with highest value (randomizing in the case of ties). 

\ifarxiv 

\begin{definition}[Random]\label{def:random}
Random: pick randomly among all items. 
$$\ppick_{\model}(\valvec_\task) = 1/\abs{\valvec_{\task}} \quad  \forall \model \in [\nagents] \mid \valvec_{\model, \task} \ne \emptyset$$
\end{definition}

\begin{definition}[Optimal]\label{def:opt}
Optimal (perfect accuracy): always pick the item with highest value (randomizing between them in the case of ties). 
\begin{equation*}
\ppick_{\model}(\valvec_\task) = \begin{cases}
1/\abs{\{\max_{\model \in \modelset} \valvec_{\model, \task}\}} &  \val_{\model, \task} = \max_{\model \in \modelset} \val_{\model, \task}\\
0  &  \text{otherwise}
\end{cases}
\end{equation*}
\end{definition}
\else 
\fi 
While optimal and random aggregation are helpful limiting cases, neither is
a particularly useful model of aggregation in settings we are interested in
studying, such as human decision-making or routing between LLMs. Instead, we might expect decisions to be smooth and lie somewhere in between these two extremes: better than random, but worse than optimal, and most likely to \enquote{confuse} items of similar value: most likely to incorrectly pick a lower-value response when it is close in value to the best option, for example.  
More generally, we'd like a flexible
choice model that can express a range of routing \enquote{accuracy} levels in picking the best model for the task.
One choice model satisfying all of these properties is Bradley-Terry-Luce: 

\begin{definition}[Bradley-Terry-Luce ]\label{def:softmax}
Bradley-Terry-Luce (BTL) (alternatively known as Bradley-Terry or Plackett-Luce~\cite{bradley1952rank, plackett1975analysis, luce1959individual}): with a given temperature $\beta >0$, the probability of picking a item $\model$ from a list of values $\valvec_{\task}$ is given by:  
\ifarxiv 
\begin{equation*}
\ppick_{\model}(\valvec_\task) = \frac{\exp(\val_{\model, \task}/\beta)}{\sum_{\model'\in \modelset}\exp(\val_{\model', \task}/\beta)}
\end{equation*}
\else 
$\ppick_{\model}(\valvec_\task) = \frac{\exp(\val_{\model, \task}/\beta)}{\sum_{\model'\in \modelset}\exp(\val_{\model', \task}/\beta)}$. 
\fi 
\end{definition}
BTL aggregation approaches random selection in the limit of high temperature ($\beta \rightarrow \infty$) and approaches perfect selection in the limit of low temperature ($\beta \rightarrow 0$). The probability of picking an item is increases in value, and the probabilities are most close for items that are close in value. Because of these properties, Bradley-Terry is frequently used as a model of human behavior in market competition, e.g. \citep{jagadeesan2023improved, einav2025market, xu2025heterogeneous}. In Appendix \ref{app:errorpatternsrouter} we include experiments exploring its applicability to modeling errors in routing tasks. 

\noindent \textbf{Consumer welfare and producer incentives:}
Throughout, we will assume that consumer welfare is given by the sum of user utility over tasks, assuming aggregation by the BTL choice function: $\sum_{\task \in [\ntasks]} \sum_{\model \in [\nagents]}
\val_{\model, \task} \cd \ppick_{\model}(\valvec_{\task})$. In general, we will \emph{not} assume that producers are directly moving to optimize consumer welfare. Instead, we will assume in the status quo, they are motivated by \emph{winrate}, or the probability that their model is picked by the user: $\ppick_{\model}(\valvec_{\task})$\footnote{Across multiple tasks, we assume they aim to maximize the average winrate.}. One motivation for this is that winrate is used in leaderboard design, as in \cite{lmarena}, while other motivations come from the common assumption in modeling market competition that producers are aiming to maximize market share (e.g. \citep{jagadeesan2023improved, einav2025market, xu2025heterogeneous}). 

In Sections \ref{sec:incententry} and \ref{sec:incentretrain}, we will show that
winrate and consumer welfare are at odds with each other: producers
optimizing for winrate could lead to sub-optimal consumer welfare. To address
this, we propose a mechanism to align user and producer incentives: weighted winrate (Definition \ref{def:weightedwinrate}). Weighted winrate rewards producers not just for an output that is better than other outputs (reflected by winrate, $\ppick_{\model}(\valvec_{\task})$), but also for a higher value itself (reflected by value $\valvec_{\model, \task}$). 

\begin{definition}[Weighted winrate (ours)]\label{def:weightedwinrate}
\emph{Weighted winrate} is the probability that a model is picked \emph{multiplied by the value it gives}, or 
$\ppick_{\model}(\valvec_{\task}) \cd \valvec_{\model, \task}$
\end{definition}
We view this mechanism as very natural: it mimics, for example, the \emph{revenue} that a producer would get from selling a good at a given price, given some market share. One specific feature of modeling LLMs is that we can intervene through leaderboards to implement mechanisms, while such an intervention might be more challenging in other markets. Weighted winrate could be implemented by directly asking users to indicate not only their most preferred option from a set of responses, but also their value for that response. This could be done with coarsened features (e.g. score from 1 to 10, or a 1-5 star rating): we discuss complexities around eliciting labels in Section \ref{sec:conclude}.

\noindent \textbf{Producer actions (model creation and model replacement):} Throughout the paper, we will study two different types of strategic behaviors
that producers can take in pursuit of their objectives: \emph{Model creation} is when a producer creates a new model and adds it to the existing market, while \emph{model replacement} occurs when a producer replaces an existing model with a new one (e.g. with local post-training). Model creation mimics a setting where a producer creates an entirely new model, with complete flexibility over its strengths and weaknesses (limited by some cap in \emph{total value}). Model replacement mimics a setting where a model producer can make small (instantaneous) changes to an \emph{existing} model (e.g. post-training), that may slightly change its existing strengths and weaknesses. While model creation and replacement are connected, they are different actions with different implications for consumer welfare. For example, suppose we begin with a market made of two models, $\{A, B\}$. Model creation would result in the new market $\{A, B, C\}$, while model replacement would result in the new market $\{A, B'\}$, with different impacts on consumer welfare. 

\section{Related work}\label{sec:related}

\textbf{Aggregation of generative models:}
The closest area of research to ours is work that studies aggregation with generative AI tools. For example, \cite{ai2025beyond} designs optimal aggregation mechanisms given some fixed set of multiple LLMs: this could be viewed as the inverse of our setting, where we study a fixed aggregation mechanism and study how to incentivize the creation of models that satisfy that it. A related paper on incentives in benchmarks is \citet{ethayarajh2020utility}: their focus is on users picking models according to their rank on a benchmark, while we study consumers \emph{aggregating} across models on different tasks. 
Another paper that studies consumer aggregation is \cite{meenanivasini}, which differs from ours primarily in our focus on mechanisms and our smoother aggregation function (BTL).  
From an applied angle, \cite{lu2024does} studies \enquote{second opinions} in humans relying on AI tools. \ifarxiv This work relates to ours in that it studies the impact of multiple AI tools with imperfect human aggregation, but differs in that it focuses on empirical performance of humans using AI tools and focuses on using exactly two AI tools. \else \fi 
Other work studying the performance of agentic teams with specific focus on diversity includes \citep{kearns2026networked, bedi2025optimization, li2025rethinking, wang2024mixture, song2025human}.

A complementary line of work applies insights from voting theory to settings with multiple benchmarks, where different benchmarks may reflect varying performance across tasks, for example. Some work aims to design multi-task benchmarks that incorporate rankings of the same models across multiple distinct tasks: this work differs from ours in that the goal is typically to produce a single benchmark, while we study how producers compete over multiple tasks \citep{zhang2024inherent, gordienko2026beyond, colombo2022best, rofin2023vote}. Other work strategically selects subsets of representative benchmarks \citep{procacciametritocracy}. Finally, a related line of literature studies the performance of teams \citep{hong2004groups}, as well as specifically studying the utility of voting or other aggregation mechanisms with diverse teams (e.g. \citep{jiang2014diverse}). 

\noindent \textbf{Strategic competition:}
One related line of work studies model-providers competing over users, as in \citep{jagadeesan2023improved, einav2025market, xu2025heterogeneous, weimarket}. There are strong connections, such as the use of modeling user choice through BTL functions and characterizing producers' best response. However, one key difference of our paper from this work is our focus on the mechanism design aspect: our goal is not only to understand the behavior of producers, but to change their incentives with the goal of better aligning with consumer welfare. Other related works (e.g. \cite{haghtalab2020maximizing, hartline2023optimal, liu2022strategic}) have studied whether it is possible to improve consumer utility by designing mechanisms that agents strategically best-respond to. This line of work tends to differ from ours in its motivating application area (where ours focuses more specifically on generated AI. A few papers have specifically looked at mechanism design for leaderboards (e.g. \cite{chen2026leaderboard, hays2026strategic}): these tend to focus more on score inflation, as compared to our goal of managing homogenization. 

\ifarxiv 
More generally, there has been a rich literature studying agents competing over items, such as Colonel Blotto and Tullock contests. The Colonel Blotto game models a pair of Colonels competing over a series of \enquote{battlefields}, with some fixed total amount of value, and an agent \enquote{winning} a battlefield according to some cost function based on how strongly they out-compete the other Colonel \citep{borel1921theorie}. Tullock contests similarly study different agents who allocate different value in the hope of winning a probabilistic \emph{contest}, where the aggregation function is typically given by $\frac{\val_a}{\val_a + \val_b}$ \citep{Tullock2001}. The closest variant of Colonel Blotto is the \enquote{Lottery Blotto} game \cite{friedman1958game}, which studies Colonel Blotto where the outcome is determined by a Tullock contest function. Results here show that the unique equilibrium is to have players have allocation that is equal across tasks (if all tasks are equally valuable). While these literatures differ from our setting in multiple ways, one crucial one is the function governing \enquote{winrate.} Specifically, winrate is often either deterministic (by whichever agent has the larger value), or when probabilistic is often given by convex functions, such as $\frac{\val_{ai}}{\val_{ai} + \val_{bi}}$. Given a function like this, agents are generally incentivized to best-respond by evenly spreading value across tasks. However, in our setting we use BTL to determine winrate, which is \emph{not}
convex. This leads to much more complex patterns of behavior. We view this complexity as a benefit, given that it arises from satisfying natural properties like users becoming more easily \enquote{distracted} by items that are similar in value, discussed in Section \ref{sec:model}.
\else More generally, there has been a rich and long literature studying agents competing over items, such as the Colonel Blotto game, with a pair of Colonels competing over a series of \enquote{battlefields} \citep{borel1921theorie}, and Tullock contests where agents aim to allocate value in the hope of winning a probabilistic \emph{contest} \citep{Tullock2001}: see Appendix \ref{app:related} for a deeper discussion. A few key differences are the actions available to players differ substantially (\enquote{model creation,} \enquote{model replacement}) see Section \ref{sec:model}, which we view as being more natural for LLM creation) and our work's focus on consumer welfare. 
\fi 

\ifarxiv 
\noindent \textbf{Non-monotonicity in aggregation:} In Section \ref{sec:helpuser} we demonstrate a \enquote{non-monotonicity} result: increasing the value of a model could \emph{decrease} consumer welfare. Related non-monotonicities have been studied in other settings. 
For example, one close connection is in auctions, where results have shown that adding in more bidders could \emph{decrease} revenue (e.g. \citep{rastegari2011revenue}). While this phenomenon is very closely related, the types of aggregation mechanisms and objectives differ between our settings and auctions. Related non-monotonicities also occur in social choices, such as settings where it could be the case that \emph{more} votes for a candidate could cause her to go from winning a race to losing (e.g. \citep{lepelley2018monotonicity}). This could reflect violations of Pareto efficiency or the unanimity principle, which would declare that more support for an option should not \emph{reduce} its property of winning. One close connection is \cite{small1981applied}, which shows that consumer welfare is monotone with a different utility model where users derive utility according to the value plus the additive $\epsilon$  noise: however, in our setting, we argue that it makes more sense to use the noise terms $\epsilon_i$ to model decision errors, rather than true variation in quality that a user perfectly responds to. Our results differ from most voting settings in that we assume there is some true value over candidates (here, model responses) and we are studying the \enquote{expected value} of the election, while most voting mechanisms are candidate-neutral and focus on the probability of individual candidates being selected.

We note that our \enquote{monotonicity} condition is \emph{not} the same as \enquote{regularity} (e.g. \cite{cerreia2019deliberately, caliari2025luce}), a different property often studied in relation to choice models. Regularity is defined as saying that the probability of an item being picked only stays constant or decreases as more items are added (we refer to this property as \enquote{substitutability}, in Definition \ref{def:substitutability}). A choice function could exhibit monotonicity or regularity, or neither, or both: regularity is a property of the probability of an item being picked (given other items added to the set), and monotonicity is a property of the value that the choice function gives the user (as items in the set increase in value).
\else 
\fi 

\section{Aggregation and consumer welfare}\label{sec:helpuser}

First, in this section we begin by studying how consumer welfare given by aggregation over a set is influenced by model creation or model replacement. Our primary results for this section will illustrate that optimizing consumer welfare is surprisingly non-trivial, even leaving aside questions of producer incentives. All proofs are deferred to Appendix \ref{app:helpuser}. 

\noindent \textbf{Model creation:}
As a warm-up, note that model creation with perfect aggregation or random aggregation always has straightforward impacts on consumer welfare. With model creation, adding a new model always weakly improves consumer welfare, while with random aggregation, it helps only when the new model has higher average value.

\ifarxiv 
\begin{observationstar}
With \emph{perfect aggregation} model creation always weakly improves utility. \\
With \emph{random aggregation}, model creation strictly improves utility if and only if the new model has strictly higher total value than the existing average.
\end{observationstar}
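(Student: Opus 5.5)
The plan is to argue task by task and then sum over tasks, since consumer welfare is $\sum_{\task\in[\ntasks]}\agg{\valvec_\task}$. For both aggregation rules, the per-task utility has a closed form: the maximum entry of $\valvec_\task$ under Definition~\ref{def:opt}, and the arithmetic mean of the entries of $\valvec_\task$ under Definition~\ref{def:random}. So the whole argument reduces to comparing $\agg{\valvec_\task}$ with $\agg{\valvec_\task\oplus \val_{\task}}$, where $\val_\task$ is the new model's value on task $\task$.

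\textbf{Perfect aggregation.} First I will check that under Definition~\ref{def:opt} the per-task utility is $\max_{\model}\val_{\model,\task}$. All tied maximizers carry the same value, so randomizing among them still gives exactly the maximum. Appending an entry to a list cannot decrease its maximum, so $\max(\valvec_\task\oplus\val_\task)\ge\max\valvec_\task$ for every $\task$. If the new model returns $\emptyset$ on some task, it gets probability $0$ and that task is unchanged. Summing over tasks gives the weak improvement.

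\textbf{Random aggregation.} I will work under the simplifying assumption that every model returns a response on every task, so the denominator is $\nagents$ before entry and $\nagents+1$ after, uniformly across tasks. Write $\Val_\model$ for the total value of existing model $\model$, $\Val_{new}$ for that of the new model, and $\bar\Val=\frac1\nagents\sum_{\model}\Val_\model$ for the existing average total value. Summing the per-task means, welfare before entry is $\frac1\nagents\sum_\task\sum_\model\val_{\model,\task}=\bar\Val$. Welfare after entry is $\frac{\nagents\bar\Val+\Val_{new}}{\nagents+1}$. Their difference is
\[
\frac{\nagents\bar\Val+\Val_{new}}{\nagents+1}-\bar\Val=\frac{\Val_{new}-\bar\Val}{\nagents+1}.
\]
This is strictly positive exactly when $\Val_{new}>\bar\Val$, which gives the ``if and only if''. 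The comparison can be stated equivalently in average values by dividing by $\ntasks$.

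\textbf{Main obstacle: $\emptyset$ responses.} The only delicate point is that $\emptyset$ responses make the random-aggregation denominator task-dependent. Once that happens, the per-task gains $\frac{\val_\task-\text{mean}_\task}{n_\task+1}$ carry different weights, and the clean total-value criterion can fail. I would handle this by stating explicitly that the second claim assumes full response coverage, which is the setting the observation implicitly uses. If more generality is wanted, the criterion becomes the weighted sum $\sum_\task(\val_\task-\text{mean}_\task)/(n_\task+1)>0$. The perfect-aggregation claim needs no such assumption.
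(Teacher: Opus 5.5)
Your argument is correct and is the reasoning the paper takes for granted. The paper states this observation without a written proof, but elsewhere it treats per-task utility as the maximum (optimal) or the mean (random) of the responses, as in the proof of Lemma~\ref{lem:sufficientmonotone}; your gain $(V_{\mathrm{new}}-\bar{V})/(N+1)$ is just the identity of Lemma~\ref{lem:addnew} with $p_{N+1}=1/(N+1)$, summed over tasks. Your full-coverage restriction is genuinely needed rather than cosmetic: with existing models $[2,2]$, $[\emptyset,2]$, $[\emptyset,2]$ and entrant $[0,5]$, random-aggregation welfare falls from $4$ to $3.75$ even though the entrant's total value $5$ exceeds both the existing models' average total value $8/3$ and the existing welfare $4$, so your weighted criterion (summed over the tasks on which the entrant responds) is the right general statement.
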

\else 
\fi 

Theorem \ref{thrm:higheraccadd} shows that BTL aggregation adds complexity to this story. Specifically, given an existing set of models $\modelset$ and a new model $B$, there are scenarios where $B$ can be \emph{weaker} than every model in $\modelset$ and still the creation of $B$ can help consumer welfare, and also be scenarios where $B$ is \emph{better} than every model in $\modelset$ and the addition can \emph{hurt} consumer welfare.

\begin{restatable}{theorem}{higheraccadd}\label{thrm:higheraccadd}
Given BTL aggregation, it is possible to have a set of existing models $\modelset$ and a new model $B$ such that either 1)  $B$ has higher average value than every model in $ \modelset$, and yet adding $B$ to $\modelset$ decreases consumer welfare, or 2) $B$ has \emph{lower} average value than every model in $\modelset$, and yet adding $B$ to $\modelset$ \emph{increases} consumer welfare. Moreover, this effect is dependent on temperature ($\beta$). 
\end{restatable}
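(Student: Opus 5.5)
Since this is an existence statement, I will prove it by explicit construction, with two tasks ($\ntasks = 2$) and a single existing model, $\modelset = \{A\}$. The consumer-welfare change from adding $B$ decomposes task by task. On task $\task$ the utility moves from $\val_{A,\task}$ to the weighted average
\begin{equation*}
\frac{\val_{A,\task} e^{\val_{A,\task}/\beta} + \val_{B,\task} e^{\val_{B,\task}/\beta}}{e^{\val_{A,\task}/\beta} + e^{\val_{B,\task}/\beta}}.
\end{equation*}
So on each task the change is $p_B(\valvec_\task)\,(\val_{B,\task} - \val_{A,\task})$, where $p_B = 1/(1+e^{(\val_{A,\task}-\val_{B,\task})/\beta})$. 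The key observation is that this change is a gain weighted by a BTL probability that is \emph{below} $1/2$ exactly on the tasks where $B$ is better, and \emph{above} $1/2$ where $B$ is worse. So gains are shrunk and losses are amplified, by an amount controlled by $\beta$.

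\textbf{Case 1} ($B$ has higher average value, yet welfare drops). I take $A = (1,1)$ and $B = (2+\delta, 0)$ with small $\delta > 0$, so $B$ has higher total value than $A$. Task 1 contributes $+(1+\delta)/(1+e^{-(1+\delta)/\beta})$ and task 2 contributes $-1/(1+e^{-1/\beta})$. As $\beta \to 0$ both probabilities tend to $1$, so the sum tends to $+\delta > 0$. The counterexample therefore needs larger $\beta$, where $p$ is near $1/2$ in both terms and the net is $\approx \delta/2$, which is still positive.

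This shows a plain two-task construction with a single $A$ does not obviously work, so I will use the asymmetry differently. I make $B$'s gain on one task large, so that it is heavily attenuated, and its losses spread over many small-margin tasks. With one loss and one gain, the change is
\begin{equation*}
g\,\sigma(g/\beta) - \ell\,\sigma(\ell/\beta), \qquad \sigma(x) = \frac{1}{1+e^{-x}},
\end{equation*}
which I can rewrite as $\phi(g) - \phi(\ell)$ with $\phi(x) = x\,\sigma(x/\beta)$. Since $\phi$ is increasing, $g > \ell$ always gives a gain here, so a single existing model with two tasks cannot produce the counterexample. The fix is to put several existing models in $\modelset$: then $B$'s win probability on its strong task is diluted by many competitors, while on its weak tasks it still wins a nontrivial probability. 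Concretely, I take $\modelset$ with $k$ models, each equal to $(1,1)$, and $B = (2+\delta, 0)$. On task 1, $B$ has probability $e^{(1+\delta)/\beta}/(e^{(1+\delta)/\beta} + k)$ and gains $1+\delta$. On task 2, $B$ has probability $1/(1+ke^{1/\beta})$ and loses $1$. These behave asymmetrically in $\beta$. I will evaluate the welfare change at a moderate $\beta$, tuning $k$, $\delta$ and $\beta$ numerically to obtain a negative value. Continuity in $\beta$, together with the $\beta\to 0$ limit (which is positive, since $B$ wins task 1 and never wins task 2), then establishes the temperature dependence.

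If the diluted-win heuristic still does not flip the sign, my fallback is heterogeneous incumbents. I would let the models in $\modelset$ disagree on task 2, so that adding $B$ pulls probability mass away from the best incumbent on task 2 rather than from $A$'s average.

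\textbf{Case 2} ($B$ has lower average value, yet welfare rises). Here I take specialized incumbents $A_1 = (M, 0)$ and $A_2 = (0, M)$, and a generalist $B = (M/2 - \delta, M/2 - \delta)$, so $B$'s total $M - 2\delta$ is below each incumbent's total $M$. On each task the incumbent welfare is $M\sigma(M/\beta)$. Adding $B$ moves probability from the worst option (value $0$) toward an option of value $M/2 - \delta$. I will compute the new per-task utility and show it exceeds the old one for suitable $\beta$ of order $M$. Intuitively, $B$ mostly steals probability from the value-$0$ incumbent, which has low value, rather than from the value-$M$ incumbent, because BTL takes probability proportionally to $e^{v/\beta}$ from every option. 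I also need to check the $\beta\to 0$ limit: there adding $B$ changes nothing, since the value-$M$ incumbent is always picked, while for $\beta \to \infty$ the welfare is the plain average, which drops by $\approx 2\delta/3$. This shows that the effect depends on temperature.

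The main obstacle is Case 1. As the computation above shows, with a single incumbent the welfare change is $\phi(g)-\phi(\ell)$ with $\phi$ increasing, so it can never be negative. The construction must therefore rely on multiple incumbents, and the sign must be verified by an explicit numerical evaluation.
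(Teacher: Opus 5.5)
There is a genuine gap: both of your concrete constructions provably fail, for the same reason. You misjudge where BTL moves probability.

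Your ``key observation'' is reversed. Where $B$ beats a single incumbent, its probability is $\sigma(\Delta/\beta)>1/2$, so gains are amplified and losses shrunk. On task 2 of your first attempt, $B$'s probability is $1/(1+e^{1/\beta})$, not $1/(1+e^{-1/\beta})$. Your conclusion that one incumbent cannot give Case 1 does survive, because the change equals $\tfrac12\sum_t\Delta_t+\tfrac12\sum_t\Delta_t\tanh(\Delta_t/(2\beta))$.

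More importantly, you never use the identity that decides both cases (Lemma~\ref{lem:addnew}). Adding value $v'$ to a task turns the utility into $(1-p')\,u(\vec{v}_t)+p'v'$, so the change is $p'\,(v'-u(\vec{v}_t))$, where $u(\vec{v}_t)$ is the incumbents' \emph{aggregated} utility. BTL removes mass from each incumbent in proportion to its current mass. It therefore steals \emph{more} from the value-$M$ option, not ``mostly from the value-$0$ one.''

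In your Case 2, this gives $u(\vec{v}_t)=M\sigma(M/\beta)>M/2>M/2-\delta$ on both tasks. Adding the generalist strictly lowers welfare for every $\beta>0$.

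Your Case 1 fails too. With $k$ identical incumbents, the change is $\sum_t\Delta_t\sigma_k(\Delta_t)$, where $\sigma_k(x)=e^{x/\beta}/(e^{x/\beta}+k)$ is increasing with $\sigma_k(0)=1/(1+k)$. Hence the change is at least $\sum_t\Delta_t/(1+k)>0$. For $B=(2+\delta,0)$ and $a=e^{1/\beta}$, it is at least $a/(a+k)-1/(1+ka)>0$. So no tuning of $k,\delta,\beta$ works, and your fallback is never actually constructed.

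The missing idea is to compare the newcomer with the aggregate, not with the individual incumbents.

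For Case 1, you need heterogeneous, complementary incumbents whose aggregate total exceeds each of their own totals. That leaves room for a newcomer that beats every incumbent individually yet sits below the aggregate. The paper uses $A=(8,5)$, $B=(5,8)$, $C=(6,8.1)$ at $\beta=2$, where welfare goes $14.91\to14.85$. Your specialists-plus-generalist shape works here if the generalist has the \emph{higher} total. For example, take incumbents $(1,0),(0,1)$ and newcomer $(0.6,0.6)$: it hurts on both tasks when $\sigma(1/\beta)>0.6$ and helps when $\sigma(1/\beta)<0.6$.

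For Case 2, the newcomer must beat the aggregate on some task, and low temperature must let the chooser exploit that edge while mostly ignoring its losses. The paper takes one incumbent $A=(6,8)$ and an anti-correlated $B=(7,5)$. The corrected single-incumbent formula $\sum_t\Delta_t\sigma(\Delta_t/\beta)$ with $\Delta=(1,-3)$ gives about $+0.59$ at $\beta=1$ and $-0.51$ at $\beta=5$. This also delivers the temperature dependence.

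In effect, your two configurations are swapped.
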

\ifarxiv 
The proof is by a series of examples of $B$ and $\modelset$ such that the properties hold: $B$ has higher (resp. lower) value than any model in $\modelset$ in isolation, but when added to $\modelset$, leads to worse (resp. better) value than the entire set $\modelset$ before aggregation. These examples rely on the degree of complementary strengths and weaknesses within $\modelset$, the degree of additional complementary strengths that $B$ brings to $\modelset$, and the strength of aggregation reflected in the temperature $\beta$.
\else 
\fi 
The implication of Theorem \ref{thrm:higheraccadd} is that when users aggregate across models, optimizing model creation for consumer welfare can be non-intuitive, an insight which will be useful in our later analysis of mechanisms to improve consumer welfare. 

\noindent \textbf{Model replacement:}
We next turn to analyzing model replacement: when one model is replaced with a different model with different values on each task. Again, perfect aggregation and random aggregation provide useful limiting cases for when model replacement improves consumer welfare: Note that with \emph{perfect aggregation}, model replacement improves consumer welfare if and only if it increases the maximum value on each task. With \emph{random aggregation}, model replacement improves consumer welfare if and only if it improves the average value of the model. One natural property we might want is the following: weak task-wise improvement in models would only lead to weak increases in utility. That is, replacing a model with one that dominates the previous model on every task should only \emph{improve} consumer welfare. Monotonicity (Definition \ref{def:monotone}) makes this property formal. 

\begin{definition}[Monotone]\label{def:monotone}
    A choice function is monotone if task-wise increase in values always leads to increased consumer welfare: for all $\valvec_\model, \valvec_\model'$ with $\valvec_{\model, \task}\leq \valvec_{\model, \task}'$ for all $\task \in [\ntasks]$, $\agg{\valvec} \leq \agg{\valvec'}$. 
\end{definition}

While monotonicity is a natural property to desire, unfortunately it is not guaranteed. Lemma \ref{lem:exnonmonotone} gives an example where monotonicity \emph{fails} with BTL aggregation. For intuition, the proof is given by example: consider a set with two values $\{x, 5\}$, for $x = 1, 4$. When the value of $x$ increases from 1 to 4, the value of the first item increased, but so does the probability that element $x$ would be picked. For BTL aggregation with noise parameter $\beta=1$, it turns out that the value of set $\{1, 5\}$ is \emph{higher} than the value of set $\{4, 5\}$\footnote{Related phenomena have been seen in auctions, voting, and social choice: \ifarxiv see related work in Section \ref{sec:related} \else see Appendix \ref{app:related} \fi for a deeper discussion. }. At a high level, the intuition behind non-monotonicity is that a mediocre answer can be more distracting than a clearly terrible answer: for this reason, model improvement can sometimes be harmful. 
\begin{restatable}{lemma}{exnonmonotone}\label{lem:exnonmonotone}
BTL aggregation is non-monotone: e.g. a model
could weakly \emph{increase} the value of its responses on every task, and consumer welfare might \emph{decrease}. 
\end{restatable}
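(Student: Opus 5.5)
Since the lemma is an existence claim, I will prove it with a single explicit counterexample, the one sketched just before the statement. The market has two models and one task (any additional tasks can be given identical values for both models in both markets, so they contribute the same welfare before and after and do not affect the comparison). Take $\beta = 1$. Model $B$ has value $5$ on the task. Model $A$ has value $x$, and I replace $x = 1$ with $x = 4$. This replacement weakly increases $A$'s value on every task, so Definition \ref{def:monotone} would require welfare not to drop.

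Under BTL aggregation (Definition \ref{def:softmax}) the welfare on the task is
\begin{equation*}
u(x) = \frac{x e^{x} + 5 e^{5}}{e^{x} + e^{5}} = 5 - \frac{(5-x)e^{x}}{e^{x}+e^{5}} = 5 - \frac{5-x}{1+e^{5-x}}.
\end{equation*}
I will compare $u(1)$ and $u(4)$ through the loss term $g(d) = d/(1+e^{d})$ with $d = 5-x$.
\begin{itemize}
\item For $x=1$: $d=4$ and $g(4) = 4/(1+e^4) \approx 0.072$.
\item For $x=4$: $d=1$ and $g(1) = 1/(1+e) \approx 0.269$.
\end{itemize}
So $u(4) \approx 4.73 < u(1) \approx 4.93$. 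Welfare strictly decreases even though $A$ improved on every task, and monotonicity fails.

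Two points need care.
\begin{itemize}
\item \textbf{Numerical check.} The comparison $4(1+e) < 1+e^4$ should be verified exactly, not just by decimals. Using $e < 3$ gives $4(1+e) < 16$, and $e^4 > 16$ since $e > 2$. So the inequality holds with room to spare.
\item \textbf{Source of the phenomenon.} The loss $g(d)$ is non-monotone in the gap $d$. It is near zero when $d = 0$ (identical options) and when $d$ is large (the bad option is never chosen), and peaks in between. This gives the intuition that a mediocre answer is more distracting than a terrible one.
\end{itemize}

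The only real obstacle is choosing parameters on opposite sides of the peak of $g$; the chosen values already do this, and the rest is routine.
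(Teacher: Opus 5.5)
Your proposal is correct and uses the same counterexample as the paper: one task, $\beta=1$, one model fixed at value $5$, and the other raised from $1$ to $4$, so welfare falls from about $4.92$ to $4.73$; only the labels $A$ and $B$ are swapped. Your closed form $5-(5-x)/(1+e^{5-x})$ and the exact check $4(1+e)<1+e^4$ make the paper's decimal computation slightly more rigorous, and padding extra tasks with unchanged values reconciles the single-task example with the standing assumption $T\geq 2$.
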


\ifarxiv 
\begin{figure}
\centering 
\includegraphics[width=3in]{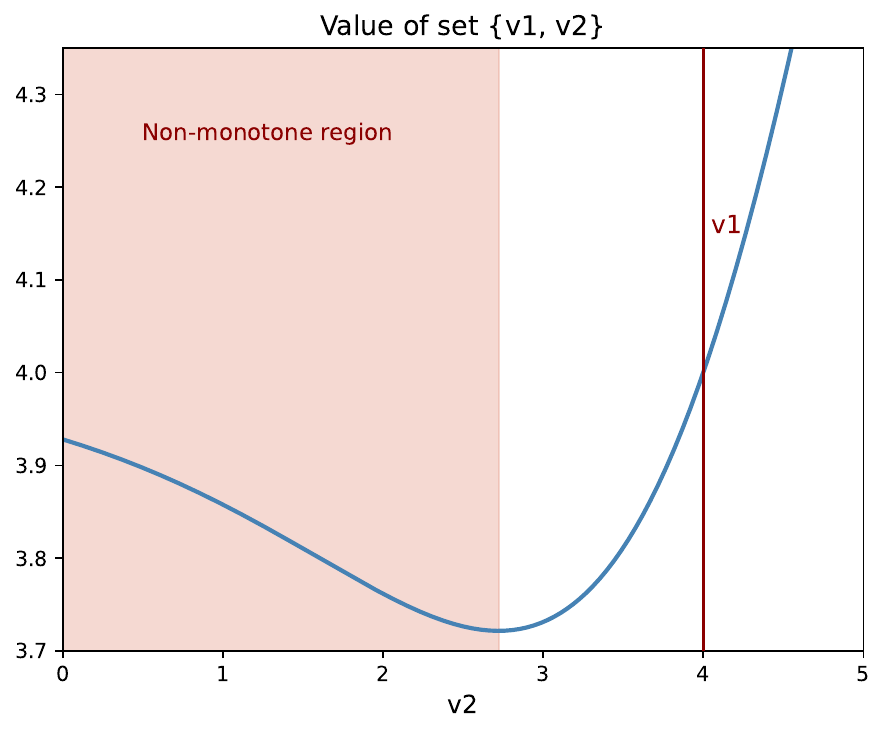}
\caption{Example demonstrating non-monotonicity (Definition \ref{def:monotone}) and non-monotonic region (Definition \ref{def:nonmonotoneregion}). Blue curve shows the value of a set containing two elements $\{v_1, v_2\}$, where aggregation is done with softmax with temperature $\beta$. $v_1$ has fixed value 4, while $v_2$'s value varies. Note that for parts of the figure, the value of $v_2$ \emph{increases}, but the value of the set \emph{decreases}.  }
\label{fig:softmax_example}
\end{figure}
\else 
\fi 

One natural question might be whether it is BTL specifically that leads to
non-monotonicity. Surprisingly, the answer turns out to be no---in fact, any
\emph{separable} choice function (Definition \ref{def:sep-def}) also cannot be
monotone, as shown in Theorem \ref{thrm:sepmonotone}. Intuitively, a function is separable if the probability of an item being picked is proportional to some function of its value. BTL is a separable choice function, but so are many more choice functions, so Theorem \ref{thrm:sepmonotone} implies non-monotonicity is a much wider problem. 

\begin{definition}[Separable]\label{def:sep-def}
A choice function $\ppick(\cdot)$ is separable if there is some function $f(\val) \geq 0$ such that $\ppick_i(\valvec)\propto f(\valvec_i)$: we also require $f(\cd)$ be differentiable, monotonic, and non-constant. 
\end{definition}

\begin{restatable}{theorem}{sepmonotone}\label{thrm:sepmonotone}
A separable choice function cannot be monotone.
\end{restatable}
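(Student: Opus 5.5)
The plan is to reduce to the smallest instance: a single task with two models, one at a fixed value $y$ and one at a value $x$ that we vary. I will read ``monotonic'' in Definition~\ref{def:sep-def} as nondecreasing, so that higher value is never less likely to be picked; this is the natural reading for a choice model. The decreasing case is not covered by this plan. By separability, the consumer welfare on this task is
\begin{equation*}
u(x) = \frac{x f(x) + y f(y)}{f(x) + f(y)} .
\end{equation*}
Monotonicity of the choice function requires $u$ to be nondecreasing in $x$ for every fixed $y \geq 0$. So it suffices to find one pair $(x,y)$ with $u'(x) < 0$.

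Differentiating with the quotient rule and collecting terms, the numerator of $u'(x)$ simplifies to
\begin{equation*}
N(x,y) = f(x)\bigl(f(x)+f(y)\bigr) + f'(x)\, f(y)\, (x-y).
\end{equation*}
The sign of $u'(x)$ is the sign of $N(x,y)$, since the denominator $(f(x)+f(y))^2$ is positive whenever $u$ is defined. Because $f$ is differentiable, nondecreasing and nonconstant, there is a point $x_0$ with $f'(x_0) > 0$. I will then fix $x = x_0$ and send $y \to \infty$, taking $y > x_0$ so that the second term is negative.

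In the main case $f(x_0) > 0$, we have $f(y) \geq f(x_0) > 0$. Dividing $N$ by $f(y)$ gives
\begin{equation*}
\frac{N(x_0,y)}{f(y)} = \frac{f(x_0)^2}{f(y)} + f(x_0) - f'(x_0)\,(y - x_0) \;\leq\; 2 f(x_0) - f'(x_0)\,(y - x_0),
\end{equation*}
which becomes negative once $y > x_0 + 2 f(x_0)/f'(x_0)$.

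In the degenerate case $f(x_0) = 0$, we have $N(x_0,y) = -f'(x_0) f(y)(y - x_0)$. This is strictly negative for any $y > x_0$ with $f(y) > 0$, and such $y$ exist because $f'(x_0) > 0$ forces $f$ to be positive just to the right of $x_0$.

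In either case $u$ is strictly decreasing in $x$ near $x_0$. Raising the lower model's value slightly from $x_0$ is therefore a task-wise weak improvement that strictly lowers welfare, which contradicts Definition~\ref{def:monotone}. This is the same ``mediocre answer distracts'' mechanism as in Lemma~\ref{lem:exnonmonotone}: the gain $f(x)$ from the improved item's own value is bounded, while the loss from diverting probability away from the much better item grows linearly in the gap $y - x$.

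The main obstacle I expect is justifying the limit $y \to \infty$. This needs values to be unbounded, which the model allows since $\val \geq 0$ with no upper bound. It also needs $f(y)$ to stay bounded away from zero, which the nondecreasing assumption supplies. If values were capped, I would first try a local argument: pick $y$ as large as allowed and check the sign of $N$ directly.
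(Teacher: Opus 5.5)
Your argument is correct and is essentially the paper's proof. The paper derives the first-order condition $f(\val_i) + f'(\val_i)\,(\val_i - \agg{\valvec}) \geq 0$, specializes to two items with $\val_2 > \val_1$ and $f'(\val_1) > 0$, and sends $\val_2 \to \infty$ so that $\agg{\valvec}$ grows without bound while the left side stays fixed, which is exactly your sign analysis of $N(x_0, y)$.

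Your version is somewhat more careful: you justify that $x_0$ exists, handle $f(x_0) = 0$, and keep $f(y)$ bounded away from zero. Your nondecreasing reading of ``monotonic'' is the one the paper uses implicitly, and it is genuinely needed. For a decreasing choice such as $f(v) = 1 + 1/(1+v)$, the welfare derivative $\bigl(f(\val_i) + f'(\val_i)(\val_i - \agg{\valvec})\bigr)/\sum_j f(\val_j)$ is always positive.
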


However, we also show that non-monotonicity is not an inherent trait of choice functions: in the proof of Lemma \ref{lem:monotoneex} we give an example of a choice function $\ppick(\cd)$ constructed so that it satisfies monotonicity. At a high level, the aggregation function in Lemma \ref{lem:monotoneex} is based on the \emph{gap} between values between items, rather than the values themselves, which side-steps the impossibility condition in Theorem \ref{thrm:sepmonotone}. 
\begin{restatable}{lemma}{monotoneex}\label{lem:monotoneex}
There exists an aggregation function that is monotone in value. 
\end{restatable}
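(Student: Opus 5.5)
The plan is to construct an explicit choice function in which the probability of picking a lower-valued item depends only on its \emph{gap} to the best item, and in which this probability decays fast enough that raising a lower item never hurts. The cleanest choice is \emph{optimal} aggregation (Definition \ref{def:opt}). There $\agg{\valvec}=\max_\model \val_\model$, which is plainly nondecreasing in every coordinate. However, we want an example that is also smooth and "between random and optimal," so the main construction will be a gap-based rule with an explicit mixing parameter, and optimal aggregation will serve as the limiting sanity check.

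\textbf{Construction.} For a task with values $\valvec$, let $M=\max_\model \val_\model$ and fix a constant $\alpha\in[0,1)$. The rule picks a maximizer (uniformly among ties) with probability $1-\alpha$, and with probability $\alpha$ picks uniformly among all non-empty items. This gives
\begin{equation*}
\agg{\valvec}=(1-\alpha)M+\frac{\alpha}{N}\sum_{\model}\val_\model .
\end{equation*}
Each term is nondecreasing in every $\val_\model$, so $\agg{\cdot}$ is monotone in the sense of Definition \ref{def:monotone}. This is a choice function in the paper's sense, since its probabilities are nonnegative and sum to one. It also falls outside Theorem \ref{thrm:sepmonotone}, because the probability placed on an item depends on whether it is the maximum, not on a fixed $f(\val_\model)$.

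For a richer, continuous gap-based example, I would instead set $\ppick_\model\propto g(M-\val_\model)$ with $g$ decreasing. I would then verify that $\partial \agg{\valvec}/\partial \val_j\ge 0$ in two cases.

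\emph{Case 1: $j$ is not the maximizer.} The derivative is
\begin{equation*}
\ppick_j+(\val_j-\agg{\valvec})\,\frac{\partial \ppick_j}{\partial \val_j}.
\end{equation*}
Here $\partial \ppick_j/\partial \val_j\ge 0$, so this can be negative only when $\val_j<\agg{\valvec}$. This is the main obstacle, and it is exactly the effect that produces non-monotonicity for BTL. To control it, I would choose $g$ so that $g'(d)/g(d)$ is bounded by roughly $1/d$ times a small constant. Using $\agg{\valvec}\le M$, we have $\abs{\val_j-\agg{\valvec}}\le M-\val_j$, which gives
\begin{equation*}
(\agg{\valvec}-\val_j)\,\frac{\partial \ppick_j}{\partial \val_j}\le \ppick_j .
\end{equation*}
A concrete choice is $g(d)=1/(1+d)^{c}$ with $c\le 1$; I would check the inequality explicitly for it.

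\emph{Case 2: $j$ is the maximizer.} Increasing $M$ raises every gap, which shifts probability toward $j$, the highest-value item, so this only helps.

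If the smooth verification becomes messy, I would fall back on the mixture construction above, which proves the lemma with a one-line monotonicity argument.
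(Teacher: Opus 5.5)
Your main construction is correct and proves the lemma as stated. For the optimal/uniform mixture, $\agg{\valvec}=(1-\alpha)M+\frac{\alpha}{N}\sum_\model\val_\model$ is coordinatewise nondecreasing. Even $\alpha=0$ (perfect aggregation) alone suffices, as the paper itself notes in the proof of Lemma~\ref{lem:sufficientmonotone}.

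The paper takes a different route, the pairwise rule of Definition~\ref{ex:monotonefunction}. There $\ppick_i=\binom{\abs{\valvec}}{2}^{-1}\sum_{j\ne i}w_{ij}$, with $w_{ij}=\frac{\max(\val_i-\val_j,0)+c}{\abs{\val_i-\val_j}+2c}$ and $w_{ij}+w_{ji}=1$. This makes $\agg{\valvec}$ an average of two-item ``duels'', and $\partial\agg{\valvec}/\partial\val_i\propto\sum_{j\ne i}\bigl(w_{ij}+(\val_i-\val_j)\,\partial w_{ij}/\partial\val_i\bigr)$ with every summand nonnegative.

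Your route is a one-liner, but for $\alpha<1$ its probabilities jump whenever the identity of the maximizer changes. So it adds nothing beyond the fact that perfect and random aggregation are monotone. The paper's example is continuous, favors higher values, and confuses near-ties more than distant options. It therefore shows that monotonicity is compatible with the smooth, noisy choice behavior the paper models, which is what makes it a meaningful counterpoint to Theorem~\ref{thrm:sepmonotone} and to the continuity hypothesis of Theorem~\ref{thrm:mutexclusive}.

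Your ``richer'' smooth alternative does not work, and the problem is not just messy algebra. Write $d_k=M-\val_k$ and $\lambda=-g'/g$.

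\emph{Case~1.} At a non-maximal $j$ the derivative is $\ppick_j\bigl(1-(\agg{\valvec}-\val_j)\lambda(d_j)\bigr)$. Your display omits a factor $1/(1-\ppick_j)$ on the second term. Case~1 still holds for $g(d)=(1+d)^{-c}$ with $c\le 1$, because $\agg{\valvec}-\val_j\le d_j$.

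\emph{Case~2 is false.} At the unique maximizer $j$ the derivative is $\ppick_j+\sum_{k\ne j}\ppick_k\lambda(d_k)\bigl(\agg{\valvec}-\val_k\bigr)$. Since $\lambda(d)=c/(1+d)$ is largest at small gaps, raising $M$ cuts the weight of near-maximal items much faster than that of far-below items. So mass moves \emph{down} among the non-maximal items, not only toward $j$. Concretely, take $g(d)=1/(1+d)$ and values $10$, $9.99$, and eleven items of value $0$. Then $\agg{\valvec}\approx 6.652$, but raising the top value to $10.1$ gives $\agg{\valvec}\approx 6.605$.

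This failure is not specific to $(1+d)^{-c}$. Placing many items at $M$ shows that Case~1 forces $-g'(d)/g(d)\le 1/d$, so far-below items keep non-negligible weight. A configuration of the same shape, with the far gap $D\to\infty$, then breaks Case~2 for any non-constant $g$. The root cause is measuring every gap from the common reference point $M$, which couples all weights to the maximizer. The paper's pairwise construction avoids this: with $w_{ij}+w_{ji}=1$, changing $\val_i$ affects only the pairs containing $i$.
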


Finally, Lemma \ref{lem:sufficientmonotone} shows that BTL aggregation satisfies monotonicity under certain conditions. Specifically, we will say that a choice function is in the \enquote{monotone region} whenever, for a specific set of values $\valvec$, the derivative of consumer utility with respect to $\valvec_i$ is positive.
\ifarxiv 
\begin{definition}[Monotone region]\label{def:nonmonotoneregion}
A choice function $\ppick(\cdot)$ is in the 
\emph{monotone region} with respect to agent $i$, for a set of values $\valvec$ and parameters of choice function $\phi$, if the derivative of consumer utility with respect to $\valvec_i$ is non-negative, and is in the non-monotone region if the derivative is negative.  
\end{definition}
\else 
\fi 
\footnote{A function is monotone if for all values, it is always in a monotone region.} Lemma \ref{lem:sufficientmonotone} shows that the BTL function is in the monotone region under certain conditions on the values and level of noise ($\beta)$. In particular, it is in this region when $\beta =0$ (perfect aggregation), and when $\beta$ is sufficiently large (sufficiently noisy), where the \enquote{sufficiently large} is a function of the values $\valvec$. 
\begin{restatable}{lemma}{sufficientmonotone}\label{lem:sufficientmonotone}
BTL aggregation function is in the monotone region when 1) $\beta=0$, corresponding to perfect aggregation, or 2) when $\beta$ is sufficiently high: that is, for a given set of values $\valvec$ and a model $i$, BTL aggregation is guaranteed to be in the monotone region whenever $\beta > \max_{i,j} \valvec_j - \valvec_i$. 
\end{restatable}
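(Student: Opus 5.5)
The plan is to compute the derivative of consumer utility with respect to $\valvec_i$ in closed form for BTL, and then read off both conditions from its sign. Fix a task and write $\ppick_j = \exp(\val_j/\beta)/Z$ with $Z=\sum_{k}\exp(\val_k/\beta)$, so that $\agg{\valvec} = \sum_j \val_j \ppick_j$. The softmax derivative identities are
\[
\frac{\partial \ppick_i}{\partial \val_i} = \frac{\ppick_i(1-\ppick_i)}{\beta}, \qquad \frac{\partial \ppick_j}{\partial \val_i} = -\frac{\ppick_i \ppick_j}{\beta} \quad (j\neq i).
\]
By the product rule,
\[
\frac{\partial \agg{\valvec}}{\partial \val_i} = \ppick_i + \frac{1}{\beta}\Big(\val_i \ppick_i(1-\ppick_i) - \sum_{j\neq i}\val_j \ppick_i \ppick_j\Big).
\]
Grouping terms using $\sum_j \val_j\ppick_j = \agg{\valvec}$ should give the clean form
\[
\frac{\partial \agg{\valvec}}{\partial \val_i} = \ppick_i\Big(1 + \frac{\val_i - \agg{\valvec}}{\beta}\Big).
\]
This identity is the crux. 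Verifying the regrouping, $\val_i\ppick_i - \ppick_i\sum_j \val_j\ppick_j = \ppick_i(\val_i - \agg{\valvec})$, is the step I would check most carefully. Everything else follows from it.

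For condition 2), since $\ppick_i>0$, the derivative is non-negative if and only if $\beta \geq \agg{\valvec} - \val_i$. Because $\agg{\valvec}$ is a convex combination of the values, $\agg{\valvec} \leq \max_j \val_j$. Hence
\[
\agg{\valvec} - \val_i \leq \max_j \val_j - \val_i \leq \max_{i,j}(\valvec_j - \valvec_i) < \beta,
\]
so the derivative is strictly positive and BTL is in the monotone region of Definition \ref{def:nonmonotoneregion}. If a model returns $\emptyset$, it is simply dropped from the sum and the same argument applies to the remaining entries. The condition is in fact stronger than needed, since only $\beta \ge \agg{\valvec}-\val_i$ is required. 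This also explains the example of Lemma \ref{lem:exnonmonotone}: there the mediocre item sits below the current utility by more than $\beta$.

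For condition 1), $\beta=0$ is interpreted as the limit of optimal aggregation (Definition \ref{def:opt}), under which $\agg{\valvec}=\max_j \val_j$. This is non-decreasing in $\val_i$. Its derivative is $1$ when $i$ is the unique maximizer and $0$ when $\val_i$ is strictly below the maximum. At ties the function has a kink, but both one-sided derivatives are non-negative. So utility is in the monotone region everywhere, and I would state the tie case via one-sided derivatives. Alternatively, one can note that the bound $\ppick_i(1+(\val_i-\agg{\valvec})/\beta)$ degenerates to the same conclusion as $\beta\to 0$, because $\ppick_i\to 0$ exponentially whenever $\val_i<\max_j\val_j$.
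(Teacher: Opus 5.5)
Your proof is correct and follows the paper's argument: the paper computes the same BTL derivative in the form $\ppick_i(\valvec)\bigl(1+\sum_{j\ne i}(\val_i-\val_j)\ppick_j(\valvec)/\beta\bigr)$, which equals your $\ppick_i\bigl(1+(\val_i-\agg{\valvec})/\beta\bigr)$. It then bounds $\sum_{j\ne i}(\val_j-\val_i)\ppick_j(\valvec)$ by $\max_{j}\val_j-\val_i$ and handles $\beta=0$ via monotonicity of the maximum, exactly as you do. One small correction to your aside on Lemma~\ref{lem:exnonmonotone}: at $\{4,5\}$ with $\beta=1$ one has $\agg{\valvec}-4\approx 0.73<\beta$, so the derivative there is positive; the welfare drop instead comes from values below roughly $3.7$, which the item passes through on its way up from $1$.
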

\ifarxiv 
\noindent \textbf{Tension between monotonicity and model creation}
Finally, we want to highlight a tension between consumer welfare increase from model \emph{replacement} (monotonicity) and consumer welfare increase from model \emph{creation}. That is, for any choice functions satisfying natural properties (substitutability and anonymity), these two properties cannot be simultaneously satisfied. 

\begin{restatable}{theorem}{mutexclusive}\label{thrm:mutexclusive}
These two properties are mutually exclusive for choice-based, continuous aggregation functions satisfying substitutability and anonymity\footnote{Informally, a function satisfies anonymity if the probability of picking an item depends only on its value, and substitutability if the probability of picking an item only stays constant or increases when its value increases: see Appendix \ref{app:tensionmonotonecreate} for formal definitions. }: 1) the function is monotone, and 2) there are always weak benefits to adding in another agent. 
\end{restatable}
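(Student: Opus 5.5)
The proof will be by contradiction. Fix a continuous, choice-based aggregation function $\ppick(\cdot)$ satisfying anonymity and substitutability. Suppose it is both monotone and always gives weak benefits from adding an agent. The key move is to link the two properties through a limiting argument. By anonymity and continuity, a model whose value is driven down toward the worst possible level (e.g.\ $0$) should behave like a model that is \enquote{almost absent}. Monotonicity then says that raising that model's value from this floor can only help. Weak benefit of addition says that adding it at all can only help. We will show that these two requirements together force the choice function to be perfect aggregation, and perfect aggregation is not continuous.

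\textbf{Step 1.} Work on a single task with two items $\{x, y\}$, where $y > x \geq 0$ are fixed, and compare against the singleton $\{y\}$. Weak benefit of addition gives
\[
\ppick_1(x,y)\, x + \ppick_2(x,y)\, y \geq y .
\]
Since $\ppick_1 + \ppick_2 = 1$, this rearranges to $\ppick_1(x,y)(x - y) \geq 0$. Because $x < y$, this forces $\ppick_1(x,y) = 0$. So adding a strictly worse item must never be picked with positive probability. The same argument applies to sets of any size.

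\textbf{Step 2.} Now take $x$ approaching $y$ from below. Step 1 gives $\ppick_1(x,y) = 0$ for every $x < y$. By anonymity, at the tie $x = y$ the two items must receive probability $1/2$ each. This contradicts continuity of $\ppick$ at the tie. At this point the contradiction is already complete, and monotonicity was not used. This makes me suspect that the paper's \enquote{weak benefits to adding an agent} means something weaker. One candidate is that the new agent has \emph{higher} value than some existing agent. Another is that the benefit holds only for some configurations, or only for additions at the current average value.

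\textbf{Step 3 (the robust version, and the main obstacle).} If the intended property is weaker, I would use monotonicity as follows. Starting from $\{y\}$, add an item whose value is just below $y$; by continuity and anonymity it is picked with probability near $1/2$. Then lower its value continuously to $0$. Monotonicity says welfare can only decrease weakly along this path. At the endpoint, the welfare of $\{0, y\}$ is $y\,\ppick_2(0,y)$. This is strictly below $y$ unless $\ppick_1(0,y) = 0$. Substitutability (the probability of picking an item weakly increases in its value) then propagates positive pick probability upward from $0$ for near-tie values. The goal is to exhibit a specific set, plus a specific added agent, that strictly lowers welfare. This would contradict the creation property.

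The hard part is choosing the exact form of the creation property so that monotonicity is genuinely needed. I would first try the version \enquote{adding an agent whose value is at least the minimum existing value weakly helps.} I would then run the path argument above, on sets of size two and three, to derive the contradiction.
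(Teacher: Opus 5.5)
Your Steps 1--2 already form a complete, correct proof, so you can drop Step 3 and the hedging. The paper reads property 2 literally: $\agg{\valvec\oplus\val}\ge\agg{\valvec}$ (it even writes $>$) for every existing $\valvec$ and every added $\val$, including $\val$ below the current minimum. It uses anonymity exactly as you do, so tied items get equal probability. It uses continuity as continuity of $\ppick$ in the values. That reading is forced: perfect aggregation is anonymous, substitutable, monotone and always weakly helped by additions, and its welfare $\max_i \valvec_i$ is continuous.

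Your route differs from the paper's.

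\textbf{The paper's route.} It first proves Lemma~\ref{lem:monotoneppos}: a monotone rule gives every item positive probability. The proof raises the minimum toward the second-smallest value and uses substitutability, anonymity and continuity. The paper then argues both directions:
\begin{itemize}
\item A monotone rule puts positive mass on any added item below the minimum, so that addition strictly lowers welfare.
\item A rule with weak benefits must put zero mass on such an item (your Step 1), so by the lemma it is not monotone.
\end{itemize}

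\textbf{Your route.} You observe that $\ppick_1(x,y)=0$ for all $x<y$ is already incompatible with $\ppick_1(y,y)=1/2$ under continuity. So property 2 is impossible on its own, and neither monotonicity nor substitutability is needed.

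\textbf{What each buys.} Your argument is shorter and more revealing: the real tension is between continuity and weak benefits. Indeed, in the two-item case, the intermediate point the lemma extracts from continuity is itself a violation of property 2. The paper's route gives a stronger first direction: for a monotone rule, \emph{every} addition strictly below the current minimum is harmful. You exhibit only one harmful addition, near a tie, but that is all the theorem requires.

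Two side claims in your write-up are unneeded and not justified as stated, so omit them. First, extending Step 1 to larger sets needs a regularity-type assumption; otherwise a low added item can raise welfare by shifting probability among the existing items. Second, nothing you prove shows the two properties force perfect aggregation.
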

For intuition in the proof of Theorem \ref{thrm:mutexclusive}, we first show that if there are always weak benefits to adding in another model, there must be probability 0 of picking a model with low enough value (e.g., if it has lower value than any other model's value already in the set). However, as we increase the value of that model, the probability of picking that model must at some point go from 0 to a positive probability, while it is still the smallest value among all models: this decreases the value of the set, violating monotonicity. 

\else 
\fi 
Taken together, results in this section show that model improvement can fail to improve consumer welfare for a surprisingly broad set of scenarios. We will use insights from this section in designing incentives for model improvement in the following sections.

\section{Incentivizing model creation}\label{sec:incententry}

In Section \ref{sec:helpuser} we studied how model creation and model replacement affect consumer welfare. However, in current marketplaces, models are created by specific producers, who may be incentivized by objectives that are different from purely optimizing consumer welfare. Specifically, we will assume that producers are currently motivated by \emph{winrate}: the probability that their model is selected. In this section and in Section \ref{sec:incentretrain}, we will show how winrate and consumer welfare may be at odds with each other: producers optimizing for winrate could lead to sub-optimal consumer welfare. Finally, we will discuss our proposed mechanism to better align user and producer incentives (weighted winrate). 
In this section, we will explore how to incentivize \emph{model creation}: the addition of an entirely new model. All proofs are deferred to Appendix \ref{app:incententry}. We will assume the following structure: there are $\ntasks$ tasks and a set of existing models $\modelset$, and each task has value $\valvec_{\task}$. Then, a producer decides to make a new model of total value $\Val$: assume that they can create this model by allocating non-negative value arbitrarily across tasks, and choose to do so according to some objective: winrate, weighted winrate, or consumer welfare. Intuitively, this is connected to how producers might make strategic choices in how to shape their models based on incentives, such as investing in training data or capabilities tailored towards specific tasks. 
We will explore whether producers choose to specialize, placing
most value on a few tasks, or homogenize, spreading value across tasks. We will
show that winrate leads producers to homogenize, which leads to lower utility,
and our proposed mechanism (weighted winrate) leads producers to specialize,
leading to improved consumer welfare.

\noindent \textbf{Model producer actions optimizing different objectives: } We will begin by analyzing the optimal response to each objective. First, Lemma \ref{lem:optvalueb} shows that the optimal allocation for consumer welfare is to either maximally specialize, placing all value on a single task, or to abstain (if the total amount of value $\Val$ for the new model is very low). This result will be the benchmark by which we measure how well producers satisfy consumer welfare. 

\begin{restatable}{lemma}{optvalueb}\label{lem:optvalueb}
If a new model with total value $\Val$ aims to maximize consumer welfare, its best action is: if $\Val$ has higher value than the minimum existing values on tasks ($\Val \geq \min_{i \in [\ntasks]} \agg{\valvec_i}$), specialize by allocating all value on a single task $i$ which maximizes $\ppick_{\nagents+1}(\valvec_i \oplus \Val) \cd (\Val - \agg{\valvec_i})$. If $\Val$ is not sufficiently high, then its best action is to abstain (fail to create a model).   
\end{restatable}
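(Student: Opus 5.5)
The objective is the change in consumer welfare, which we write as a sum of per-task gains. Let the new model place value $x_i \ge 0$ on task $i$, with $\sum_i x_i = \Val$. We treat $x_i = 0$ as ``no response'' on that task, so $\ppick_{\nagents+1} = 0$ there. Under BTL, adding value $x$ to task $i$ changes utility by
\begin{equation*}
g_i(x) = \ppick_{\nagents+1}(\valvec_i \oplus x) \cd \big(x - \agg{\valvec_i}\big).
\end{equation*}
This follows from a one-line calculation. Let $S_i = \sum_{\model} \exp(\val_{\model,i}/\beta)$ and $q(x) = e^{x/\beta}/(S_i + e^{x/\beta})$. The new utility is $(1-q)\agg{\valvec_i} + q x$, because the old models' probabilities are all rescaled by the same factor $1-q$. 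The welfare objective is therefore $\sum_i g_i(x_i)$, with $g_i(0)=0$ for abstention on task $i$.

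The goal is to show that this sum is maximized at a vertex of the simplex, or at $x = 0$ (full abstention). Start with the observation that $g_i(x) < 0$ exactly when $x < \agg{\valvec_i}$. So no task should get a positive amount below its current utility $\agg{\valvec_i}$: setting such an $x_i$ to zero strictly helps. This also gives the abstention case immediately. If $\Val < \min_i \agg{\valvec_i}$, every feasible positive allocation has $x_i \le \Val < \agg{\valvec_i}$, so every term is negative and abstaining is optimal. Conversely, if $\Val \ge \min_i \agg{\valvec_i}$, putting all of $\Val$ on the argmin task gives a nonnegative gain.

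Specialization rests on a superadditivity claim for each $g_i$ on its relevant range: for $a,b \ge \agg{\valvec_i}$ (or more generally whenever both pieces are individually worth placing),
\begin{equation*}
g_i(a) + g_j(b) \le \max\{g_i(a+b),\, g_j(a+b)\}.
\end{equation*}
Two ingredients drive this. First, $q(x)$ grows exponentially in $x$, so $x \mapsto q(x)(x-c)$ is convex and increasing for $x \ge c$. To check this, differentiate: $q' = q(1-q)/\beta$, and $g'' = q'' (x-c) + 2q'$ has the right sign where $q \le 1/2$. Second, a separate comparison handles the region where $q > 1/2$, using $g_i(x) \le x - \agg{\valvec_i}$ together with the linear growth of $g$. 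Moving all mass onto the task with the larger marginal gain then increases the objective. Iterating this merging step gives an allocation with a single nonzero coordinate. The best such coordinate is $\arg\max_i g_i(\Val)$, which is exactly the expression in the statement.

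The main obstacle is that superadditivity across \emph{different} tasks. Each task has its own baseline $S_i$ and $\agg{\valvec_i}$, so convexity of a single $g_i$ does not apply directly. I would first try a cleaner route: at an interior optimum with two positive coordinates $x_i, x_j$, the KKT condition forces $g_i'(x_i) = g_j'(x_j)$. Then check that the Hessian restricted to the direction $(1,-1)$ is $g_i'' + g_j''$, which is positive wherever $g$ is convex. That would rule out such points as maxima. The remaining difficulty is the large-$q$ regime, where $g_i$ can become concave. There I would fall back on the fact that $g_i(x) \to x - \agg{\valvec_i}$ with slope tending to $1$, and argue directly that concentrating the mass does at least as well.
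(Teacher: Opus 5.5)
There is a genuine gap. Your reduction to per-task gains $g_i(x)=\ppick_{\nagents+1}(\valvec_i\oplus x)\,(x-\agg{\valvec_i})$ matches the paper (Lemma~\ref{lem:addnew}), and so do the abstention case and the step that moves mass off tasks where it lies below $\agg{\valvec_i}$ (the ``negative'' cases of Lemma~\ref{lem:reallocatemoretwovalue}). The unproven step is the one that carries the lemma: the cross-task merging inequality $g_i(a)+g_j(b)\le\max\{g_i(a+b),g_j(a+b)\}$.

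\textbf{Why the curvature route fails.} Write $q_i(x)=\ppick_{\nagents+1}(\valvec_i\oplus x)$. Your formula gives $g_i''<0$ as soon as $(2q_i-1)(x-\agg{\valvec_i})>2\beta$. So $g_i$ is convex only up to a point and concave in its tail. Convexity on $[\agg{\valvec_i},a+b]$ therefore fails once $a+b$ is large, which is exactly the large-$\Val$ regime that matters (cf.\ Theorem~\ref{thrm:optwinb}).

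\begin{itemize}
\item The KKT/Hessian argument in direction $(1,-1)$ only excludes interior critical points where both coordinates lie in the convex region. Critical points with one or both coordinates in the concave tail are not ruled out.
\item The fallback via $g_i(x)\le x-\agg{\valvec_i}$ cannot close this. Take $\agg{\valvec_i}=\agg{\valvec_j}=0$, i.e.\ all existing values zero on both tasks, so $q_i=q_j=:q$. The bound only gives $g_i(a)+g_j(b)\le a+b$, whereas the target is $q(a+b)\,(a+b)<a+b$.
\end{itemize}

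\textbf{The missing idea.} The paper uses an exchange argument that needs no second-order information. Take $a\ge\agg{\valvec_i}$ and $b\ge\agg{\valvec_j}$, and relabel so that $q_i(a)\ge q_j(b)$. Then
\begin{align*}
g_i(a)+g_j(b) &= q_i(a)\,(a-\agg{\valvec_i}) + q_j(b)\,b - q_j(b)\,\agg{\valvec_j}\\
&\le q_i(a)\,(a+b-\agg{\valvec_i}) \le q_i(a+b)\,(a+b-\agg{\valvec_i}) = g_i(a+b).
\end{align*}
The first inequality uses $q_j(b)\le q_i(a)$, $b\ge 0$ and $\agg{\valvec_j}\ge 0$ (values are nonnegative). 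The second uses that $q_i$ is increasing and $a+b\ge\agg{\valvec_i}$. This is the paper's proof of Lemma~\ref{lem:reallocatemoretwovalue}.

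Note that mass is merged onto the task with the larger \emph{current pick probability}, not the larger marginal gain. The argument holds in every regime of $q$. With it in place, your pre-processing plus iterated pairwise merging yields a single nonzero coordinate, and the best such coordinate is $\arg\max_i g_i(\Val)$, as claimed.
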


Next, Theorem \ref{thrm:optwinb} studies best response for winrate, showing that that if the new model is sufficiently powerful (has sufficiently high $\Val$), producers are incentivized to \emph{homogenize}, equalizing their winrate across tasks. Intuitively, this occurs because producers optimizing for winrate face decreasing marginal benefit by increased value on a single task. Note the lower bound on $\Val$ in Theorem \ref{thrm:optwinb} depends on the strength of aggregation $\beta$: the producer's incentives are influenced by consumer noise\footnote{Theorem \ref{thrm:optwinb} gives us a lower bound on $\Val$, where any producer above this must be incentivized to homogenize. For a complementary result, Lemma \ref{lem:optwinbound} in Appendix \ref{app:incententry} gives an \emph{upper} bound on $\Val$, below which any producer must specialize. }.

\begin{restatable}{theorem}{optwinb}\label{thrm:optwinb}
Consider a set of tasks with existing values $\{\valvec_i\}$ across tasks. Then, an incoming model with total value $\Val$ optimizing for \emph{total winrate} must equalize winrate across tasks whenever:
$\Val > 2 \cd \ntasks \cd \beta \cd \max_{j \in [\ntasks]}\log\p{\sum_{i \in [\nagents]} \exp(\valvec_{ji}/\beta)}$
\end{restatable}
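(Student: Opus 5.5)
We write the new model's allocation as $x_1,\dots,x_\ntasks\geq 0$ with $\sum_i x_i = \Val$. On task $i$ its winrate under BTL is
\[
w_i(x_i) = \frac{e^{x_i/\beta}}{e^{x_i/\beta} + S_i}, \qquad S_i = \sum_{m\in[\nagents]} \exp(\valvec_{i m}/\beta),
\]
and the producer maximizes $\sum_i w_i(x_i)$. Substituting $x_i = \beta(\log S_i + z_i)$ gives $w_i = \sigma(z_i)$ with $\sigma$ the logistic function. The problem becomes: maximize $\sum_i \sigma(z_i)$ subject to $\sum_i z_i = \Val/\beta - \sum_i \log S_i =: Z$ and $z_i \geq -\log S_i$. "Equalizing winrate" means all $z_i$ are equal, i.e. $z_i = Z/\ntasks$.

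The key structural fact is that $\sigma$ is strictly concave on $(0,\infty)$ and convex on $(-\infty,0)$. So the plan has two steps.

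First, we show that when $\Val$ is large, no optimal allocation can leave any task with $z_i \le 0$. Then all coordinates lie in the concave region. There, the KKT/Jensen argument gives the result: $\sigma'(z_i)$ must be equal across interior coordinates, and $\sigma'$ is injective on $(0,\infty)$, so all $z_i$ are equal. Boundary constraints $z_i=-\log S_i$ are excluded because those points have $z_i<0$.

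Second, and this is the main obstacle, we must rule out mixed allocations in which some tasks are "sacrificed" with $z_i\le 0$ while the others share the remainder. Let $L=\max_j \log S_j$. The hypothesis reads $\Val > 2\ntasks\beta L$, so $Z \geq \Val/\beta - \ntasks L > \ntasks L \geq 0$. (Note $S_j\geq 1$ since values are nonnegative, so $L\ge 0$.)

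Suppose an optimum has a nonempty set $A$ of tasks with $z_i\le 0$ and complement $B$.

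1. By concavity on $B$, the $B$-coordinates can be taken equal to some $c$.
2. The sum constraint with $z_i\ge -L$ on $A$ gives $|B|c \ge Z > \ntasks L$. Hence $c > L$.
3. We then exhibit an improving move: shift mass from a $B$ coordinate to a sacrificed coordinate $z_a\le 0$. The marginal gain is $\sigma'(z_a)$ versus a loss of $\sigma'(c)$.
4. Since $z_a \in [-L,0]$ and $c > L$, we have $\sigma'(z_a)\ge \sigma'(-L)=\sigma'(L) > \sigma'(c)$, using symmetry of $\sigma'$ and its monotone decay on $[0,\infty)$.
5. So a small transfer strictly increases total winrate, contradicting optimality.

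This is where the factor $2$ enters: it guarantees $c$ exceeds $L$ even in the worst case where $A$ is as large as possible and its coordinates sit at the lower bound $-L$.

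If $B$ is empty, all $z_i\le 0$. Then $\sum_i z_i \le 0 < Z$, a contradiction.

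Finally, we combine the two steps. Every optimum has all $z_i>0$, and within the concave region the equal split is the unique maximizer. Hence the optimal allocation sets $x_i=\beta(\log S_i + Z/\ntasks)$, equalizing winrate across tasks.

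I would double-check the edge case where the transfer pushes $z_a$ past $0$. This is harmless because we only need an infinitesimal move. I would also confirm that nonnegativity of $x_i$ is never binding for the equal solution, which holds since $Z/\ntasks>0$.
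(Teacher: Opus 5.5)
Your proof is correct. It shares the paper's skeleton: a task below the $1/2$-winrate threshold cannot survive at an optimum, because averaging produces a rich task that should give up value to it. The key lemma, however, is different.

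The paper first proves a two-task lemma (Lemma~\ref{lem:reallocatemoreb}). It locates the critical points of the pairwise winrate and checks the sign of the second derivative at the equal-winrate point. That point is a maximum exactly when the pair's combined value exceeds $\beta\log(E_1E_2)$. The paper then uses pigeonhole: if some $v_k<\beta\log E_k$, some other task has $v_j\ge 2\beta\max_i\log E_i$. That pair is above threshold, so re-equalizing it raises winrate; otherwise every pair is above threshold and must be equalized.

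You instead pass to logistic coordinates $z_i=x_i/\beta-\log S_i$, where the threshold becomes $z_i=0$. You replace the global two-task analysis with a first-order exchange, $\sigma'(z_a)\ge\sigma'(-L)=\sigma'(L)>\sigma'(c)$. You then finish with strict concavity/KKT on $(0,\infty)$.

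Your route is more self-contained and gives uniqueness of the maximizer directly. It also shows the constant has some slack: since $|B|\le T-1$, $Z>(T-1)L$ already forces $c>L$. The paper's pairwise lemma buys a full characterization of the two-task optimum, including the low-value regime where all value goes to the task with smaller $E_i$. The paper reuses that characterization for the converse bound (Lemma~\ref{lem:optwinbound}).

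A small correction to your bookkeeping in step 2. The bound $|B|c\ge Z$ comes from $z_a\le 0$ on $A$, not from $z_a\ge -L$; the lower bound $-L$ is only needed in step 4. Correspondingly, the factor $2$ is spent as one $TL$ in passing from $V/\beta$ to $Z$ (since $\sum_i\log S_i\le TL$) and one $TL$ to force $c>L$. The worst case for $c$ is $A$'s coordinates at $0$, not at $-L$. This does not affect validity. Also, you do not need the $B$-coordinates to be equal: the largest one already satisfies $z_b\ge Z/|B|>L$.
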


Finally, Lemma \ref{lem:optcombb} analyzes the incentives produced by our new mechanism, weighted winrate (Definition \ref{def:weightedwinrate}): weighted winrate incentivizes producers to maximally specialize by placing all value on a single task and returning a value of 0 elsewhere.  

\begin{restatable}{lemma}{optcombb}\label{lem:optcombb}
Suppose there is a set of existing models and tasks with values $\{\valvec_{\task}\}$. Then, an incoming model with total value $\Val$ optimizing for weighted winrate would best respond by putting all value on a single task (the task that maximizes $\ppick_{\nagents+1}(\valvec_i \oplus \Val)$), and value 0 elsewhere.
\end{restatable}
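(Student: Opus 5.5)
The plan is to write the producer's objective as a sum of per-task terms and show that each term is convex in the value placed on that task. Maximizing a convex function over a simplex is then solved at a vertex.

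For task $i$, let $S_i = \sum_{\model \in \modelset} \exp(\valvec_{\model,i}/\beta)$. If the new model places value $x_i \ge 0$ on task $i$, its weighted winrate there is
\begin{equation*}
g_i(x_i) = x_i \cdot \frac{e^{x_i/\beta}}{e^{x_i/\beta} + S_i}.
\end{equation*}
The producer maximizes $\sum_i g_i(x_i)$ over the simplex $\{x \ge 0, \sum_i x_i = \Val\}$. Placing value $0$ on a task contributes $g_i(0)=0$. This is the same as not entering that task, so "value 0 elsewhere" costs nothing.

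The key step is to show that each $g_i$ is convex on $[0,\infty)$. Write $\sigma_i(x) = e^{x/\beta}/(e^{x/\beta}+S_i)$, a logistic function, so that $\sigma_i' = \sigma_i(1-\sigma_i)/\beta$. Then
\begin{equation*}
g_i'' = 2\sigma_i' + x\sigma_i'' = \frac{\sigma_i(1-\sigma_i)}{\beta}\left(2 + \frac{x(1-2\sigma_i)}{\beta}\right).
\end{equation*}
This second derivative can be negative once $\sigma_i > 1/2$ and $x$ is large. So $g_i$ is not globally convex, and this is the main obstacle.

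I would therefore not rely on convexity and instead use a direct exchange argument. Take any allocation with two positive entries $x_i, x_j > 0$, and compare it with moving all of $x_i+x_j$ onto whichever task is better. The inequality to establish is superadditivity with respect to the best task: for $k = \arg\max_\ell \sigma_\ell$ evaluated at the relevant values,
\begin{equation*}
g_i(x_i) + g_j(x_j) \le \max\{g_i(x_i+x_j),\, g_j(x_i+x_j)\}.
\end{equation*}
This follows from monotonicity of $\sigma$. First, $g_i(x_i) = x_i\sigma_i(x_i) \le x_i\sigma_i(x_i+x_j)$. Likewise $g_j(x_j) \le x_j\sigma_j(x_i+x_j)$. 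Adding these gives a sum at most $(x_i+x_j)\max\{\sigma_i(x_i+x_j), \sigma_j(x_i+x_j)\}$, which equals the right-hand side.

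Iterating this merge over all pairs of positive entries collapses any allocation onto a single task without decreasing the objective. Hence some single-task allocation is optimal. Among single-task allocations the objective is $\Val\cdot \ppick_{\nagents+1}(\valvec_i\oplus\Val)$, so the producer chooses the task maximizing $\ppick_{\nagents+1}(\valvec_i\oplus\Val)$, as stated. If some $S_i$ are equal and ties arise, I would note that the inequality is strict whenever $x_i, x_j > 0$ (since $\sigma$ is strictly increasing), so every optimum is concentrated on one task.
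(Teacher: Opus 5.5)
Your proof is correct and rests on the same inequality as the paper's. The paper just performs your merge on all tasks at once: with task $1$ the task minimizing $\sum_{i}\exp(\valvec_{i,t}/\beta)$ (equivalently, maximizing $\ppick_{\nagents+1}(\valvec_t\oplus\Val)$), it bounds $x_t\,\sigma_t(x_t)\le x_t\,\sigma_t(\Val)\le x_t\,\sigma_1(\Val)$ for every $t$ and sums to get $\Val\,\sigma_1(\Val)$, so neither the iteration nor a separate task-selection step is needed. Your strictness remark (a merge is strict whenever both entries are positive) is a small bonus, since the paper claims strict improvement but only writes out the weak inequalities.
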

The results of Lemmas \ref{lem:optvalueb} and \ref{lem:optcombb} show that in order to maximize weighted winrate and consumer welfare, the optimal allocation has all value on a single task, and either value 0 or $\emptyset$ (abstain) on all other tasks. This outcome could reflect settings where a producer strategically chooses to not participate in a type of task---for example, Nano Banana specializes in image generation, while Gemini Deep Research specializes in academic research. However, it may not be realistic to have models with such performance, either because the benchmarks themselves may become saturated at a particular maximum value, or because scaling laws suggest that achieving increased performance becomes increasingly hard. 

However, the main story of our results generalizes to such settings. Here, we assume that every task has some maximum value $\val_i^*$ above which value cannot be achieved: this could reflect the maximum score on a benchmark, or the maximum utility that a user can obtain for a response. Then, our results for weighted winrate and consumer welfare generalize as follows: 
\begin{restatable}{lemma}{scalinglaw}\label{lem:scalinglaw}
If each task $i$ has a maximum value $\val_i^*$, the allocation that maximizes weighted winrate or consumer welfare is to greedily allocate value for each task according to the objectives in Lemmas \ref{lem:optvalueb} or \ref{lem:optcombb} respectively. 
\end{restatable}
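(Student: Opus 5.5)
The plan is to reduce the capped problem to a separable optimization over tasks and show that each per-task gain has the same shape as in Lemmas \ref{lem:optvalueb} and \ref{lem:optcombb}. Fix the existing values $\{\valvec_i\}$ and write the new model's allocation as $(x_1,\dots,x_\ntasks)$, with $\sum_i x_i \le \Val$ and $0 \le x_i \le \val_i^*$ (or $x_i=\emptyset$ to abstain on task $i$). Both objectives are additive over tasks. For weighted winrate the objective is $\sum_i g_i(x_i)$ with $g_i(x) = \ppick_{\nagents+1}(\valvec_i \oplus x)\cd x$. For consumer welfare, the change from the status quo is $\sum_i h_i(x_i)$ with $h_i(x) = \ppick_{\nagents+1}(\valvec_i \oplus x)\cd(x - \agg{\valvec_i})$. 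The second expression follows from the BTL identity $\agg{\valvec_i \oplus x} - \agg{\valvec_i} = \ppick_{\nagents+1}(\valvec_i\oplus x)(x-\agg{\valvec_i})$, which is exactly the quantity appearing in Lemma \ref{lem:optvalueb}.

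The key structural fact to extract from the proofs of Lemmas \ref{lem:optvalueb} and \ref{lem:optcombb} is that each of $g_i$ and $h_i$ is \emph{superadditive}: splitting value is never better than concentrating it. Concretely, I would show that each $g_i$ (respectively $h_i$, after the abstain option is taken into account) is convex, or at least star-shaped (satisfies $\phi(x)/x$ nondecreasing), on the relevant range. For $g_i$, both factors $x$ and $\ppick_{\nagents+1}$ are nonnegative and increasing. It follows immediately that $g_i(x)/x = \ppick_{\nagents+1}(\valvec_i\oplus x)$ is increasing, so $g_i$ is star-shaped. For $h_i$ the same holds on $x \ge \agg{\valvec_i}$, where $h_i(x)/(x-\agg{\valvec_i})$ is increasing. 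Below that threshold, abstaining ($h_i=0$) dominates. This is precisely the mechanism by which the uncapped lemmas concluded ``all value on one task.''

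With star-shapedness in hand, I would argue by exchange. Take any optimal allocation and suppose two tasks $i,j$ both receive interior amounts, neither at its cap. Move mass from the task with the lower marginal ratio $\phi(x)/x$ to the other, until one of two things happens: the receiving task hits $\val^*$, or the donor reaches $0$ (or $\emptyset$). Star-shapedness guarantees that this move weakly increases the objective. Iterating, there is an optimal allocation in which at most one task is strictly between $0$ and its cap, and every other task is either saturated or empty.

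It remains to identify which tasks get saturated, and to show that the order can be built greedily. The natural order is to repeatedly choose the task maximizing the relevant criterion from Lemma \ref{lem:optvalueb} or \ref{lem:optcombb}: for weighted winrate, the one maximizing $\ppick_{\nagents+1}(\valvec_i\oplus \min(\val_i^*, \text{remaining}))$, and analogously for welfare. Fill that task to its cap and recurse on the remaining budget and tasks, abstaining in the welfare case when no task is worth entering.

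I expect this greedy step to be the main obstacle. Exchange arguments give the saturated-or-empty structure, but not necessarily that a myopic choice is globally optimal, since tasks have different caps. My first attempt would be an exchange comparing the greedy's first pick $i$ against an optimal solution that does not saturate $i$, using the ratio $\phi_i(\val_i^*)/\val_i^*$. If that fails for heterogeneous caps, I would interpret ``greedily'' as applying the single-task rule of the earlier lemmas sequentially on residual budget, and prove optimality via the ratio-ordering argument for star-shaped functions.
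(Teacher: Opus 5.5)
Your separable reformulation and the star-shapedness of $g_i(x)=x\,\ppick_{\nagents+1}(\valvec_i\oplus x)$ are correct, but the exchange step does not follow from them. Star-shapedness amounts to superadditivity, and it only licenses moving \emph{all} of task $j$'s value onto task $i$. If $g_i(x_i)/x_i\ge g_j(x_j)/x_j$, then $g_i(x_i+x_j)\ge (x_i+x_j)\,g_i(x_i)/x_i\ge g_i(x_i)+g_j(x_j)$; this is exactly the mechanism behind the uncapped Lemmas \ref{lem:optvalueb} and \ref{lem:optcombb}. When the receiver hits $\val_i^*$ first, the donor keeps a residual $x_j-\delta>0$ whose ratio has dropped. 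Star-shapedness then only yields $g_j(x_j)-g_j(x_j-\delta)\ge \delta\,g_j(x_j)/x_j$, which is a lower bound on the loss, not an upper bound, so nothing says the move helps. Convexity, your other option, also fails. With $p=\ppick_{\nagents+1}(\valvec_i\oplus x)$ one has $g_i''(x)=\frac{p(1-p)}{\beta}\bigl(2+\frac{x}{\beta}(1-2p)\bigr)$, which is negative once $p>1/2$ and $x$ is moderately large. On that stretch, equalizing beats concentrating.

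This is not repairable, because the structure you are aiming for is false. Take $\beta=1$ and two identical tasks, each with a single incumbent of value $0$. Then $\agg{\valvec_t}=0$, and both the weighted winrate and the welfare gain of placing $x$ on a task equal $x\sigma(x)$, where $\sigma(x)=e^x/(1+e^x)$. Let $\val_t^*=10$ and $\Val=12$. Fill-to-cap greedy gives $(10,2)$, worth $10\sigma(10)+2\sigma(2)\approx 11.76$, while $(6,6)$ is worth $12\sigma(6)\approx 11.97$. Starting from $(6,6)$ the ratios tie, and your move toward a cap in either direction lowers the objective. Abstaining on the leftover $2$ units would only lower the greedy welfare further. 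So the optimum has two strictly interior tasks, and the greedy allocation is suboptimal for both objectives. The paper's own proof is a one-line version of the same reallocation argument (``any other allocation could be strictly improved by re-allocating value to tasks that are higher on that objective'') and has the same hole; read as the paper reads it (fill tasks to $\val_i^*$ one at a time), the lemma is false in this generality.

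Your approach does salvage one regime. Suppose each per-task gain is convex on $[0,\val_i^*]$; for $g_i$ this holds whenever $\frac{x}{\beta}(2p-1)\le 2$ on that range. Then the objective is convex on $\{0\le x_i\le\val_i^*,\ \sum_i x_i=\Val\}$, so some optimum is a vertex, and every task is empty or saturated except at most one. Even then, choosing which tasks to saturate is a knapsack-type problem, and the greedy order you flagged as the main obstacle needs its own argument.
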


\noindent \textbf{Impacts on consumer welfare: } Next, we will turn to analyzing the impact of weighted winrate on consumer welfare: does it lead to better outcomes than winrate? The weighted winrate allocation in Lemma \ref{lem:optcombb} differs from the social welfare allocation in Lemma \ref{lem:optvalueb} in two ways: first, the choice of which task to allocate value in may differ, secondly, agents optimizing for social welfare would abstain on some tasks (fail to return a value), while agents optimizing for weighted winrate would return an output of value 0. \ifarxiv \else \footnote{\fi Note that the weighted winrate mechanism leaves producers indifferent between returning an item of value 0 or abstaining, as both give value $\valvec_{i, \task} \cd \ppick_i (\valvec_{\task})=0$: we take the worst-case assumption that producers return a response of value 0. However, these actions are different from the consumer's perspective, because returning a response of value 0 steals probability mass from higher-value responses. \ifarxiv \else }\fi. Given these differences, it is clear that weighted winrate is weakly worse for consumer welfare than directly optimizing for welfare itself. But, as we show in Theorem \ref{thrm:weightedwinratebetter}, weighted winrate always at least as good as winrate: 
whenever winrate incentivizes producers to homogenize outputs by equalizing winrate (as in Theorem \ref{thrm:optwinb}), the incentives produced by weighted winrate would lead to strictly higher consumer welfare. 

\begin{restatable}{theorem}{weightedwinratebetter}\label{thrm:weightedwinratebetter}
Whenever agents would respond to winrate by equalizing winrate (conditions of Theorem \ref{thrm:optwinb}), weighted winrate would lead to strictly higher consumer welfare than winrate.
\end{restatable}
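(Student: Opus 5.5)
The plan is to compute the change in consumer welfare task by task under each of the two allocations and then compare them directly. Adding a new model with value $x_t$ on task $t$ changes that task's utility by
\[
\Delta_t(x_t) = \ppick_{\nagents+1}(\valvec_t\oplus x_t)\,\bigl(x_t-\agg{\valvec_t}\bigr).
\]
This holds because the new model's pick probability is $\ppick_{\nagents+1}$, and BTL rescales the incumbents' probabilities uniformly by $1-\ppick_{\nagents+1}$. Write $S_t=\sum_{i\in[\nagents]}\exp(\valvec_{ti}/\beta)$, so that $\ppick_{\nagents+1}(\valvec_t\oplus x)=e^{x/\beta}/(S_t+e^{x/\beta})$.

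\textbf{Winrate allocation.} Under the conditions of Theorem \ref{thrm:optwinb}, the winrate allocation $\{x_t\}$ equalizes winrate at a common value $q$ on every task, with $\sum_t x_t=\Val$. Its welfare gain is therefore
\[
W_{\text{win}}=\sum_t q\,(x_t-\agg{\valvec_t}) = q\Val - q\sum_t \agg{\valvec_t}.
\]

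\textbf{Weighted-winrate allocation.} By Lemma \ref{lem:optcombb}, the producer puts all of $\Val$ on a task $i$ that maximizes $\ppick_{\nagents+1}(\valvec_i\oplus\Val)$ and returns $0$ elsewhere. Writing $P=\ppick_{\nagents+1}(\valvec_i\oplus\Val)$, its welfare gain is
\[
W_{\text{ww}}=P(\Val-\agg{\valvec_i})-\sum_{t\neq i}\frac{\agg{\valvec_t}}{1+S_t}.
\]

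\textbf{Comparison.} Subtracting, and using $\sum_t \agg{\valvec_t} = \agg{\valvec_i} + \sum_{t\ne i}\agg{\valvec_t}$,
\[
W_{\text{ww}}-W_{\text{win}}=(P-q)\bigl(\Val-\agg{\valvec_i}\bigr)+\sum_{t\neq i}\Bigl(q-\tfrac{1}{1+S_t}\Bigr)\agg{\valvec_t}.
\]
I will show each term is nonnegative and the first is strictly positive.

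For the sum, I need $q\ge 1/(1+S_t)$ for every $t$. Since $q=\ppick_{\nagents+1}(\valvec_t\oplus x_t)$ and this is increasing in $x_t$, the inequality reduces to $x_t\ge 0$, which holds because allocations are non-negative. Together with $\agg{\valvec_t}\ge 0$, every summand is nonnegative.

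For the first term, I need $P>q$ and $\Val>\agg{\valvec_i}$.
\begin{itemize}
\item For $P>q$: $P\ge \ppick_{\nagents+1}(\valvec_i\oplus\Val)>\ppick_{\nagents+1}(\valvec_i\oplus x_i)=q$. Strictness uses $x_i<\Val$, which holds because an equalizing allocation is strictly positive on every one of the $\ntasks\ge2$ tasks. To see positivity, solve the equalization explicitly as $x_t=\beta\log c+\beta\log S_t$ with $\beta\log c=(\Val-\beta\sum_t\log S_t)/\ntasks$. The hypothesis $\Val>2\ntasks\beta\max_j\log S_j$ then gives $\beta\log c>\beta\max_j\log S_j$, so each $x_t>0$.
\item For $\Val>\agg{\valvec_i}$: $\agg{\valvec_i}\le\max_k \valvec_{ik}\le\beta\log S_i$, and the hypothesis gives $\Val>2\ntasks\beta\log S_i$.
\end{itemize}
Hence the first term is strictly positive and $W_{\text{ww}}>W_{\text{win}}$.

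The main obstacle I expect is this bookkeeping. First, the equalized allocation must be interior (strictly positive on every task), so that $P>q$ is strict; this is exactly where the threshold from Theorem \ref{thrm:optwinb} enters. Second, the degenerate case $\log S_i\le 0$ needs care, since then the bound $\Val>2\ntasks\beta\log S_i$ is weak. There I would instead use $\Val>0$ and, if needed, $\agg{\valvec_i}\ge 0$ with $\max_k\valvec_{ik}\le \beta\log S_i\le 0$, which forces $\agg{\valvec_i}=0<\Val$. I would also check that the weighted-winrate task choice only helps: any task choice gives $P\ge\ppick_{\nagents+1}(\valvec_i\oplus x_i)$, and the chosen task maximizes $P$.
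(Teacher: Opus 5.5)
Your proposal is correct and follows the paper's proof essentially step for step. Both arguments use the per-task welfare change from Lemma \ref{lem:addnew}. Both rearrange the difference into $(P-q)(\Val-\agg{\valvec_i})$ plus a nonnegative sum, using $\ppick_{\nagents+1}(\valvec_t\oplus 0)\le q$, and both use the same log-sum-exp bound to get $\Val>\agg{\valvec_i}$. Your explicit check that the equalizing allocation is interior, which gives $P>q$ strictly, fills in a step the paper leaves implicit.
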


\section{Incentivizing model replacement}\label{sec:incentretrain}
In Section \ref{sec:incententry}, we studied incentives in model \emph{creation}, when producers are making an entirely new model. In this section, we will study incentives in model \emph{replacement}, when an existing model has instantaneous changes made to its strengths and weaknesses, mimicking small tweaks like post-training in models. Incremental updates to existing models could occur with post-training of a mostly-finalized model to fine-tune its performance on specific tasks, or minor tweaks to capabilities and tool use, potentially in response to competition from other models or performance on leaderboards. 
The high-level story will be that the incentives and outcomes are very similar to with model creation: winrate tends to incentivize homogenization, to the detriment of consumer welfare, while weighted winrate gives incentives that leads to improved consumer welfare.  All proofs are deferred to Appendix \ref{app:incentretrain}. We will assume the following structure: there are $\ntasks>1$ tasks, and exactly two producers $\nagents=2$. Each of the producers $A$ and $B$ has an existing allocation across tasks $\valvec_a, \valvec_b$. Each producer can make instantaneous changes to its relative allocation. In particular, they calculate the derivative of some objective  (e.g. winrate, weighted winrate, consumer welfare) and change their allocation in order to improve their utility according to this objective. 

\noindent \textbf{Model producer actions optimizing different objectives: } Again, we will begin by analyzing how model producers would respond to different objectives (social welfare, winrate, and weighted winrate). Specifically, Theorem \ref{thrm:winvaluepick} considers a pair of tasks, and shows that for winrate, producers are always incentivized to improve on the \emph{same} task, but for optimizing for consumer welfare, they would always prefer to optimize \emph{different} tasks. While Theorem  \ref{thrm:winvaluepick} shows how producers would make decisions between \emph{pairs} of tasks, this directly implies results about the total ordering over tasks: with winrate, producers would have identical orderings over tasks they choose to improve in, and for consumer welfare, producers would have exactly inverted orderings over tasks they choose to improve in. Note that this matches the pattern we saw in Section \ref{sec:incententry}: optimizing for winrate tends to incentivize homogenization, while optimizing for consumer welfare tends to lead to specialization.

\begin{restatable}{theorem}{winvaluepick}\label{thrm:winvaluepick}
For $\nagents=2$ agents optimizing for winrate, agents would always pick the \emph{same} task to increase value in, whereas agents optimizing for consumer welfare always pick \emph{different} tasks. 
\end{restatable}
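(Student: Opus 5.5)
The plan is to reduce everything to the two-item BTL formula and compare the marginal gains that each agent gets from a small increase of value on a given task. With $\nagents=2$ agents $A$ and $B$, write $a_\task=\valvec_{a,\task}$, $b_\task=\valvec_{b,\task}$, $x_\task=(a_\task-b_\task)/\beta$, and $\sigma(x)=1/(1+e^{-x})$. Then Definition \ref{def:softmax} gives $\ppick_a(\valvec_\task)=\sigma(x_\task)$ and $\ppick_b(\valvec_\task)=1-\sigma(x_\task)=\sigma(-x_\task)$. Each objective is a sum over tasks of a term that depends only on $(a_\task,b_\task)$. So when an agent decides between increasing value on task $1$ or task $2$, it compares the partial derivatives of its objective with respect to its own value on the two tasks, and picks the larger one. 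I will compute these partial derivatives for winrate and for consumer welfare, and then compare the orderings the two agents induce on a pair of tasks. As in the surrounding discussion, a result for every pair of tasks gives the statement for the whole ordering.

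\textbf{Winrate.} Agent $A$'s marginal gain on task $\task$ is $\partial \sigma(x_\task)/\partial a_\task=\sigma'(x_\task)/\beta$. Agent $B$'s is $\partial(1-\sigma(x_\task))/\partial b_\task=\sigma'(x_\task)/\beta$, which is the same quantity. Both agents therefore rank tasks by $\sigma'(x_\task)$. Since $\sigma'$ is even and strictly decreasing in $|x|$, this ranking is by the smallest gap $|a_\task-b_\task|$. Hence whichever of two tasks one agent prefers, the other prefers the same one: both push on the most contested task.

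\textbf{Consumer welfare.} On task $\task$ the welfare is $u_\task=b_\task+(a_\task-b_\task)\sigma(x_\task)$. Define $g(x)=\sigma(x)+x\sigma'(x)$. Differentiating gives $\partial u_\task/\partial a_\task=g(x_\task)$. For $B$,
\[
\partial u_\task/\partial b_\task=1-\sigma(x_\task)-x_\task\sigma'(x_\task)=\sigma(-x_\task)+(-x_\task)\sigma'(-x_\task)=g(-x_\task),
\]
where the middle step uses $\sigma(-x)=1-\sigma(x)$ and that $\sigma'$ is even. The key identity is $g(x)+g(-x)=1$. Suppose $A$ strictly prefers task $1$, i.e.\ $g(x_1)>g(x_2)$. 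Then $g(-x_2)=1-g(x_2)>1-g(x_1)=g(-x_1)$, so $B$ strictly prefers task $2$. This means the two agents' orderings over tasks are exact inverses, which is the claimed specialization.

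\textbf{Main obstacle.} I expect the main difficulty to be interpretive rather than computational. First, I need to fix what "picking a task to increase value in" means: I will take it as the direction of the largest marginal gain of the agent's own objective. Second, I need to handle ties. If $g(x_1)=g(x_2)$ or $|x_1|=|x_2|$, both agents are indifferent, and the statement should be read as holding for strict preferences. I will state that assumption explicitly. Beyond that, the only care needed is the sign bookkeeping in $\partial u_\task/\partial b_\task$, which is where the identity $g(x)+g(-x)=1$ does all the work.
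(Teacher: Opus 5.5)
Your proposal is correct and follows the paper's proof via Lemmas \ref{lem:winpick} and \ref{lem:valuepick}. For winrate, the paper likewise uses $\ppick_a = 1-\ppick_b$ to show that both agents have the identical marginal gain $\ppick_a(1-\ppick_a)/\beta$ on each task. For consumer welfare, it observes that $B$'s marginal welfare gain on a task is one minus $A$'s, which is your identity $g(x)+g(-x)=1$, and this yields exactly inverted orderings (your restriction to strict preferences is a sensible clarification, since the paper's comparisons use weak inequalities).
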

Lemma \ref{lem:valueincent} shows that our proposed mechanism, weighted winrate, incentivizes behavior aligned with social welfare maximization whenever the models start out by being sufficiently specialized: that is, when their winrates begin by being sufficiently different. 

\begin{restatable}{lemma}{valueincent}\label{lem:valueincent}
For $\nagents=2$ agents responding to weighted winrate, agents are incentivized to improve in \emph{different tasks} whenever agents begin by being sufficiently specialized (e.g. $\abs{\ppick_a(\valvec_i) - \ppick_a(\valvec_j)}$ is sufficiently large). 
\end{restatable}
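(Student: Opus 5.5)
We reduce to a pair of tasks $i,j$, as in Theorem \ref{thrm:winvaluepick}, and compare marginal incentives. With $\nagents=2$ agents $a,b$, the weighted winrate of $a$ on task $\task$ is $\ppick_a(\valvec_\task)\cd \val_{a,\task}$, where $\ppick_a(\valvec_\task)=\exp(\val_{a,\task}/\beta)/(\exp(\val_{a,\task}/\beta)+\exp(\val_{b,\task}/\beta))$. The BTL identity $\partial \ppick_a/\partial \val_{a,\task} = \ppick_a(1-\ppick_a)/\beta$ gives the marginal weighted winrate of $a$ on task $\task$:
\begin{equation*}
D_a(\task) = \ppick_a(\valvec_\task)\p{1 + \frac{\val_{a,\task}\p{1-\ppick_a(\valvec_\task)}}{\beta}}.
\end{equation*}
Symmetrically, since $\ppick_b = 1-\ppick_a$,
\begin{equation*}
D_b(\task) = \p{1-\ppick_a(\valvec_\task)}\p{1 + \frac{\val_{b,\task}\,\ppick_a(\valvec_\task)}{\beta}}.
\end{equation*}
We use the same interpretation as in Theorem \ref{thrm:winvaluepick}: a producer that moves value instantaneously between $i$ and $j$ (or adds a small amount of value) prefers the task with the larger derivative. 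So the claim becomes a sign condition. Assuming without loss of generality that $\ppick_a(\valvec_i) > \ppick_a(\valvec_j)$, we want $D_a(i) > D_a(j)$ and $D_b(j) > D_b(i)$.

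The key step is to sandwich each derivative by the relevant winrate. Let $\bar\val$ bound all values currently in play. Then for each $\task$,
\begin{equation*}
\ppick_a(\valvec_\task) \leq D_a(\task) \leq \ppick_a(\valvec_\task)\p{1+\bar\val/\beta},
\end{equation*}
and the analogous bounds hold for $D_b$ with $1-\ppick_a$ in place of $\ppick_a$. It therefore suffices to have
\begin{equation*}
\ppick_a(\valvec_i) > \ppick_a(\valvec_j)\p{1+\bar\val/\beta} \quad\text{and}\quad 1-\ppick_a(\valvec_j) > \p{1-\ppick_a(\valvec_i)}\p{1+\bar\val/\beta}.
\end{equation*}

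Next we convert this into the specialization condition in the statement. Suppose $\ppick_a(\valvec_i)-\ppick_a(\valvec_j) \geq 1-\delta$. Because both winrates lie in $[0,1]$, this forces $\ppick_a(\valvec_j)\leq \delta$ and $1-\ppick_a(\valvec_i)\leq\delta$. Both sufficient inequalities then follow from
\begin{equation*}
1-\delta > \delta\p{1+\bar\val/\beta}, \qquad\text{i.e.}\qquad \delta < \frac{1}{2+\bar\val/\beta}.
\end{equation*}
This gives an explicit threshold for ``sufficiently large''. With it, $a$ prefers to improve on $i$, the task where it already dominates, and $b$ prefers to improve on $j$. These are different tasks, matching the consumer-welfare behavior in Theorem \ref{thrm:winvaluepick}. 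Applying this pairwise argument to every pair of tasks then gives the inverted orderings discussed after that theorem.

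The main obstacle is the multiplicative term $\val_{\cd,\task}(\cdots)/\beta$. It is what distinguishes weighted winrate from plain winrate, and it can favor the task where an agent is weak if that agent's value there is large in absolute terms. The crude bound via $\bar\val$ handles this, at the cost of a threshold that depends on $\bar\val/\beta$, which is why the statement only says ``sufficiently specialized''. If a sharper result is wanted, I would instead use $\val_{a,\task}(1-\ppick_a)\leq \beta\log(1/\ppick_a)+\val_{b,\task}$-type identities from the BTL form to bound the term directly. I would also check that the argument is unaffected by whether the reallocation is budget-neutral (moving value from one task to the other) or purely additive, since in both cases only the comparison of $D(i)$ and $D(j)$ matters.
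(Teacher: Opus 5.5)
Your argument is correct. It starts where the paper's does: the BTL identity gives the marginal weighted winrates, $\ppick_b=1-\ppick_a$ expresses both agents' preferences through $\ppick_a$, and everything reduces to comparing derivatives on a pair of tasks. You differ in how ``sufficiently specialized'' is made precise.

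The paper rearranges each preference exactly. $A$ prefers task 1 if and only if $\ppick_a(\valvec_1)-\ppick_a(\valvec_2)\geq A_{12}$, and $B$ prefers task 2 if and only if this gap exceeds $B_{12}$, where $A_{12}=\p{\val_{a,2}\ppick_a(\valvec_2)(1-\ppick_a(\valvec_2))-\val_{a,1}\ppick_a(\valvec_1)(1-\ppick_a(\valvec_1))}/\beta$ and $B_{12}=\p{\val_{b,1}\ppick_a(\valvec_1)(1-\ppick_a(\valvec_1))-\val_{b,2}\ppick_a(\valvec_2)(1-\ppick_a(\valvec_2))}/\beta$. It then concludes that the agents split whenever the gap exceeds $\max(A_{12},B_{12})$. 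That characterization is exact. However, the thresholds themselves depend on the winrates, so the paper leaves implicit why a large gap actually forces the condition.

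Your sandwich $\ppick\leq D\leq\ppick(1+\bar{\val}/\beta)$ supplies that step and gives an explicit threshold $\delta<1/(2+\bar{\val}/\beta)$. The cost is that it is only sufficient and slightly loose. Inserting your bounds into the paper's thresholds gives $A_{12},B_{12}\leq\bar{\val}\delta/\beta$, hence the less restrictive requirement $\delta<1/(1+\bar{\val}/\beta)$. Your condition is also non-vacuous. Values in $[0,\bar{\val}]$ only confine winrates to $[1/(1+e^{x}),\,1-1/(1+e^{x})]$ with $x=\bar{\val}/\beta$, and $1/(1+e^{x})<1/(2+x)$ for all $x>0$, so the required extremes are reachable.

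You should drop your closing claim that applying the argument to every pair of tasks yields inverted orderings, for two reasons:
\begin{itemize}
\item For $\ntasks\geq 3$ the hypothesis cannot hold for all pairs at once. Among any three winrates in $[0,1]$, some two differ by at most $1/2$, whereas you need a gap of at least $1-\delta>1/2$.
\item A split preference on one pair does not determine each agent's overall best task, since both agents could still prefer some third task.
\end{itemize}
Like the paper's proof, your argument establishes a statement about a single pair of tasks, which is all the lemma asserts.
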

\noindent \textbf{Impacts on consumer welfare:}
Because weighted winrate incentives firms to specialize
Given that the incentives produced by weighted winrate align with socially optimal behavior, it is natural to think that weighted winrate should increase social welfare relative to winrate.
We study this question next.
Because agent development is on a trajectory of incremental improvement, not at equilibrium, we will consider \emph{instantaneous consumer welfare}, which is the change in consumer welfare given the instantaneous change in producers' values over tasks. In Section \ref{sec:incententry}, Theorem \ref{thrm:weightedwinratebetter} showed that weighted winrate improves consumer welfare in the model \emph{creation} setting. We might hope to similarly show that weighted winrate always improves consumer welfare in the model \emph{replacement} setting. However, the story turns out to be slightly more nuanced. First, Lemma \ref{lem:observationstar} shows that weighted winrate does \emph{not} always lead to an improvement over winrate.  

\begin{restatable}{lemma}{observationstar}\label{lem:observationstar}
Weighted winrate does \emph{not} always lead to improvement in instantaneous consumer welfare, as compared to the incentives produced by winrate. 
\end{restatable}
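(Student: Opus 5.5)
The plan is to disprove the universal claim with an explicit two-task, two-agent counterexample. In it, the direction chosen under weighted winrate points into a non-monotone region in the sense of Definition~\ref{def:nonmonotoneregion}, while the direction chosen under winrate does not. I work in the replacement setting of this section. There are $\ntasks=2$ tasks, agents $A$ and $B$ use BTL with temperature $\beta$, and each agent shifts an infinitesimal amount of value from one task to the other, keeping its total value fixed, in the direction its objective's gradient prefers. Instantaneous consumer welfare is then the directional derivative of $\sum_\task \agg{\valvec_\task}$ along the combined move.

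The first step is to record the three per-task derivatives for agent $a$ on a task where the opponent has value $v_b$. Write $p_a=\ppick_a(\valvec_\task)$ and $p_b = 1-p_a$. The winrate derivative is $\partial p_a/\partial v_a = p_a p_b/\beta$. The weighted-winrate derivative is $p_a + v_a p_a p_b/\beta$. The welfare derivative, from $u=v_ap_a+v_bp_b$, is $p_a + (v_a-v_b)p_ap_b/\beta$. Hence
\[
\frac{\partial u}{\partial v_a} \;=\; \frac{\partial (p_a v_a)}{\partial v_a} - \frac{v_b\, p_a p_b}{\beta}.
\]
So weighted winrate overstates the welfare gain exactly on tasks where the opponent is strong and the agent is a ``mediocre distractor.'' This is the mechanism behind Lemma~\ref{lem:exnonmonotone}.

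The second step is to choose values. On task 1, $A$ has moderate value $v_{a1}$ against a strong $v_{b1}$, placing $A$ in the non-monotone region there, for instance $\beta=1$ with values like $\{4,5\}$ from Figure~\ref{fig:softmax_example}. On task 2, $A$ is larger, $v_{a2}>v_{b2}$, but $p_a(1-p_a)$ is small. The targets are:
\begin{itemize}[noitemsep]
\item Winrate: the task-1 derivative $p_ap_b/\beta$ is smaller than the task-2 derivative, so $A$ moves value to task 2, where $\partial u/\partial v_a>0$.
\item Weighted winrate: the term $v_a p_ap_b/\beta$ is large enough on task 1 that $A$ moves value to task 1, where $\partial u/\partial v_a<0$.
\end{itemize}
The net welfare change of $A$'s move (gain on one task minus loss on the other) is then positive under winrate and negative under weighted winrate.

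The third step is to neutralize agent $B$. Ideally $B$ chooses the same direction under both objectives, so its contribution to the comparison cancels. Otherwise I would choose $B$'s values so that its contribution under weighted winrate is also no better. A clean option is to make $B$'s two tasks symmetric so that it is indifferent; alternatively, I would treat $B$ as fixed at a local optimum.

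The main obstacle is finding numbers where the weighted-winrate and winrate orderings over tasks flip for $A$ and the welfare signs line up as required. I would first try $\beta=1$, task 1 values $(v_{a1},v_{b1})=(4,5)$, and task 2 values $(v_{a2},v_{b2})=(3,0)$ or similar, then tune by direct numeric evaluation of the three derivatives. If needed, I would lower $v_{b2}$ so that $p_a$ on task 2 is near 1, which shrinks the task-2 winrate derivative.
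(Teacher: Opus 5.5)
There is a genuine gap: you never produce explicit numbers, and the configuration you are steering toward provably cannot yield a counterexample. Write $q_t=p_{at}(1-p_{at})/\beta$ and $w_{at}$ for $A$'s welfare derivative on task $t$.

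With $N=2$ both producers have the same winrate derivative $q_t$, so under winrate they pick the same task. Lemma \ref{lem:same} then gives instantaneous welfare exactly $1$. (Your reallocation accounting gives $0$ instead, but with two tasks the two accountings differ by a factor of $2$ in every comparison, so that difference is harmless.) A strict loss therefore requires weighted winrate to split them, say $A\to 1$ and $B\to 2$.

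Your design has $p_{a1}<1/2<p_{a2}$, so $A$ has lower winrate on its pick. That is exactly the case of Lemma \ref{lem:changelower}. Combine $A$'s preference $p_{a1}-p_{a2}>v_{a2}q_2-v_{a1}q_1$ with $B$'s preference $v_{b2}q_2-v_{b1}q_1>p_{b1}-p_{b2}=p_{a2}-p_{a1}$. This gives $w_{a1}+w_{b2}-1>v_{b2}q_2-v_{b1}q_1>p_{a2}-p_{a1}>0$, so weighted winrate does strictly better than winrate here.

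Nor can $B$ be neutralized. It cannot be made indifferent under winrate unless $A$ is indifferent too. By your own identity $\partial(p_av_a)/\partial v_a=w_{at}+v_{bt}q_t$, whenever $A$'s flip lowers its own contribution ($w_{a1}<w_{a2}$) we get $v_{b1}q_1-v_{b2}q_2>w_{a2}-w_{a1}>0$. Together with $p_{b1}>p_{b2}$, this makes $B$ prefer task 1 as well, so both land on task 1 and welfare is exactly $1$ again. Freezing $B$ is outside the model, since both producers respond.

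Your trial numbers also fail at the first step. At $\beta=1$, $(4,5)$ and $(3,0)$ give winrate derivatives $\approx 0.197$ and $\approx 0.045$, so winrate already picks task 1. Lowering $v_{b2}$ shrinks the task-2 derivative further, which is the wrong direction.

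The missing idea is the one in Lemmas \ref{lem:changelower} and \ref{lem:bothdom}, packaged as Theorem \ref{thrm:bothneg}. A strict loss requires both producers to have higher winrate on their weighted-winrate picks. It also requires one producer to trail on both tasks and sit in the non-monotone region on both.

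The paper's proof is exactly such an example (Table \ref{tab:exinst}). There $A=(74.3,16.45)$ beats $B=(70.4,14.34)$ on both tasks. Winrate sends both to the more contested task 2, giving welfare $1$. Under weighted winrate, $A$ moves to task 1, where the large $v_{b1}q_1$ term inflates its weighted-winrate gain relative to its welfare gain (welfare derivative $\approx 1.057$ versus $\approx 1.095$ on task 2). The dominated $B$ stays on task 2 as a distractor with welfare derivative $\approx -0.095$. The total is about $0.962<1$.
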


Specifically, this proof is given by an example (formally presented in Table \ref{tab:exinst} in Appendix \ref{app:incentretrain}.). However, this example has an unusual property: one model ($B$) is worse than the other model ($A$) on \emph{both} tasks, and moreover, the derivative of consumer welfare with respect to improvement (instantaneous consumer welfare) in model $B$ is \emph{negative} for both tasks: it is in the \emph{non-monotone} region as explored in Section \ref{sec:helpuser}). Theorem \ref{thrm:bothneg} shows that this is not an accident: every example where weighted winrate incentives producers to pick a worse pair of tasks than winrate, it \emph{must} be a setting where one producer is in the non-monotonic regime on \emph{both} tasks. Note that Theorem \ref{thrm:bothneg} is not an if-and-only-if condition: it is a conservative guarantee.  

\begin{restatable}{theorem}{bothneg}\label{thrm:bothneg}
Consider a model replacement setting with $N=2$ agents: the only scenario where weighted winrate causes \emph{worse} instantaneous consumer welfare than winrate is when one player is in the non-monotonic regime on \emph{both} picked tasks. 
\end{restatable}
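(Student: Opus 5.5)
The plan is to make the setting of the model replacement section explicit and reduce the comparison to a sign argument on derivatives. We have $\nagents=2$ agents $a,b$ and a pair of tasks $i,j$. Each producer shifts a small amount of value $\epsilon$ from one task of the pair to the other, choosing the direction whose derivative of its objective is larger. Write $W_i^{x}=\partial u(\valvec_i)/\partial \val_{x,i}$ for the instantaneous consumer welfare derivative of agent $x$ on task $i$. Under BTL with $p=\ppick_a(\valvec_i)$ and $\Delta=\val_{a,i}-\val_{b,i}$, we have
\[
W_i^{a} = p + p(1-p)\Delta/\beta, \qquad W_i^{b} = (1-p) - p(1-p)\Delta/\beta .
\]
The winrate derivatives are $\pm p(1-p)/\beta$. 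The weighted winrate derivatives are $p + \val_{a,i}\, p(1-p)/\beta$ for $a$, and the analogue for $b$. The instantaneous change in welfare from the joint move is the sum $\epsilon(W^{a}_{\text{up}}-W^{a}_{\text{down}}) + \epsilon(W^{b}_{\text{up}}-W^{b}_{\text{down}})$, where each agent contributes through the task it picks.

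\textbf{Key steps.} First, invoke Theorem \ref{thrm:winvaluepick}: under winrate both agents pick the same task, say both raise task $i$. So the winrate outcome changes welfare by $(W_i^a-W_j^a)+(W_i^b-W_j^b)$. Second, enumerate the possible weighted-winrate choices. If they coincide with winrate's, there is no difference. Otherwise at least one agent, say $b$, switches to raising $j$. The difference in welfare between the weighted winrate outcome and the winrate outcome is then a sum of terms $2(W_j^x - W_i^x)$, one for each agent $x$ that switched. So weighted winrate is worse only if some switching agent has $W_j^x < W_i^x$.

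Third, I will show that such a switch can only happen when that agent's derivatives on both tasks are negative. Weighted winrate makes agent $x$ prefer $j$ to $i$, whereas winrate preferred $i$. Both preferences are comparisons of derivatives, and I will write
\[
W^x_k = (\text{weighted-winrate derivative of } x \text{ on } k) - (\text{the other agent's value}) \cdot (\text{winrate derivative of } x \text{ on } k).
\]
The welfare derivative then decomposes as the weighted winrate term plus a correction that is a cross-term in the other agent's value. Combining the inequality (weighted winrate prefers $j$) with (welfare prefers $i$) forces the correction term to favor $i$ strongly. I then want to show that this is only consistent with $W_i^x<0$ and $W_j^x<0$. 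Using the closed forms above, I expect this reduces to $\Delta$ being sufficiently negative on both tasks, $\Delta/\beta < -1/p$, which is exactly the condition for the non-monotone region (Definition \ref{def:nonmonotoneregion}) on each task.

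\textbf{Main obstacle.} The hard part is the third step. There I must rule out the case where $x$ is in the monotone region on one task but the switch still hurts welfare. I would first try a contrapositive: suppose $W_k^x\ge 0$ for some $k\in\{i,j\}$. Then show that the weighted-winrate preference and the winrate preference cannot disagree in the harmful direction. The argument would bound $p(1-p)\Delta/\beta \ge -p$ on that task and feed this into the two inequalities. If the single-agent argument is too weak, I would fall back on case analysis on which agents switch: only $a$, only $b$, or both. In each case I would use that the two agents' welfare derivatives on a task sum to $1+p(1-p)\Delta/\beta - p(1-p)\Delta/\beta$ adjusted, which gives the cancellations needed to localize the negativity to one player on both tasks.
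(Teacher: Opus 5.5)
Your steps one and two are fine. Your budget-neutral shift model is not the paper's definition, which simply sums the welfare derivatives on the picked tasks. But because $W^a_k+W^b_k=1$ on every task, your welfare gap is exactly twice the paper's, so the signs agree.

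\textbf{The gap is step three: it sets out to prove something false.} You want to show that the agent who switches relative to winrate, and thereby lowers welfare, is in the non-monotone region on both tasks. In fact the switcher is \emph{never} that agent. The paper's own Table~\ref{tab:exinst} is a counterexample:
\begin{itemize}
\item winrate derivatives are larger on task 2 for both producers, so winrate puts both on task 2;
\item weighted winrate moves $A$ to task 1 while $B$ stays on task 2;
\item welfare falls by $1.095-1.057$;
\item yet $A$'s welfare derivatives are $1.057$ and $1.095$, both positive.
\end{itemize}
The producer with both derivatives negative is $B$, the one that did not switch.

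This is general. Put $A$ on task 1 and $B$ on task 2 under weighted winrate, and write $p_a(k)$ for $A$'s winrate on task $k$. Let $G_A,G_B\ge 0$ be the margins by which weighted winrate prefers each producer's chosen task. A direct computation gives
\[
W^a_1+W^b_2-1 \;=\; G_A+G_B-\bigl(p_a(1)-p_a(2)\bigr),
\]
while winrate's outcome has welfare exactly $1$. So a loss forces $p_a(1)>p_a(2)$, i.e., each producer has strictly higher winrate on the task it picked. If both of those winrates are at least $1/2$, the gap terms $(v_{a,1}-v_{b,1})$ and $(v_{b,2}-v_{a,2})$ are nonnegative, and then $W^a_1+W^b_2\ge 1$. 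Hence one producer, say $A$, has $p_a(2)<p_a(1)<1/2$.

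Then $p(1-p)$ is larger on task 1, so winrate also puts both producers on task 1. That means $A$ did not switch and $B$ did. Since $W^b_k=1-W^a_k$, the switcher's derivatives both exceed $1$ exactly when the non-switcher's are both negative. Your contrapositive ("some $W^x_k\ge 0$ rules out a harmful switch") therefore fails in every bad instance. Your own decomposition hints at this: combining the two preference inequalities only constrains the \emph{other} agent's values.

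\textbf{What is missing is the argument about the non-switching, dominated producer.} With $\Delta_k=v_{a,k}-v_{b,k}$ we have $\Delta_2<\Delta_1<0$. The loss condition is $f(\Delta_1)<f(\Delta_2)$, where $f(x)=p(x)+(x/\beta)\,p(x)(1-p(x))$ is $A$'s welfare derivative. You need the shape of $f$ on $x<0$:
\begin{itemize}
\item $f\to 0^-$ as $x\to-\infty$;
\item $f'$ vanishes exactly once on $x<0$ (Lemma~\ref{lem:helperproof});
\item so $f$ first decreases, staying negative, and then increases.
\end{itemize}
Since $\Delta_2<\Delta_1$ but $f(\Delta_1)<f(\Delta_2)$, the point $\Delta_2$ must lie on the decreasing branch. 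Hence $f(\Delta_1)<f(\Delta_2)<0$.

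Your fallback mentions the sum-to-one identity but contains neither the case elimination above nor this unimodality step. The case elimination is the only place the weighted-winrate preferences enter, so the fallback does not close the gap.

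Minor slip: for agent $a$, $W^a<0$ iff $\Delta/\beta<-1/(1-p)$ with $p=p_a$, not $-1/p$.
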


We view this condition as being very useful for understanding how to guide producer incentives. Specifically, given that the only failure of weighted winrate occurs when one player is in the non-monotone regime, the best outcome for consumer welfare would be if that player could be incentivized to \emph{abstain} entirely: that is, to not improve their model at all. However, weighted winrate cannot incentivize producers to abstain, and we view studying how to incentivize model abstention as a fascinating and fruitful avenue for future work.   

\begin{figure}[htbp]
    \centering

    \begin{subfigure}[t]{0.45\textwidth}
        \centering
        \includegraphics[width=\linewidth]{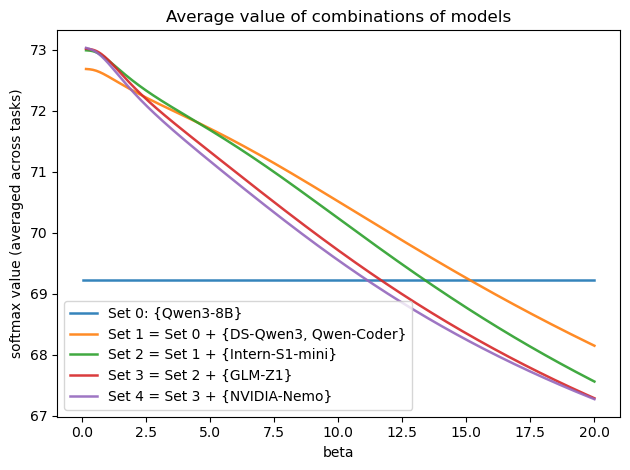}
        \caption{Value of combinations of models.}
        \label{fig:combrouter}
    \end{subfigure}
    \hfill
    \begin{subfigure}[t]{0.45\textwidth}
        \centering
        \includegraphics[width=\linewidth]{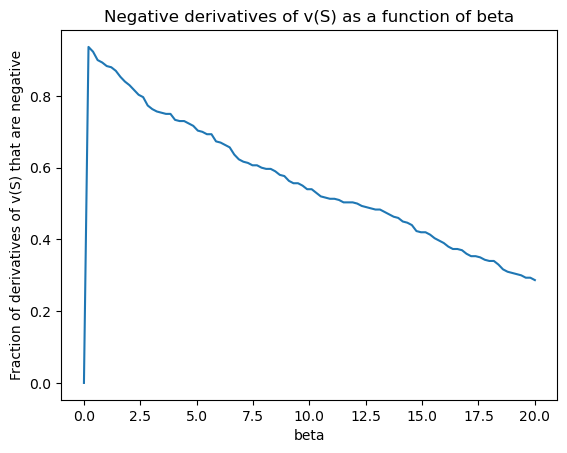}
        \caption{Prevalence of non-monotonicity. 
        }
        \label{fig:derivsrouter}
    \end{subfigure}

    \vspace{0.5em}

    \begin{subfigure}[t]{0.45\textwidth}
        \centering
        \includegraphics[width=\linewidth]{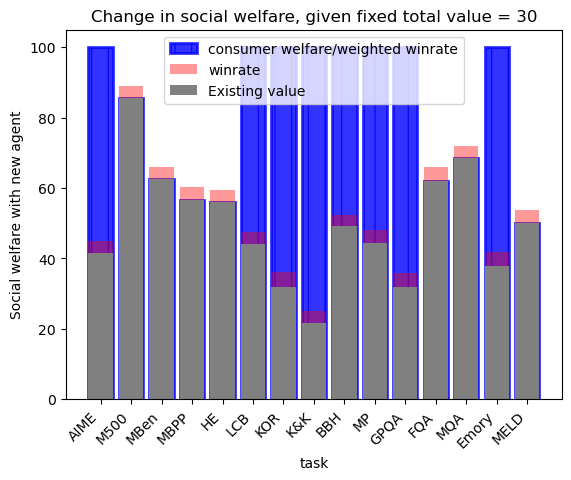}
        \caption{Change in consumer welfare given different incentives for model creation. }
        \label{fig:socialweflarerouter}
    \end{subfigure}
    \hfill
    \begin{subfigure}[t]{0.45\textwidth}
        \centering
        \includegraphics[width=\linewidth]{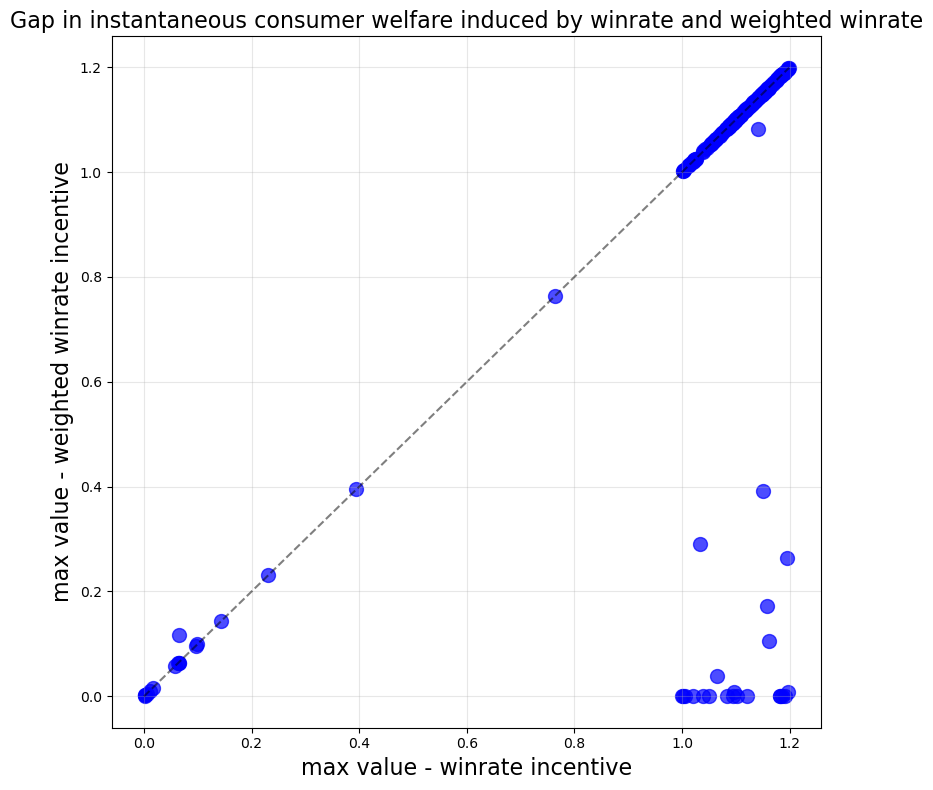}
        \caption{Gap in instantaneous consumer welfare given different incentives in model improvement.}
        \label{fig:instant-gap-router}
    \end{subfigure}

    \caption{Experiments with LLMRouterBench data \cite{li2026llmrouterbench}}
    \label{fig:routerbench_allfigs}
    \vspace{0cm}
\end{figure}

\section{Experiments}\label{sec:experiments} 
Finally, we will explore what our theoretical results imply about real markets of models.
We will use the benchmark LLMRouterbench \citep{li2026llmrouterbench} because it includes real-valued estimates of model performance on discrete tasks with substantial heterogeneity in performance\ifarxiv \else \footnote{Code and data to reproduce them will be provided in a final version.} \fi.  In Appendix \ref{app:experiments} we include experiments on two other datasets (MT-Bench-101 \citep{bai2024mt} and WildBench \citep{lin2024wildbench}).  
\ifarxiv Code and data to reproduce all figures is available at \url{https://github.com/kpdonahue/aggregation}. 
\else \fi

\noindent \textbf{Maximizing consumer welfare (extending Section \ref{sec:helpuser})}: Theorem \ref{thrm:higheraccadd} showed that there are scenarios where adding \enquote{weaker} models in isolation could be beneficial, depending on the noise in BTL aggregation ($\beta$). Figure \ref{fig:combrouter} shows the optimal subset of models as a function of $\beta$: there are settings where a team strictly outperforms the single strongest model alone, and the size of the optimal team decreases as noise increases. Next, we explore the non-monotonicity we studied in Section \ref{sec:helpuser}, where we showed that there can be cases where increasing the value of a model could \emph{decrease} consumer welfare. 
However, our theoretical results left open the question of how \enquote{common} this phenomenon was in real datasets. In Figure \ref{fig:derivsrouter}, we look at every model/task combination and see whether the model is within the non-monotone region: whether improvement by this model would \emph{decrease} consumer welfare. Interestingly, a relatively large proportion of models are in the non-monotone region, though this proportion is decreasing in $\beta$ (as we would expect from Lemma \ref{lem:sufficientmonotone}). 

\noindent \textbf{Incentivizing model creation (extending Section \ref{sec:incententry})} Next, we turn to incentivizing producers to create models that improve consumer welfare. Note that in LLMRouterBench \citep{li2026llmrouterbench} the value of each individual task is bounded by value 100, and thus we apply adapted versions of our results (Lemma \ref{lem:scalinglaw}). We consider an existing market comprised of the 5 weakest models and an incoming model producer with average of 53 over all tasks: we chose these parameters to ensure new model is sufficiently powerful, as in Theorem \ref{thrm:optwinb}. Then, we consider the actions that each objective (winrate, weighted winrate, consumer welfare) would incentivize and their impact on consumer welfare. For this setting, the winrate objective would incentivize the producer to equalize winrate across tasks, while weighted winrate and consumer welfare would both incentivize the producer to maximally specialize on 8 tasks (either abstaining or giving value 0 to the others). Figure \ref{fig:socialweflarerouter} visually shows the change in consumer welfare: the prior value of the existing marketplace is in {\color{gray}gray}, change in value induced by winrate is in {\color{red}red}, and change in value induced by consumer welfare or weighted winrate incentives are in {\color{blue}blue}, with overlapping regions in {\color{purple}purple}. The high-level takeaway from this figure is that weighted winrate incentives the creation of models that lead to much larger benefit to consumers: the blue region is much larger than red.

\noindent \textbf{Incentivizing model replacement (extending Section \ref{sec:incentretrain})}
In Section \ref{sec:incentretrain} we studied model replacement, where the goal was to incentivize model creators to pick tasks to improve in that increase consumer welfare. 
Theorem \ref{thrm:bothneg} showed that weighted winrate improves incentives for consumer welfare, so long as it is \emph{not} the case that one of the two producers is in the non-monotone regime on both tasks. To explore the incentives for model replacement with this dataset, we take every pair of models and calculate tasks they would be incentivized to improve in, given different objectives (winrate, weighted winrate, consumer welfare). In Figure \ref{fig:instant-gap-router}, we display how far from optimal (social welfare) each of weighted winrate and winrate incentives are. The $x$-axis gives the gap between winrate and optimal, while the $y$ axis gives the gap between weighted winrate and optimal: points are along the diagonal if they give equal incentives, and below (resp. above) the diagonal if weighted winrate is better (resp. worse). Note that weighted winrate gives better incentives \emph{almost everywhere}: almost every point is along or below the diagonal. This suggests that empirical performance of weighted winrate is stronger than our conservative theoretical guarantees. 
\section{Future directions}\label{sec:conclude}
\ifarxiv 
\paragraph{Our contributions:} In this work, we have studied a setting where
users derive utility from \emph{aggregating} over multiple models, capturing the
idea that consumers may noisily select a model based on their expectations of
performance on a given task.
We showed that under aggregation, the relationship between agent quality and
consumer welfare is surprisingly non-trivial: for a broad range of choice functions, there are cases where
increasing the quality of any individual agent
could \emph{decrease} consumer welfare. Despite this complex interaction, we
studied how aggregation can lead to a wedge between producer incentives and
consumer welfare: producers may focus too much on competing for attention in the
same tasks, whereas consumers benefit more from specialization. We proposed a
novel mechanism, weighted winrate, and showed that it better aligns producer
objectives with consumer welfare.

\noindent \textbf{Future directions:} There are many natural directions for extensions
in this space. To start, one remaining gap in our solution was incentivizing
producers to \emph{abstain} (fail to give a response). With weighted
winrate, producers are indifferent between returning a response of value 0
and failing to respond at all, but when users are noisy,
returning a response with low or 0 value can \emph{hurt} their utility.
Mechanisms that nudge producers to abstain could help, but these
interventions must be done delicately: if multiple agents abstain, it
could lead to correlated failures and even worse utility for users. 

Another direction could handle complexities of modeling both user and producer
behavior in this case. For example, it may be the case that certain tasks are
inherently harder than others, or improvement is non-linear in effort (i.e.,
more complex scaling laws). It may also be the case that value is heterogeneous
or stochastic even within tasks, which could make implementation of weighted
winrate more challenging. Additional questions around implementation center on
acquiring accurate measurements of values: winrate was selected as a metric
because users are often inaccurate at providing real-valued measures of
outcomes. If user-provided measures of value are sufficiently expensive or
noisy, it may be that other mechanisms could improve upon weighted winrate.
Finally, our results generally incentivize producers to specialize. This
improves consumer welfare when they can consider a large set of candidate agents,
but in the case that users are bandwidth-limited, such specialization may no
longer be desirable. 
\else 
There are many natural directions for extensions of this work. To start, one direction is incentivizing
producers to \emph{abstain} (fail to give a response). With weighted
winrate, producers are indifferent between returning a response of value 0
and failing to respond at all, but when users are noisy,
returning a response with low or 0 value can \emph{hurt} their utility.
Mechanisms that nudge producers to abstain could help, but these
interventions must be done delicately: if multiple agents abstain, it
could lead to correlated failures and even worse utility for users. Another direction could handle complexities of modeling both user and producer
behavior in this case. For example, it may be the case that certain tasks are
inherently harder than others, or improvement is non-linear in effort (i.e.,
more complex scaling laws). 
Additional questions around implementation center on
acquiring accurate measurements of values: winrate was selected as a metric
because users are often inaccurate at providing real-valued measures of
outcomes.
If user-provided measures of value are sufficiently expensive or
noisy, it may be that other mechanisms could improve upon weighted winrate. 
Finally, our results generally incentivize producers to specialize. This
improves consumer welfare when they can consider a large set of candidate agents,
but in the case that users are bandwidth-limited (such as by prices of models), such specialization may no
longer be desirable. 
\fi

\ifarxiv
\subsubsection*{Acknowledgments}
This work was supported in part by a METEOR Postdoctoral Fellowship from MIT. We are extremely grateful to Nivasini Ananthakrishnan, Meena Jagadeesan, Brendan Lucier, Ariel Procaccia, Han Shao, Kiran Tomlinson, Serena Wang, Ming Yin, members of Bailey Flanigan \& Manish Raghavan's lab meeting, and members of Sanmi Koyejo's AI Measurement Science reading group for invaluable discussions, and to Rachel Li especially for assistance with experiments in Appendix \ref{app:errorpatternsrouter}. 

\else 

\fi 

\bibliographystyle{ACM-Reference-Format}
\bibliography{sample-bibliography}

@String{Computer = "{IEEE} Computer" }

@String{Springer = "Springer-Verlag" }

@article{cerreia2019deliberately,
  title={Deliberately stochastic},
  author={Cerreia-Vioglio, Simone and Dillenberger, David and Ortoleva, Pietro and Riella, Gil},
  journal={American Economic Review},
  volume={109},
  number={7},
  pages={2425--2445},
  year={2019},
  publisher={American Economic Association 2014 Broadway, Suite 305, Nashville, TN 37203}
}

@inproceedings{zhang2025beyond,
  title={Beyond gpt-5: Making llms cheaper and better via performance-efficiency optimized routing},
  author={Zhang, Yiqun and Li, Hao and Chen, Jianhao and Zhang, Hangfan and Ye, Peng and Bai, Lei and Hu, Shuyue},
  booktitle={Proceedings of the 2025 7th International Conference on Distributed Artificial Intelligence},
  pages={122--129},
  year={2025}
}

@article{caliari2025luce,
  title={The Luce Model, Regularity, and Choice Overload},
  author={Caliari, Daniele and Petri, Henrik},
  journal={arXiv preprint arXiv:2502.21063},
  year={2025}
}

@misc{lin2024wildbench,
    title={WildBench: Benchmarking LLMs with Challenging Tasks from Real Users in the Wild},
    author={Bill Yuchen Lin and Yuntian Deng and Khyathi Chandu and Faeze Brahman and Abhilasha Ravichander and Valentina Pyatkin and Nouha Dziri and Ronan Le Bras and Yejin Choi},
    year={2024},
    eprint={2406.04770},
    archivePrefix={arXiv},
    primaryClass={cs.CL},
    url={https://arxiv.org/abs/2406.04770}
}

@article{li2026llmrouterbench,
  title={LLMRouterBench: A Massive Benchmark and Unified Framework for LLM Routing},
  author={Li, Hao and Zhang, Yiqun and Guo, Zhaoyan and Wang, Chenxu and Tang, Shengji and Zhang, Qiaosheng and Chen, Yang and Qi, Biqing and Ye, Peng and Bai, Lei and others},
  journal={arXiv preprint arXiv:2601.07206},
  year={2026}
}

@article{hong2004groups,
  title={Groups of diverse problem solvers can outperform groups of high-ability problem solvers},
  author={Hong, Lu and Page, Scott E},
  journal={Proceedings of the National Academy of Sciences},
  volume={101},
  number={46},
  pages={16385--16389},
  year={2004},
  publisher={National Academy of Sciences}
}

@article{plackett1975analysis,
  title={The analysis of permutations},
  author={Plackett, Robin L},
  journal={Journal of the Royal Statistical Society Series C: Applied Statistics},
  volume={24},
  number={2},
  pages={193--202},
  year={1975},
  publisher={Oxford University Press}
}

@article{bradley1952rank,
  title={Rank analysis of incomplete block designs: I. the method of paired comparisons},
  author={Bradley, Ralph Allan and Terry, Milton E},
  journal={Biometrika},
  volume={39},
  number={3/4},
  pages={324--345},
  year={1952},
  publisher={JSTOR}
}

@Inbook{Tullock2001,
author="Tullock, Gordon",
editor="Lockard, Alan A.
and Tullock, Gordon",
title="Efficient Rent Seeking",
bookTitle="Efficient Rent-Seeking: Chronicle of an Intellectual Quagmire",
year="2001",
publisher="Springer US",
address="Boston, MA",
pages="3--16",
isbn="978-1-4757-5055-3",
doi="10.1007/978-1-4757-5055-3_2",
url="https://doi.org/10.1007/978-1-4757-5055-3_2"
}

@article{ramachandran2017searching,
  title={Searching for activation functions},
  author={Ramachandran, Prajit and Zoph, Barret and Le, Quoc V},
  journal={arXiv preprint arXiv:1710.05941},
  year={2017}
}

@inproceedings{lmarena,
  author       = {Wei{-}Lin Chiang and
                  Lianmin Zheng and
                  Ying Sheng and
                  Anastasios Nikolas Angelopoulos and
                  Tianle Li and
                  Dacheng Li and
                  Banghua Zhu and
                  Hao Zhang and
                  Michael I. Jordan and
                  Joseph E. Gonzalez and
                  Ion Stoica},
  title        = {Chatbot Arena: An Open Platform for Evaluating LLMs by Human Preference},
  booktitle    = {Forty-first International Conference on Machine Learning, {ICML} 2024,
                  Vienna, Austria, July 21-27, 2024},
  year         = {2024},
}

@article{lepelley2018monotonicity,
  title={Monotonicity paradoxes in three-candidate elections using scoring elimination rules},
  author={Lepelley, Dominique and Moyouwou, Issofa and Smaoui, Hatem},
  journal={Social Choice and Welfare},
  volume={50},
  number={1},
  pages={1--33},
  year={2018},
  publisher={Springer}
}

@article{borel1921theorie,
  title={La th{\'e}orie du jeu et les {\'e}quations int{\'e}gralesa noyau sym{\'e}trique},
  author={Borel, Emile},
  journal={Comptes rendus de l’Acad{\'e}mie des Sciences},
  volume={173},
  number={1304-1308},
  pages={58},
  year={1921}
}

@book{luce1959individual,
  title={Individual choice behavior},
  author={Luce, R Duncan},
  volume={4},
  year={1959},
  publisher={Wiley New York}
}

@article{haghtalab2020maximizing,
  title={Maximizing welfare with incentive-aware evaluation mechanisms},
  author={Haghtalab, Nika and Immorlica, Nicole and Lucier, Brendan and Wang, Jack Z},
  journal={arXiv preprint arXiv:2011.01956},
  year={2020}
}

@inproceedings{liu2022strategic,
  title={Strategic ranking},
  author={Liu, Lydia T and Garg, Nikhil and Borgs, Christian},
  booktitle={International Conference on Artificial Intelligence and Statistics},
  pages={2489--2518},
  year={2022},
  organization={PMLR}
}

@article{hays2026strategic,
  title={Strategic Candidacy in Generative AI Arenas},
  author={Hays, Chris and Li, Rachel and Flanigan, Bailey and Raghavan, Manish},
  journal={arXiv preprint arXiv:2603.26891},
  year={2026}
}

@article{chen2026leaderboard,
  title={Leaderboard Incentives: Model Rankings under Strategic Post-Training},
  author={Chen, Yatong and Zhang, Guanhua and Hardt, Moritz},
  journal={arXiv preprint arXiv:2603.08371},
  year={2026}
}

@inproceedings{hartline2023optimal,
  title={Optimal scoring rules for multi-dimensional effort},
  author={Hartline, Jason D and Shan, Liren and Li, Yingkai and Wu, Yifan},
  booktitle={The Thirty Sixth Annual Conference on Learning Theory},
  pages={2624--2650},
  year={2023},
  organization={PMLR}
}

@inproceedings{bai2024mt,
  title={MT-Bench-101: A Fine-Grained Benchmark for Evaluating Large Language Models in Multi-Turn Dialogues},
  author={Bai, Ge and Liu, Jie and Bu, Xingyuan and He, Yancheng and Liu, Jiaheng and Zhou, Zhanhui and Lin, Zhuoran and Su, Wenbo and Ge, Tiezheng and Zheng, Bo and others},
  booktitle={Proceedings of the 62nd Annual Meeting of the Association for Computational Linguistics (Volume 1: Long Papers)},
  pages={7421--7454},
  year={2024}
}

@article{small1981applied,
  title={Applied welfare economics with discrete choice models},
  author={Small, Kenneth A and Rosen, Harvey S},
  journal={Econometrica: Journal of the Econometric Society},
  pages={105--130},
  year={1981},
  publisher={JSTOR}
}

@article{rastegari2011revenue,
  title={Revenue monotonicity in deterministic, dominant-strategy combinatorial auctions},
  author={Rastegari, Baharak and Condon, Anne and Leyton-Brown, Kevin},
  journal={Artificial Intelligence},
  volume={175},
  number={2},
  pages={441--456},
  year={2011},
  publisher={Elsevier}
}

@article{xu2025heterogeneous,
  title={Heterogeneous Data Game: Characterizing the Model Competition Across Multiple Data Sources},
  author={Xu, Renzhe and Wang, Kang and Li, Bo},
  journal={arXiv preprint arXiv:2505.07688},
  year={2025}
}

@article{einav2025market,
  title={A market for accuracy: Classification under competition},
  author={Einav, Ohad and Rosenfeld, Nir},
  journal={arXiv preprint arXiv:2502.18052},
  year={2025}
}

@article{jagadeesan2023improved,
  title={Improved bayes risk can yield reduced social welfare under competition},
  author={Jagadeesan, Meena and Jordan, Michael and Steinhardt, Jacob and Haghtalab, Nika},
  journal={Advances in Neural Information Processing Systems},
  volume={36},
  pages={66940--66952},
  year={2023}
}

@article{bedi2025optimization,
  title={The Optimization Paradox in Clinical AI Multi-Agent Systems},
  author={Bedi, Suhana and Mlauzi, Iddah and Shin, Daniel and Koyejo, Sanmi and Shah, Nigam H},
  journal={arXiv preprint arXiv:2506.06574},
  year={2025}
}

@article{jiang2014diverse,
  title={Diverse randomized agents vote to win},
  author={Jiang, Albert X and Marcolino, Leandro S and Procaccia, Ariel D and Sandholm, Tuomas and Shah, Nisarg and Tambe, Milind},
  journal={Advances in Neural Information Processing Systems},
  volume={27},
  year={2014}
}

@article{ai2025beyond,
  title={Beyond majority voting: LLM aggregation by leveraging higher-order information},
  author={Ai, Rui and Pan, Yuqi and Simchi-Levi, David and Tambe, Milind and Xu, Haifeng},
  journal={arXiv preprint arXiv:2510.01499},
  year={2025}
}

@article{lu2024does,
  title={Does more advice help? the effects of second opinions in AI-assisted decision making},
  author={Lu, Zhuoran and Wang, Dakuo and Yin, Ming},
  journal={Proceedings of the ACM on Human-Computer Interaction},
  volume={8},
  number={CSCW1},
  pages={1--31},
  year={2024},
  publisher={ACM New York, NY, USA}
}

@inproceedings{kearns2026networked,
  title={Networked Information Aggregation via Machine Learning},
  author={Kearns, Michael and Roth, Aaron and Ryu, Emily},
  booktitle={Proceedings of the 2026 Annual ACM-SIAM Symposium on Discrete Algorithms (SODA)},
  pages={4799--4845},
  year={2026},
  organization={SIAM}
}

@inproceedings{song2025human,
  title={Human-AI Collaboration with Misaligned Preferences},
  author={Song, Jiaxin and Shahkar, Parnian and Donahue, Kate and Chaudhury, Bhaskar Ray},
  booktitle={Proceedings of the 5th ACM Conference on Equity and Access in Algorithms, Mechanisms, and Optimization},
  pages={294--294},
  year={2025}
}

@misc{meenanivasini,
      title={Power and Limitations of Aggregation in Compound AI Systems}, 
      author={Nivasini Ananthakrishnan and Meena Jagadeesan},
      year={2026},
      eprint={2602.21556},
      archivePrefix={arXiv},
      primaryClass={cs.AI},
      url={https://arxiv.org/abs/2602.21556}, 
}

@article{friedman1958game,
  title={Game-theory models in the allocation of advertising expenditures},
  author={Friedman, Lawrence},
  journal={Operations research},
  volume={6},
  number={5},
  pages={699--709},
  year={1958},
  publisher={INFORMS}
}

@inproceedings{ethayarajh2020utility,
  title={Utility is in the eye of the user: A critique of NLP leaderboards},
  author={Ethayarajh, Kawin and Jurafsky, Dan},
  booktitle={Proceedings of the 2020 Conference on Empirical Methods in Natural Language Processing (EMNLP)},
  pages={4846--4853},
  year={2020}
}

@article{wang2024mixture,
  title={Mixture-of-agents enhances large language model capabilities},
  author={Wang, Junlin and Wang, Jue and Athiwaratkun, Ben and Zhang, Ce and Zou, James},
  journal={arXiv preprint arXiv:2406.04692},
  year={2024}
}

@article{li2025rethinking,
  title={Rethinking mixture-of-agents: Is mixing different large language models beneficial?},
  author={Li, Wenzhe and Lin, Yong and Xia, Mengzhou and Jin, Chi},
  journal={arXiv preprint arXiv:2502.00674},
  year={2025}
}

@inproceedings{rofin2023vote,
  title={Vote’n’rank: Revision of benchmarking with social choice theory},
  author={Rofin, Mark and Mikhailov, Vladislav and Florinsky, Mikhail and Kravchenko, Andrey and Shavrina, Tatiana and Tutubalina, Elena and Karabekyan, Daniel and Artemova, Ekaterina},
  booktitle={Proceedings of the 17th Conference of the European Chapter of the Association for Computational Linguistics},
  pages={670--686},
  year={2023}
}

@article{colombo2022best,
  title={What are the best systems? new perspectives on nlp benchmarking},
  author={Colombo, Pierre and Noiry, Nathan and Irurozki, Ekhine and Cl{\'e}men{\c{c}}on, St{\'e}phan},
  journal={Advances in neural information processing systems},
  volume={35},
  pages={26915--26932},
  year={2022}
}

@inproceedings{procacciametritocracy,
  title={Metritocracy: Representative Metrics for Lite Benchmarks},
  author={Procaccia, Ariel D and Schiffer, Benjamin and Wang, Serena Lutong and Zhang, Shirley},
  booktitle={The Thirty-ninth Annual Conference on Neural Information Processing Systems}
}

@article{gordienko2026beyond,
  title={Beyond Arrow: From Impossibility to Possibilities in Multi-Criteria Benchmarking},
  author={Gordienko, Polina and Jansen, Christoph and Rodemann, Julian and Schollmeyer, Georg},
  journal={arXiv preprint arXiv:2602.07593},
  year={2026}
}

@inproceedings{weimarket,
  title={Market Games for Generative Models: Equilibria, Welfare, and Strategic Entry},
  author={Wei, Xiukun and Shi, Min and Zhang, Xueru},
  booktitle={The Fourteenth International Conference on Learning Representations}
}

@inproceedings{zhang2024inherent,
  title={Inherent trade-offs between diversity and stability in multi-task benchmarks},
  author={Zhang, Guanhua and Hardt, Moritz},
  booktitle={Proceedings of the 41st International Conference on Machine Learning},
  pages={58984--59002},
  year={2024}
}

@misc{arena_blog_max_2026,
  author = {{Arena Team}},
  title = {Introducing {M}ax},
  howpublished = {\url{https://arena.ai/blog/introducing-max/}},
  year = {2026},
  month = {February},
  note = {Accessed: 2026-02-09}
}

@article{freund1997decision,
  title={A decision-theoretic generalization of on-line learning and an application to boosting},
  author={Freund, Yoav and Schapire, Robert E},
  journal={Journal of computer and system sciences},
  volume={55},
  number={1},
  pages={119--139},
  year={1997},
  publisher={Elsevier}
}

@article{kleinberg2021algorithmic,
  title={Algorithmic monoculture and social welfare},
  author={Kleinberg, Jon and Raghavan, Manish},
  journal={Proceedings of the National Academy of Sciences},
  volume={118},
  number={22},
  pages={e2018340118},
  year={2021},
  publisher={National Academy of Sciences}
}

@article{bommasani2022picking,
  title={Picking on the same person: Does algorithmic monoculture lead to outcome homogenization?},
  author={Bommasani, Rishi and Creel, Kathleen A and Kumar, Ananya and Jurafsky, Dan and Liang, Percy S},
  journal={Advances in Neural Information Processing Systems},
  volume={35},
  pages={3663--3678},
  year={2022}
}

@article{kim2025correlated,
  title={Correlated Errors in Large Language Models},
  author={Kim, Elliot and Garg, Avi and Peng, Kenny and Garg, Nikhil},
  journal={arXiv preprint arXiv:2506.07962},
  year={2025}
}

@article{goel2025great,
  title={Great models think alike and this undermines ai oversight},
  author={Goel, Shashwat and Struber, Joschka and Auzina, Ilze Amanda and Chandra, Karuna K and Kumaraguru, Ponnurangam and Kiela, Douwe and Prabhu, Ameya and Bethge, Matthias and Geiping, Jonas},
  journal={arXiv preprint arXiv:2502.04313},
  year={2025}
}

\clearpage 
\appendix

\section{Additional results on model}
\ifarxiv 
\else 
\subsection{Additional related work}\label{app:related}
\paragraph{Non-monotonicity in aggregation:} In Section \ref{sec:helpuser} we demonstrate a \enquote{non-monotonicity} result: increasing the value of a model could \emph{decrease} consumer welfare. Related non-monotonicities have been studied in other settings. 
For example, one close connection is in auctions, where results have shown that adding in more bidders could \emph{decrease} revenue (e.g. \citep{rastegari2011revenue}). While this phenomenon is very closely related, the types of aggregation mechanisms and objectives differ between our settings and auctions. Related non-monotonicities also occur in social choices, such as settings where it could be the case that \emph{more} votes for a candidate could cause her to go from winning a race to losing (e.g. \citep{lepelley2018monotonicity}). This could reflect violations of Pareto efficiency or the unanimity principle, which would declare that more support for an option should not \emph{reduce} its property of winning. One close connection is \cite{small1981applied}, which shows that consumer welfare is monotone with a different utility model where users derive utility according to the value plus the additive $\epsilon$  noise: however, in our setting, we argue that it makes more sense to use the noise terms $\epsilon_i$ to model decision errors, rather than true variation in quality that a user perfectly responds to. Our results differ from most voting settings in that we assume there is some true value over candidates (here, model responses) and we are studying the \enquote{expected value} of the election, while most voting mechanisms are candidate-neutral and focus on the probability of individual candidates being selected.

We note that our \enquote{monotonicity} condition is \emph{not} the same as \enquote{regularity} (e.g. \cite{cerreia2019deliberately, caliari2025luce}), a different property often studied in relation to choice models. Regularity is defined as saying that the probability of an item being picked only stays constant or decreases as more items are added (we refer to this property as \enquote{substitutability}, in Definition \ref{def:substitutability}). A choice function could exhibit monotonicity or regularity, or neither, or both: regularity is a property of the probability of an item being picked (given other items added to the set), and monotonicity is a property of the value that the choice function gives the user (as items in the set increase in value). 

\textbf{Connection to Colonel Blotto, Tullock contests:}
Note that the \enquote{model-creation} setting in Section \ref{sec:incententry} has close ties to a rich vein of literature for how agents compete,
such as Colonel Blotto and Tullock contests. The Colonel Blotto game models a pair of Colonels competing over a series of \enquote{battlefields}, with some fixed total amount of value, and an agent \enquote{winning} a battlefield according to some cost function based on how strongly they out-compete the other Colonel \citep{borel1921theorie}. Tullock contests similarly study different agents who allocate different value in the hope of winning a probabilistic \emph{contest}, where the aggregation function is typically given by $\frac{\val_a}{\val_a + \val_b}$ \citep{Tullock2001}. The closest variant of Colonel Blotto is the \enquote{Lottery Blotto} game \cite{friedman1958game}, which studies Colonel Blotto where the outcome is determined by a Tullock contest function. Results here show that the unique equilibrium is to have players have allocation that is equal across tasks (if all tasks are equally valuable). While these literatures differ from our setting in multiple ways, one crucial one is the function governing \enquote{winrate.} Specifically, winrate is often either deterministic (by whichever agent has the larger value), or when probabilistic is often given by convex functions, such as $\frac{\val_{ai}}{\val_{ai} + \val_{bi}}$. Given a function like this, agents are generally incentivized to best-respond by evenly spreading value across tasks. 
However, in our setting we use BTL to determine winrate, which is \emph{not}
convex. This leads to much more complex patterns of behavior. We view this complexity as a benefit, given that it arises from satisfying natural properties like users becoming more easily \enquote{distracted} by items that are similar in value, discussed in Section \ref{sec:model}.

\fi 
\begin{figure}
\centering 
    \includegraphics[width=5in]{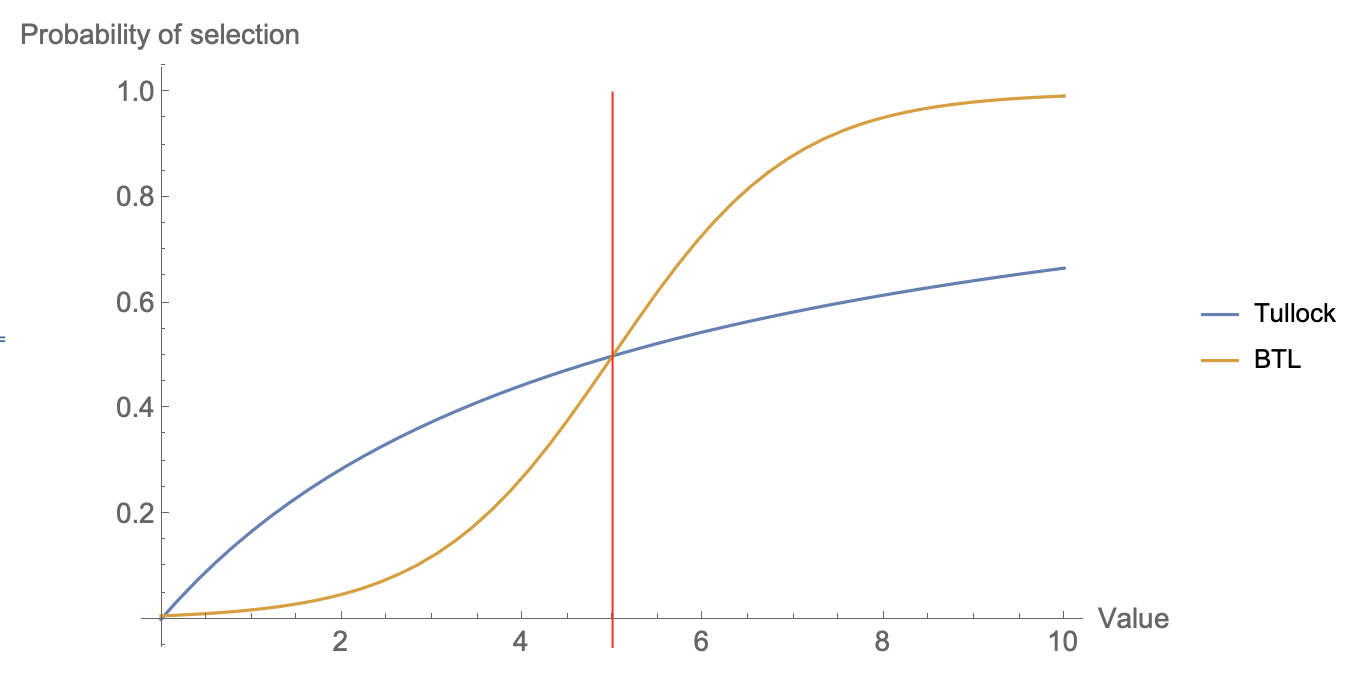}
    \caption{Given a setting with two players with values $v_i, v_j$, the probability of selection of $v_i$, given $v_j=5$ (red vertical line) for Tullock contest function $\frac{v_i}{v_i + v_j}$ and BTL $\frac{\exp(v_i/\beta)}{\exp(v_i/\beta) + \exp(v_j/\beta)}$ with $\beta=1$. Note that the probability for BTL changes most quickly around $v_i \approx v_j$. }
    \label{fig:tullockexample}
\end{figure}

\subsection{Error patterns in routing}\label{app:errorpatternsrouter}
In this section we include further explorations of empirical patterns in \enquote{errors} in LLM routing. As we noted in Section \ref{sec:related}, LLM routing is an active area of research, and many contain substantially different approaches. One key difference is that the source of randomness in LLM routing is different. When humans make choices, the randomness in their decision often comes from inherent uncertainty or variability in human behavior, or \enquote{irrelevant} features. However, many LLM routing tools are deterministic at the instance-level decision of where to route any particular query, given a fixed algorithm, but have randomness involved in the training process of the algorithm. In this subsection we will explore how such randomness relates to the model in Section \ref{sec:model} and our paper's general implications. 

Specifically, we recreate the results in \cite{zhang2025beyond}, which is a clustering-based routing method. In this approach, the router is trained on multiple benchmark datasets. A randomized 70\%/30\% split of data is used for training and test performance. In the training process, the data is grouped into 64 clusters. In the inference stage, for each new query the routing algorithm calculates which of the 64 clusters the query is closest to in embedding space, and then deterministically routes the query to whichever model performs best on that cluster. 

In Table \ref{tab:routercostperf} we include the empirical performance of each model on each benchmark (as calculated in our 70\% split), and in Table \ref{tab:routerprob} we return the proportion of test queries (in the 30\% split) from that dataset that were routed to each model. Recall that because items are rated based on clusters that imperfectly overlap with benchmarks, the routing tool will \emph{not} always route the query from a particular benchmark. For example, the routing tool maps queries from the AIME dataset to \verb+deepseek-r1-0528+ always, even though its training performance on this dataset is only tied for second best (\verb+a22b-thinking-2507+ has slightly higher performance, 0.93 as compared to 0.9). However, there is a general connection between strength on a benchmark and frequency with which that model is used on that benchmark. 

Figure \ref{fig:routerbench-ex-gapvsperf} presents the connections between these. In general, the routing tool is more likely to make \enquote{mistakes} in routing between two models that have very similar performance. When a model has substantially lower performance than the best model on a task, its probability of being routed to drops substantially. Note that this pattern of errors does not exactly match those given by BTL - which is at least partially due to fact that this experiment only captures some source of randomness in routing and not others (such as the randomness in the test/train split). However, from this experiment, routers do seem to display many of the key features we referenced in Section \ref{sec:model}: they are better than random but worse than perfect, and tend to more frequently make errors between options that are closer in value. Given this, we suspect that the high-level insights from our paper could also apply to routing tools. 

\begin{figure}
\centering 
    \includegraphics[width=5in]{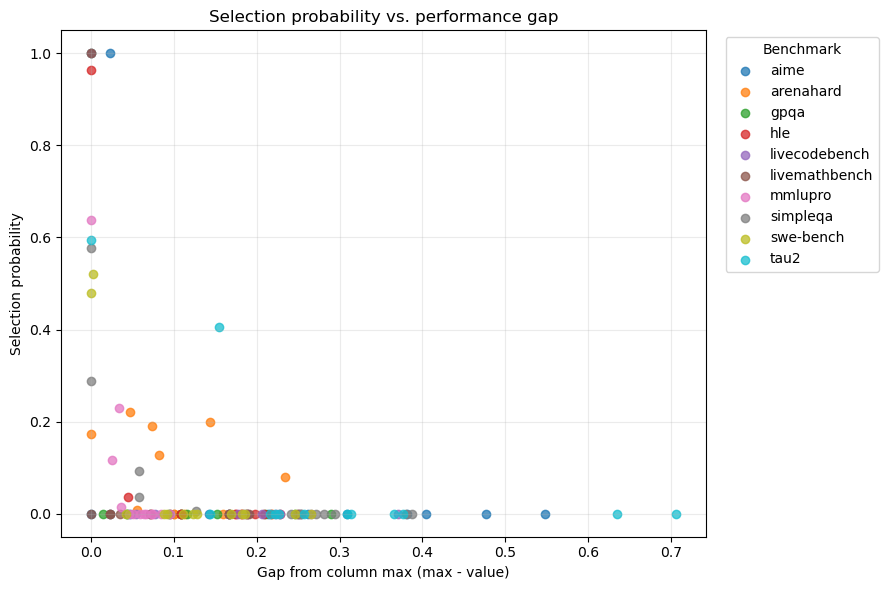}
    \caption{For each benchmark and each model, the $x$-axis gives the gap between that model's performance on that benchmark (e.g. a gap of 0 means that model is the best), and the $y$ axis measures how frequently that model is selected by the router for that benchmark. }
    \label{fig:routerbench-ex-gapvsperf}
\end{figure}

\begin{table}
\begin{tabular}{C{2cm}C{1.cm} C{1.cm} C{1.cm} C{1.cm} C{1.cm} C{1.cm} C{1.cm} C{1.cm} C{1.cm} C{1.cm}}
\toprule
model & aime & arenahard & gpqa & hle & livecodebench & livemathbench & mmlupro & simpleqa & swe-bench & tau2 \\
\midrule
claude-sonnet-4 & 0.38 & 0.53 & 0.66 & 0.05 & 0.61 & 0.63 & 0.83 & 0.16 & 0.34 & 0.48 \\
deepseek-r1-0528 & 0.9 & 0.62 & 0.8 & 0.16 & 0.82 & 0.8 & 0.84 & 0.26 & 0.3 & 0.39 \\
deepseek-v3-0324 & 0.45 & 0.61 & 0.56 & 0.04 & 0.69 & 0.63 & 0.78 & 0.28 & 0.25 & 0.07 \\
deepseek-v3.1-terminus & 0.55 & 0.67 & 0.75 & 0.09 & 0.68 & 0.76 & 0.84 & 0.24 & 0.25 & 0.49 \\
gemini-2.5-flash & 0.67 & 0.55 & 0.7 & 0.08 & 0.62 & 0.77 & 0.81 & 0.3 & 0.17 & 0.34 \\
gemini-2.5-pro & 0.88 & 0.77 & 0.85 & 0.22 & 0.8 & 0.55 & 0.85 & 0.54 & 0.34 & 0.45 \\
glm-4.6 & 0.79 & 0.65 & 0.73 & 0.16 & 0.65 & 0.63 & 0.8 & 0.26 & 0.21 & 0.56 \\
gpt-5 & 0.9 & 0.69 & 0.83 & 0.26 & 0.87 & 0.8 & 0.87 & 0.48 & 0.16 & 0.71 \\
gpt-5-chat & 0.71 & 0.68 & 0.74 & 0.06 & 0.64 & 0.61 & 0.81 & 0.41 & 0.1 & 0 \\
intern-s1 & 0.52 & 0.67 & 0.64 & 0.08 & 0.5 & 0.73 & 0.82 & 0.15 & 0.08 & 0.33 \\
kimi-k2-0905 & 0.62 & 0.7 & 0.74 & 0.07 & 0.7 & 0.7 & 0.8 & 0.28 & 0.23 & 0.48 \\
qwen3-235b-a22b-2507 & 0.76 & 0.71 & 0.6 & 0.09 & 0.67 & 0.77 & 0.82 & 0.54 & 0.16 & 0.4 \\
qwen3-235b-a22b-thinking-2507 & 0.93 & 0.72 & 0.8 & 0.08 & 0.8 & 0.55 & 0.79 & 0.48 & 0.22 & 0.55 \\
\bottomrule
\end{tabular}
\caption{Performance of models on different benchmarks (reproduction of \cite{zhang2025beyond}, 70\% split of data for training). }
\label{tab:routercostperf}
\end{table}

\begin{table}
\begin{tabular}{C{2cm}C{1.cm} C{1.cm} C{1.cm} C{1.cm} C{1.cm} C{1.cm} C{1.cm} C{1.cm} C{1.cm} C{1.cm}}
\toprule
model & aime & arenahard & gpqa & hle & livecodebench & livemathbench & mmlupro & simpleqa & swe-bench & tau2 \\
\midrule
claude-sonnet-4 & 0 & 0.08 & 0 & 0 & 0 & 0 & 0 & 0 & 0.48 & 0 \\
deepseek-r1-0528 & 1 & 0.2 & 0 & 0 & 0 & 1 & 0.23 & 0 & 0 & 0 \\
deepseek-v3-0324 & 0 & 0 & 0 & 0 & 0 & 0 & 0 & 0 & 0 & 0 \\
deepseek-v3.1-terminus & 0 & 0 & 0 & 0 & 0 & 0 & 0.01 & 0 & 0 & 0 \\
gemini-2.5-flash & 0 & 0 & 0 & 0 & 0 & 0 & 0 & 0 & 0 & 0 \\
gemini-2.5-pro & 0 & 0.17 & 1 & 0.04 & 0 & 0 & 0.12 & 0.58 & 0.52 & 0 \\
glm-4.6 & 0 & 0 & 0 & 0 & 0 & 0 & 0 & 0 & 0 & 0 \\
gpt-5 & 0 & 0.19 & 0 & 0.96 & 1 & 0 & 0.64 & 0.09 & 0 & 0.6 \\
gpt-5-chat & 0 & 0.13 & 0 & 0 & 0 & 0 & 0 & 0.01 & 0 & 0 \\
intern-s1 & 0 & 0 & 0 & 0 & 0 & 0 & 0 & 0 & 0 & 0 \\
kimi-k2-0905 & 0 & 0 & 0 & 0 & 0 & 0 & 0 & 0 & 0 & 0 \\
qwen3-235b-a22b-2507 & 0 & 0.01 & 0 & 0 & 0 & 0 & 0 & 0.29 & 0 & 0 \\
qwen3-235b-a22b-thinking-2507 & 0 & 0.22 & 0 & 0 & 0 & 0 & 0 & 0.04 & 0 & 0.4 \\
\bottomrule
\end{tabular}
\caption{Empirical routing proportion (30\% test split), reproduction of results in \cite{zhang2025beyond}.}
\label{tab:routerprob}
\end{table}

\subsection{Experiments with benchmark datasets  (additional plots from Section \ref{sec:experiments})}\label{app:experiments}
Here, we include additional figures for Section \ref{sec:experiments}. 
\subsubsection{LLMRouterBench}
Figure \ref{fig:llmrouterbench} \cite{li2026llmrouterbench} displays the performance of different models on benchmarks, and is the information used to create Figure \ref{fig:routerbench_allfigs}. 
Table \ref{tab:llmrouterbench-allocation} displays the allocation of value across tasks in the experiment in Figure \ref{fig:socialweflarerouter}: firms would respond to objectives (winrate, weighted winrate, consumer welfare) in different ways, which result in the consumer welfare outcomes displayed in Figure \ref{fig:socialweflarerouter}. 

\begin{figure}
    \centering
    \includegraphics[width=0.8\linewidth]{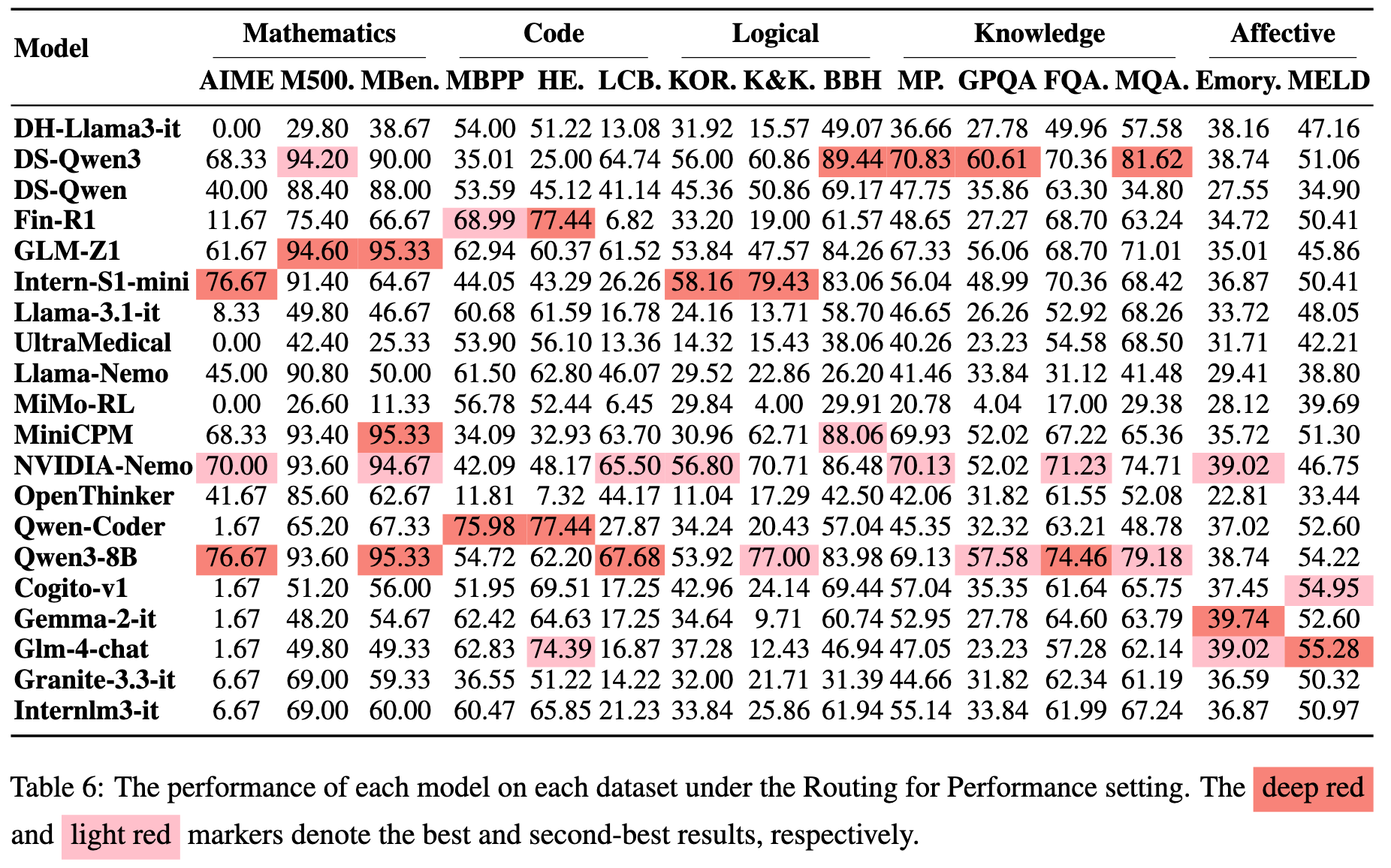}
    \caption{Performance of different models on LLMRouterbench \cite{li2026llmrouterbench}. }
    \label{fig:llmrouterbench}
\end{figure}

\begin{table}[]
\centering 
\begin{tabular}{|c|c|c|c|}
\hline
               & \textbf{Winrate} & \textbf{\begin{tabular}[c]{@{}c@{}}Weighted \\ winrate\end{tabular}} & \textbf{\begin{tabular}[c]{@{}c@{}}Consumer \\ welfare\end{tabular}} \\ \hline
\textbf{AIME}  & 45               & 100                                                                  & 100                                                                  \\ \hline
\textbf{M500}  & 89               & 0                                                                    & $\emptyset$                                                          \\ \hline
\textbf{MBen}  & 66               & 0                                                                    & $\emptyset$                                                          \\ \hline
\textbf{MBPP}  & 60               & 0                                                                    & $\emptyset$                                                          \\ \hline
\textbf{HE}    & 60               & 0                                                                    & $\emptyset$                                                          \\ \hline
\textbf{LCB}   & 48               & 100                                                                  & 100                                                                  \\ \hline
\textbf{KOR}   & 36               & 100                                                                  & 100                                                                  \\ \hline
\textbf{K\&K}  & 25               & 100                                                                  & 100                                                                  \\ \hline
\textbf{BBH}   & 53               & 100                                                                  & 100                                                                  \\ \hline
\textbf{MP}    & 48               & 100                                                                  & 100                                                                  \\ \hline
\textbf{GPQA}  & 36               & 100                                                                  & 100                                                                  \\ \hline
\textbf{FQA}   & 66               & 0                                                                    & $\emptyset$                                                          \\ \hline
\textbf{MQA}   & 72               & 0                                                                    & $\emptyset$                                                          \\ \hline
\textbf{Emory} & 42               & 100                                                                  & 100                                                                  \\ \hline
\textbf{MELD}  & 54               & 0                                                                    & $\emptyset$                                                          \\ \hline
\end{tabular}
\caption{Given a total allocation of value 800 (average 53 across tasks), the allocation that maximizes a) winrate, b) weighted winrate, or c) consumer welfare, with $\beta=1$. Winrate maintains a constant 0.97 winrate (except for in tasks where that isn't possible due to a max value cap of 100), and weighted winrate and consumer welfare are almost identical, except that consumer welfare abstains where weighted winrate returns an item of value 0.}
\label{tab:llmrouterbench-allocation}
\end{table}

\subsubsection{MT-Bench-101}

Figure \ref{fig:mtbench-data} displays MT-Bench-101 \citep{bai2024mt} performance across tasks. Table \ref{tab:max_alloc} gives the allocations that lead to welfare changes in Figure \ref{fig:socialweflaremt}. 

\begin{figure}
    \centering
    \includegraphics[width=0.8\linewidth]{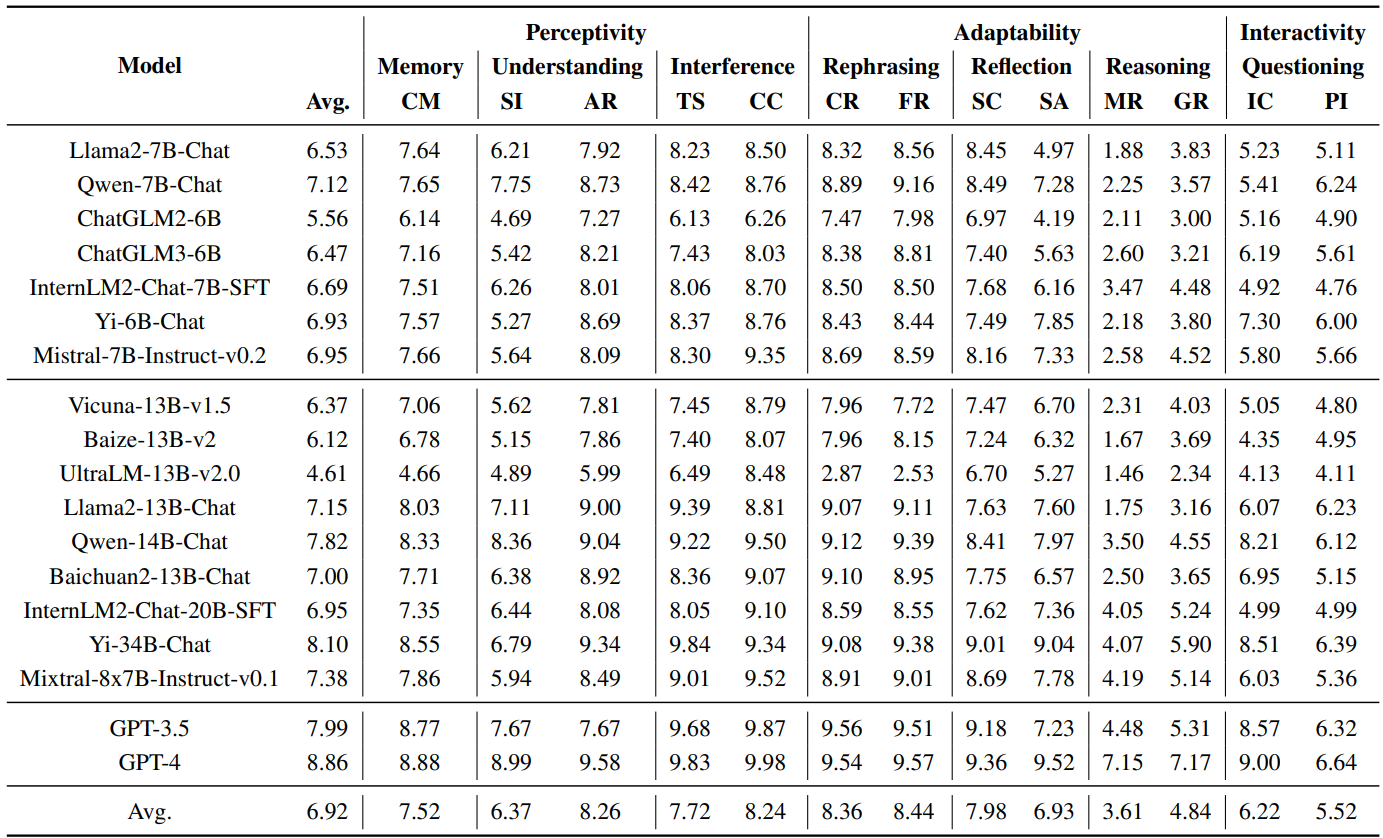}
    \caption{Performance of different models on MT-Bench-101 \citep{bai2024mt}}
    \label{fig:mtbench-data}
\end{figure}

\begin{figure}[htbp]
    \centering

    \begin{subfigure}[t]{0.49\textwidth}
        \centering
        \includegraphics[width=\linewidth]{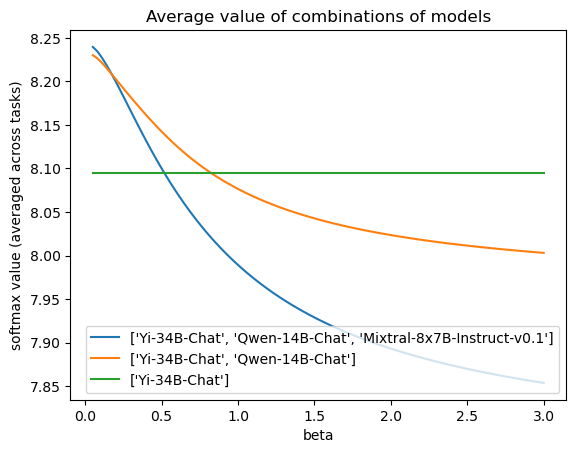}
        \caption{Value of combinations of models.}
        \label{fig:fig:combmt}
    \end{subfigure}
    \hfill
    \begin{subfigure}[t]{0.49\textwidth}
        \centering
        \includegraphics[width=\linewidth]{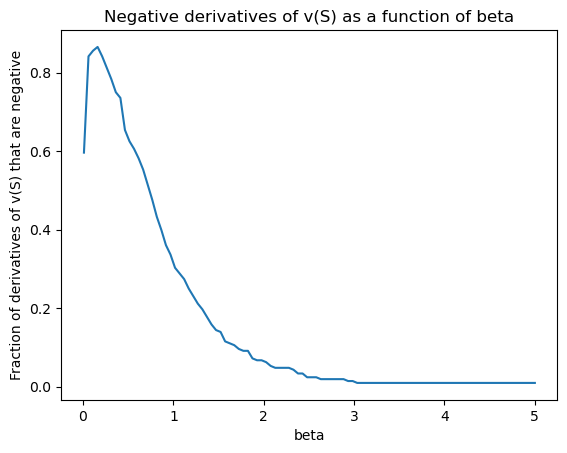}
        \caption{Fraction of derivatives of consumer welfare that are negative.}
        \label{fig:derivsmt}
    \end{subfigure}

    \vspace{0.5em}

    \begin{subfigure}[t]{0.49\textwidth}
        \centering
        \includegraphics[width=\linewidth]{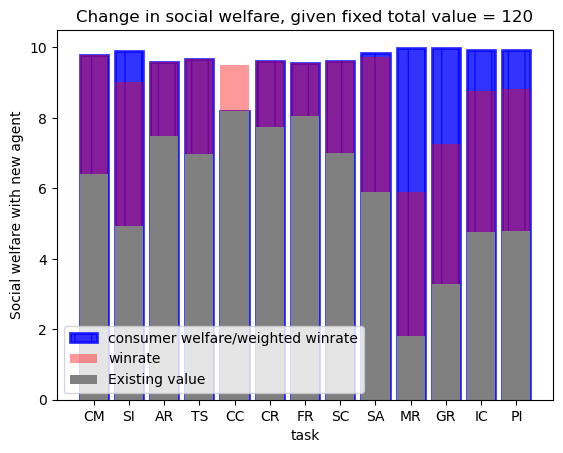}
        \caption{Change in consumer welfare given different incentives for model creation. }
        \label{fig:socialweflaremt}
    \end{subfigure}
    \hfill
    \begin{subfigure}[t]{0.49\textwidth}
        \centering
        \includegraphics[width=\linewidth]{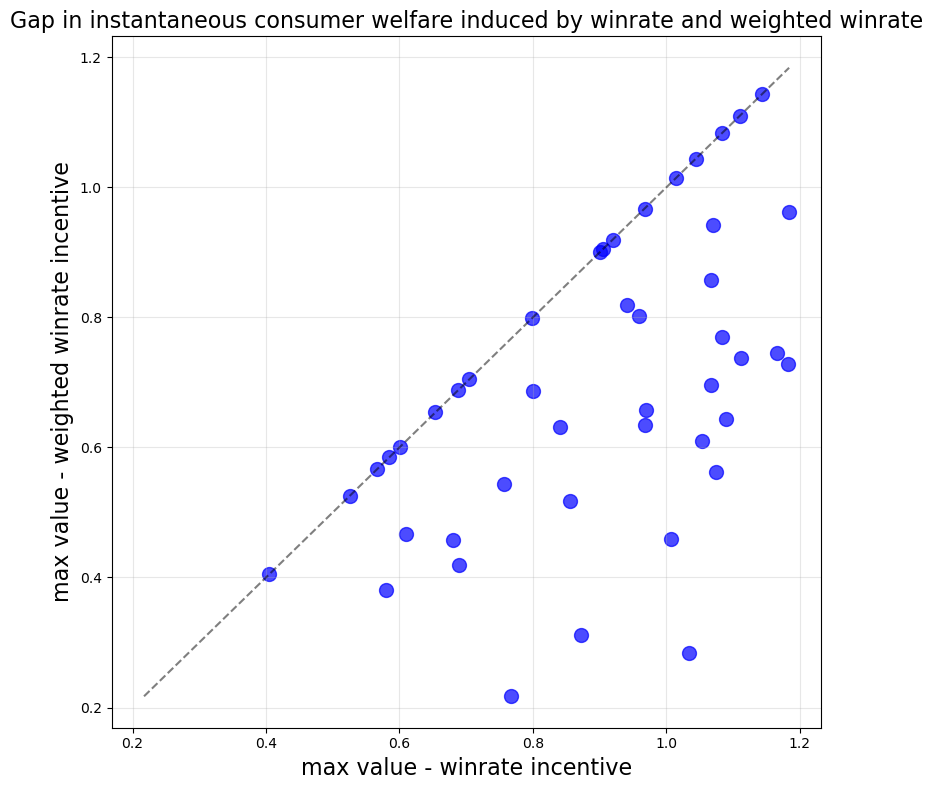}
        \caption{Gap in instantaneous consumer welfare given different incentives in model improvement.}
        \label{fig:instant-gap-mt}
    \end{subfigure}

    \caption{Experiments with MTBench-101 data \cite{bai2024mt}}
    \label{fig:mtbench_allfigs}
\end{figure}

\begin{table}[]
\begin{tabular}{|c|c|c|c|c|c|c|c|c|c|c|c|c|c|}
\hline
                          & \textbf{CM} & \textbf{SI} & \textbf{AR} & \textbf{TS} & \textbf{CC} & \textbf{CR} & \textbf{FR} & \textbf{SC} & \textbf{SA} & \textbf{MR} & \textbf{GR} & \textbf{IC} & \textbf{PI} \\ \hline
\textbf{Winrate}          & 10          & 9.2         & 10          & 10          & 10          & 10          & 10          & 10          & 9.88        & 6.06        & 7.43        & 8.93        & 9           \\ \hline
\textbf{Weighted winrate} & 10          & 10          & 10          & 10          & 0           & 10          & 10          & 10          & 10          & 10          & 10          & 10          & 10          \\ \hline
\textbf{Consumer welfare}     & 10          & 10          & 10          & 10          & $\emptyset$ & 10          & 10          & 10          & 10          & 10          & 10          & 10          & 10          \\ \hline
\end{tabular}
\caption{Given a total allocation of value 120 (average 9.23 across tasks), the allocation that maximizes a) winrate, b) weighted winrate, or c) consumer welfare, with $\beta=1$. Winrate maintains a constant 0.96 winrate (except for in tasks where that isn't possible due to a max value cap of 10), and weighted winrate and consumer welfare are almost identical, except that consumer welfare abstains for the smallest task, while weighted winrate does not. For space constraints, the task names are abbreviated. }
\label{tab:max_alloc}
\end{table}


%

\subsubsection{Wildbench}

Table \ref{tab:wildbenchdata} displays performance of the top 10 models on Wildbench task-specific benchmarks \citep{lin2024wildbench}. Figure \ref{fig:wildbench_allfigs} displays all figures related to experiments on Wildbench data, and  Table \ref{tab:wildbench-allocation} gives the allocations that lead to welfare changes in Figure \ref{fig:socialweflarewild}. 

\begin{table}
\centering 
\begin{tabular}{C{3.5cm}C{2.cm} C{2.cm} C{2.cm} C{2.cm} C{2.cm} }
\toprule
 & Creative Tasks & Coding \& Debugging & Planning \& Reasoning & Information/Advice seeking & Math \& Data Analysis \\
\midrule
Athene-70B & 7.9 & 7.76 & 7.96 & 7.94 & 7.58 \\
gpt-4o-2024-05-13 & 7.78 & 7.91 & 7.89 & 7.73 & 7.59 \\
gpt-4o-mini-2024-07-18 & 7.87 & 7.58 & 7.69 & 7.61 & 7.27 \\
Mistral-Large-2 & 7.75 & 7.25 & 7.59 & 7.6 & 7.13 \\
yi-large-preview & 7.63 & 7.29 & 7.53 & 7.64 & 7.06 \\
gpt-4-turbo-2024-04-09 & 7.73 & 7.37 & 7.48 & 7.58 & 6.96 \\
claude-3-5-sonnet-20240620 & 7.43 & 7.52 & 7.43 & 7.42 & 6.88 \\
deepseek-v2-chat-0628 & 7.51 & 7.36 & 7.35 & 7.14 & 7.01 \\
gemma-2-9b-it-DPO & 7.77 & 6.92 & 7.41 & 7.69 & 6.58 \\
gemma-2-9b-it-SimPO & 7.66 & 6.95 & 7.43 & 7.51 & 6.72 \\
\bottomrule
\end{tabular}
\caption{Performance of models on different tasks in Wildbench benchmark \cite{lin2024wildbench}, retrieved from \url{https://github.com/allenai/WildBench/blob/main/leaderboard/data_dir/score.json}. For concisesness, only top 10 models out of 61 are presented here, but all were used in analysis for Figure \ref{fig:wildbench_allfigs} unless otherwise noted. }
\label{tab:wildbenchdata}
\end{table}

\begin{figure}[htbp]
    \centering

    \begin{subfigure}[t]{0.49\textwidth}
        \centering
        \includegraphics[width=\linewidth]{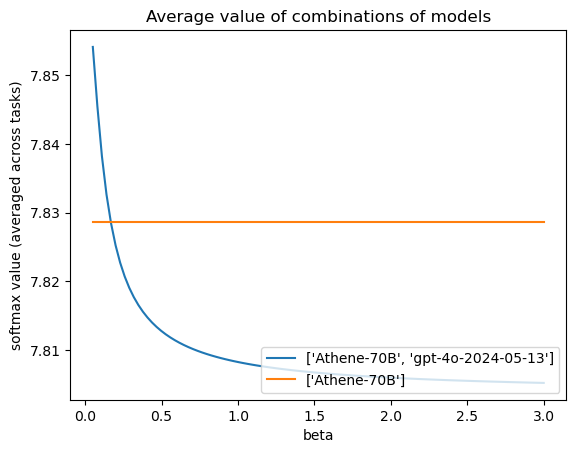}
        \caption{Value of combinations of models.}
        \label{fig:fig:combwild}
    \end{subfigure}
    \hfill
    \begin{subfigure}[t]{0.49\textwidth}
        \centering
        \includegraphics[width=\linewidth]{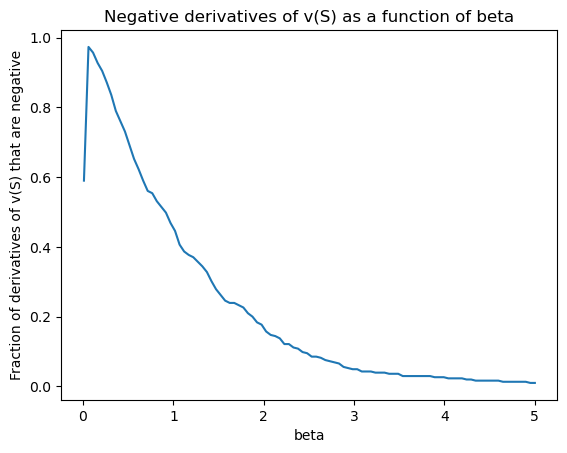}
        \caption{Fraction of derivatives of consumer welfare that are negative.}
        \label{fig:derivswild}
    \end{subfigure}

    \vspace{0.5em}

    \begin{subfigure}[t]{0.49\textwidth}
    \vspace{0pt}
        \centering
        \includegraphics[width=\linewidth]{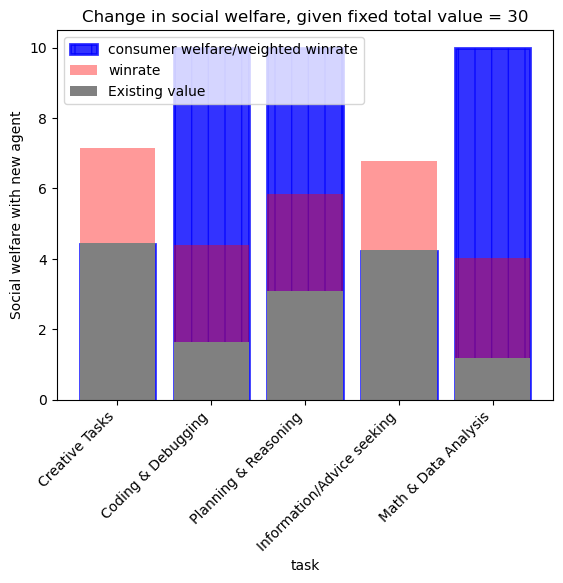}
        \caption{Change in consumer welfare given different incentives for model creation. }
        \label{fig:socialweflarewild}
    \end{subfigure}
    \hfill
    \begin{subfigure}[t]{0.49\textwidth}
    \vspace{0pt}
        \centering
        \includegraphics[width=\linewidth]{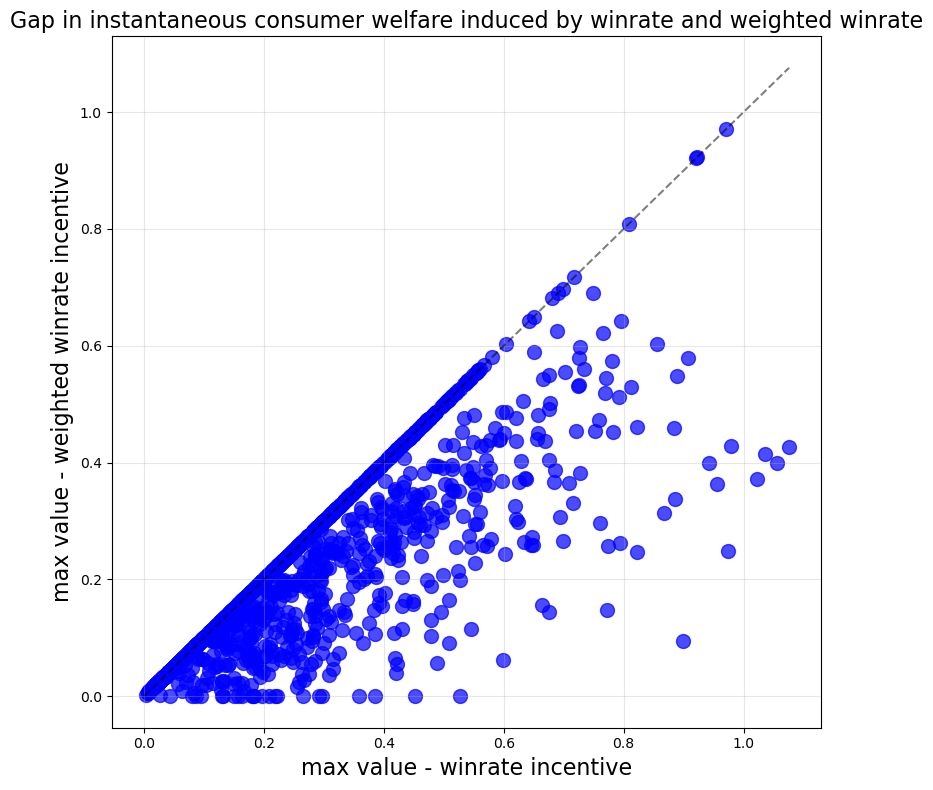}
        \caption{Gap in instantaneous consumer welfare given different incentives in model improvement.}
        \label{fig:instant-gap-wild}
    \end{subfigure}

    \caption{Experiments with Wildbench data \cite{lin2024wildbench}}
    \label{fig:wildbench_allfigs}
\end{figure}

\begin{table}
\begin{tabular}{C{3.5cm}|C{2.cm} |C{2.cm}| C{2.cm} |C{2.5cm}| C{2.cm} }
\hline 
 & Creative Tasks & Coding \& Debugging & Planning \& Reasoning & Information/Advice seeking & Math \& Data Analysis \\
 \hline  
\textbf{Winrate}          & 7.46        & 4.69 & 6.14 & 7.07        & 4.35 \\ \hline
\textbf{Weighted winrate} & 0           & 10   & 10   & 0           & 10   \\ \hline
\textbf{Consumer welfare} & $\emptyset$ & 10   & 10   & $\emptyset$ & 10   \\ \hline
\end{tabular}
\caption{Allocation of values for Figure \ref{fig:socialweflarewild}. }
\label{tab:wildbench-allocation}
\end{table}

\section{Proofs for Section \ref{sec:helpuser}}\label{app:helpuser}

\begin{figure}
\centering 
    \includegraphics[width=3in]{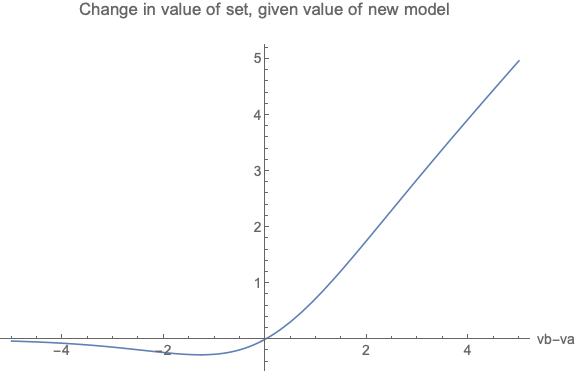}
    \caption{Plot of the function in Lemma \ref{lem:pairwisebenefitexact} for a single task, with $\beta=1$: equivalent to 
    $\frac{\exp(\Delta/\beta)}{\exp(\Delta/\beta) + 1}  \cd \Delta$, for $\Delta = \val_{b} -\val_{a}$. Note that in the example in Lemma \ref{lem:exnonmonotone}, as $\Delta = \val_b - \val_a$ goes from $-4$ to $-1$, the curve becomes \emph{more} negative, meaning that it reduces consumer welfare. 
    }
    \label{fig:softmaxplot}
\end{figure}

\begin{restatable}{lemma}{pairwisebenefitexact}\label{lem:pairwisebenefitexact}
Adding a model $B$ to a model $A$ increases consumer welfare with BTL aggregation if and only if 
 \begin{align}
     0 \leq \sum_{\task \in [\ntasks]} \frac{\exp((\val_{\task, b} -\val_{\task, a})/\beta)}{\exp((\val_{\task, b} -\val_{\task, a})/\beta) + 1}  \cd (\val_{\task, b} -\val_{\task, a})
     \label{eq:deltaone}
 \end{align}
\end{restatable}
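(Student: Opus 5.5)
We compare consumer welfare with model $A$ alone to consumer welfare with the set $\{A, B\}$, task by task, and then sum over tasks. With only $A$, the user picks $A$ with probability one, so welfare on task $\task$ is $\val_{\task, a}$ and total welfare is $\sum_{\task} \val_{\task, a}$. With $\{A, B\}$ under BTL aggregation (Definition \ref{def:softmax}), the probability of picking $B$ on task $\task$ is
\begin{equation*}
\ppick_b(\valvec_\task) = \frac{\exp(\val_{\task, b}/\beta)}{\exp(\val_{\task, a}/\beta) + \exp(\val_{\task, b}/\beta)},
\end{equation*}
and $\ppick_a(\valvec_\task) = 1 - \ppick_b(\valvec_\task)$.

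The key step is to rewrite the per-task utility $\agg{\valvec_\task} = \ppick_a \val_{\task, a} + \ppick_b \val_{\task, b}$ as $\val_{\task, a} + \ppick_b(\valvec_\task)\cd(\val_{\task, b} - \val_{\task, a})$, using $\ppick_a + \ppick_b = 1$. This isolates the gain from adding $B$ on task $\task$ as $\ppick_b(\valvec_\task)\cd(\val_{\task, b} - \val_{\task, a})$.

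Next, divide numerator and denominator of $\ppick_b$ by $\exp(\val_{\task, a}/\beta)$. This gives
\begin{equation*}
\ppick_b(\valvec_\task) = \frac{\exp((\val_{\task, b}-\val_{\task, a})/\beta)}{\exp((\val_{\task, b}-\val_{\task, a})/\beta)+1},
\end{equation*}
which depends only on the gap $\val_{\task, b} - \val_{\task, a}$.

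Summing the per-task gains over $\task \in [\ntasks]$, the welfare with $\{A,B\}$ minus the welfare with $A$ alone is exactly the right-hand side of \eqref{eq:deltaone}. So adding $B$ weakly increases consumer welfare if and only if that sum is nonnegative, which gives both directions at once.

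There is no real obstacle; the argument is a direct computation. The only points needing care are interpretive. First, "increases" should be read as weak increase, matching the $\leq$ in \eqref{eq:deltaone}. Second, all values are assumed non-$\emptyset$. If $B$ abstains on some task, then $\ppick_b = 0$ there, and that task contributes zero to the sum; I would note this convention explicitly.
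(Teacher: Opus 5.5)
Your proposal is correct and follows the paper's proof: both use $\ppick_a + \ppick_b = 1$ to rewrite the two-model welfare on each task as $\val_{\task,a} + \ppick_b(\valvec_\task)\cd(\val_{\task,b}-\val_{\task,a})$, then sum the gains over tasks. Your extra step of dividing by $\exp(\val_{\task,a}/\beta)$ makes explicit the equivalence with the gap form in the lemma, which the paper's final line leaves implicit.
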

\begin{proof}
Fixing a single task $\task$, consumer welfare with models $\{A,B\}$ is given by: 
\begin{align*}
\val_{\task, a} \cd \frac{\exp(\val_{\task, a}/\beta)}{\exp(\val_{\task, a}/\beta) + \exp(\val_{\task, b}/\beta)} + \val_{\task, b} \cd \frac{\exp(\val_{\task, b}/\beta)}{\exp(\val_{\task, a}/\beta) + \exp(\val_{\task, b}/\beta)}\\
= \val_{\task, a} \cd \p{1-\frac{\exp(\val_{\task, b}/\beta)}{\exp(\val_{\task, a}/\beta) + \exp(\val_{\task, b}/\beta)}} + \val_{\task, b} \cd \frac{\exp(\val_{\task, b}/\beta)}{\exp(\val_{\task, a}/\beta) + \exp(\val_{\task, b}/\beta)}\\
= (\val_{\task, b}-\val_{\task, a}) \cd \frac{\exp(\val_{\task, b}/\beta)}{\exp(\val_{\task, a}/\beta) + \exp(\val_{\task, b}/\beta)} + \val_{\task, a}
\end{align*}
Note that the value of agent $A$ alone is exactly $\val_{\task, a}$. Thus, $\{A, B\}$ together has greater value exactly when 
$$ \sum_{\task \in [\ntasks]} (\val_{\task, b}-\val_{\task, a}) \cd \frac{\exp(\val_{\task, b}/\beta)}{\exp(\val_{\task, a}/\beta) + \exp(\val_{\task, b}/\beta)} \geq 0$$
\end{proof}
Note that this function is commonly known as the swish function when used for non-linearity in modern neural networks \citep{ramachandran2017searching} in Lemma \ref{lem:pairwisebenefitexact} satisfies a few properties for consumer welfare, proven in Theorem \ref{thrm:higheraccadd}.

\higheraccadd*
\begin{proof}
1. First, we will show an example where $B$ has \emph{higher} value than every model in $\modelset$, and yet adding $B$ to $\modelset$ \emph{reduces} consumer welfare. 
This proof is by example: consider a setting with exactly two types of tasks $\ntasks = 2$. Suppose that initially there are two different models: $A$ and $B$, with values across both tasks of $[8, 5]$, and $[5, 8]$, and the incoming model $C$ has values $[6, 8.1]$. Suppose BTL aggregation is done with accuracy value $\beta=2$. Then, the consumer welfare of different sets are:
\begin{itemize}
    \item Each model in isolation: $\{13, 13, 14.1\}$.
    \item $\{A, B\}$: 14.91. 
    \item The new set $A, B, C = 14.85$
\end{itemize}
Note that $\agg{A, B, C} < \agg{A, B}$, even though $C$ has higher value in isolation than either $A$ or $B$. 

2. Second, we will show an example where $B$ has \emph{lower} value than every model in $\modelset$, and yet adding $B$ to $\modelset$ \emph{increases} consumer welfare. 

Consider a setting with exactly two tasks $\ntasks=2$, and $\modelset$ is made up of a single model A, which has values $[6, 8]$, while model B has values $[7, 5]$. Note that while model B has lower total value, its relative strengths are anti-correlated with model A. If choice-based aggregation is \enquote{strong} enough, adding model $B$ could help. Specifically: 
\begin{itemize}
    \item BTL aggregation with noise level $\beta=1$ leads to higher value: $\agg{\{\model_a, \model_b\}} = 14.6 > \agg{\model_a} = 14$. 
    \item BTL aggregation with noise level $\beta=5$ leads to \emph{lower} value: $\agg{\{\model_a, \model_b\}} = 13.5 < \agg{\model_a} = 14$.
\end{itemize}
\end{proof}

\exnonmonotone*
\begin{proof}
Proof by example: assume there is a single task $\ntasks =1$ and two agents, $A$ and $B$, with aggregation by BTL with $\beta=1$. Suppose that $A$ has value 5. If $B$ has value 1, value of $\agg{[1, 5]} = 4.92$, and $B$ is picked with probability $\ppick_b([1, 5])=0.017$. If $B$ has value 4, value of $\agg{[1, 5]} = 4.73$, and $B$ is picked with probability $\ppick_b([1, 5])=0.27$.
\end{proof}

\ifarxiv 
\else 
\begin{definition}[Monotone region]\label{def:nonmonotoneregion}
A choice function $\ppick(\cdot)$ is in the 
\emph{monotone region} with respect to agent $i$, for a set of values $\valvec$ and parameters of choice function $\phi$, if the derivative of consumer utility with respect to $\valvec_i$ is non-negative, and is in the non-monotone region if the derivative is negative.  
\end{definition}
\fi

\sepmonotone*
\begin{proof}
For a separable function to be monotone, we require that for all $\valvec$: 
$$0 \leq \frac{\partial}{\partial \valvec_i} \agg{\valvec} = \frac{\partial}{\partial \valvec_i} \sum_{k \in [\nagents]} \valvec_k \cd \frac{f(\valvec_k)}{\sum_{j \in [\nagents]} f(\valvec_j)} = \frac{\partial}{\partial \valvec_i}   \frac{\sum_{k \in [\nagents]} \valvec_k \cd f(\valvec_k)}{\sum_{j \in [\nagents]} f(\valvec_j)}
$$
In this proof, we will show that for any separable function $f(\cd)$, we can construct a $\valvec$ such that it is not monotone. We will begin by assuming that for at least one element $k$, $\valvec_k >0$, which means that $\sum_{k \in [\nagents]} f(\valvec_k) >0$. 
For the fraction above, the derivative of the numerator is given by 
$\valvec_i \cd \frac{\partial}{\valvec_i} f(\valvec_i) + f(\valvec_i)$ and the derivative of the denominator is given by $\frac{\partial}{\valvec_i} f(\valvec_i)$. 
Thus, the sign of the derivative of the overall fraction is given by: 
\begin{align*}
\p{\valvec_i \cd \frac{\partial}{\partial \valvec_i} f(\valvec_i) + f(\valvec_i)} \cd  \p{\sum_{j \in [\nagents]} f(\valvec_j)} -  \p{\sum_{j \in [\nagents]} \valvec_j \cd  f(\valvec_j)} \cd \frac{\partial}{\partial \valvec_i} f(\valvec_i)  & \geq 0 \\ 
    \p{\valvec_i \cd \frac{\partial}{\partial \valvec_i} f(\valvec_i) + f(\valvec_i)} \cd  E(\valvec) -  \p{\sum_{j \in [\nagents]} \valvec_j \cd  f(\valvec_j)} \cd \frac{\partial}{\partial \valvec_i} f(\valvec_i)
    &\ge 0 \tag{Defining $E(\valvec) = \sum_{j \in [\nagents]} f(\valvec_j)$}\\
    E(\valvec) \cd f(\valvec_i) + \frac{\partial}{\partial \valvec_i} f(\valvec_i) \cd \p{\valvec_i \cd E(\valvec)  - \sum_{j \in [\nagents]} \valvec_j \cd  f(\valvec_j)} & \ge 0\\
    E(\valvec) \cd f(\valvec_i) + \frac{\partial}{\partial \valvec_i} f(\valvec_i) \cd \p{\valvec_i \cd E(\valvec)  - E(\valvec) \cd \agg{\valvec}} & \ge 0 \tag{Using $\agg{\valvec} = \sum_{j \in [\nagents]} \frac{\valvec_j \cd f(\valvec_j)}{E(\valvec)}$}\\
    E(\valvec) \cd f(\valvec_i) + \frac{\partial}{\partial \valvec_i} f(\valvec_i) \cd E(\valvec) \cd \p{\valvec_i  - \agg{\valvec}} & \ge 0\\
    f(\valvec_i) + \frac{\partial}{\partial \valvec_i} f(\valvec_i) \cd \p{\valvec_i  - \agg{\valvec}} & \ge 0 \tag{Assuming $E(\valvec) > 0$}\\
    f(\valvec_i) + \frac{\partial}{\partial \valvec_i} f(\valvec_i) \cd \valvec_i  & \ge  \frac{\partial}{\partial \valvec_i} f(\valvec_i) \cd \agg{\valvec} 
    \numberthis \label{eq:sep-cond}
\end{align*}
We're assuming that there as at least one $\valvec_i$ with $f(\val_i) > 0$. We will now construct a vector $\valvec$ such that  Equation \ref{eq:sep-cond} cannot always hold (and thus, $f(\cd)$ cannot always be monotone. Specifically, consider a setting with $\nagents=2$. Then, 
$$\agg{\valvec} = \frac{f(\valvec_1) \cd \valvec_1 + f(\valvec_2) \cd \valvec_2}{f(\valvec_1) + f(\valvec_2)}$$
Assume that $\valvec_2 > \valvec_1 > 0$, and $\frac{\partial}{\partial \valvec_1}f(\valvec_1) >0$. Then, Equation \ref{eq:sep-cond} written with respect to $\valvec_1$ becomes: 
\begin{align*}
    f(\valvec_1) + \frac{\partial}{\partial \valvec_1} f(\valvec_1) \cd \valvec_1  & \ge  \frac{\partial}{\partial \valvec_1} f(\valvec_1) \cd \agg{\valvec} \\
    \frac{f(\valvec_1) + \frac{\partial}{\partial \valvec_1} f(\valvec_1) \cd \valvec_1  }{\frac{\partial}{\partial \valvec_1} f(\valvec_1)}& \ge  \agg{\valvec}  \tag{$\frac{\partial}{\partial \valvec_1} f(\valvec_1)>0$}\\
     \frac{f(\valvec_1) + \frac{\partial}{\partial \valvec_1} f(\valvec_1) \cd \valvec_1  }{\frac{\partial}{\partial \valvec_1} f(\valvec_1)}  & \geq \frac{f(\valvec_1) \cd \valvec_1 + f(\valvec_2) \cd \valvec_2}{f(\valvec_1) + f(\valvec_2)}\numberthis \label{eq:sepcondtwo}
\end{align*}
Note that the lefthand side of Equation \ref{eq:sepcondtwo} depends solely on $\valvec_1$, while the righthand side additionally depends on $\valvec_2$. Because $f(\valvec_2)$ is weakly increasing in $\valvec_2$, as $\valvec_2 \rightarrow \infty$, the righthand side of Equation \ref{eq:sepcondtwo} also approaches $\infty$. However, because the lefthand side is a constant in $\valvec_2$, for sufficiently large $\valvec_2$ Equation \ref{eq:sepcondtwo} must not hold, and thus the function given by $f(\cd)$ cannot always be monotonic.  
\end{proof}

\begin{restatable}{definition}{monotonefunction}[Monotone example]\label{ex:monotonefunction}
Consider an aggregation function given by 
$\ppick_i(\valvec)  = \frac{1}{\binom{\abs{\valvec}}{2}} \cd w_i(\valvec)$ with:
$$w_i(\valvec) = \sum_{j=1, i\ne j}^{{\abs{\valvec}}} w_{ij}(\valvec) = \sum_{j=1, i\ne j}^{\abs{\valvec}} \frac{\max(\valvec_i-\valvec_j, 0) + c}{\abs{\valvec_i-\valvec_j} + 2c}$$  
for any $c>0$. 
\end{restatable}

\monotoneex*
\begin{proof}
We will use the function in Example \ref{ex:monotonefunction}, restated here for convenience, and show that it is always monotone: 
\monotonefunction* 
As we must know from Theorem \ref{thrm:sepmonotone}, the function in Example \ref{ex:monotonefunction} is \emph{not} separable: it relies on the gap in value between items. 

First, we will prove that this function is a valid choice-based aggregation method: specifically that it always sums up to a probability of 1. First, note that $w_{ij} + w_{ji} = 1$ for all $i\ne j$. Summing over all $\binom{\abs{\valvec}}{2}$ pairs gives $\sum_{i \in [\abs{\valvec}]} \sum_{j \in [\abs{\valvec}], j \ne i} w_{i,j}(\valvec) = \binom{\abs{\valvec}}{2}$, so $\sum_{i \in [\abs{\valvec}]} w_i(\valvec) = 1$, as desired. Because $w_{i,j} \geq 0$, we know that $w_i(\valvec)\geq 0$ and thus each probability is non-negative, as required.

Next, we will show that this function is monotone in value, or 
\begin{align*}
    0 \leq & \frac{\partial}{\partial \valvec_i} \agg{\valvec} \quad \forall i \in [\abs{\valvec}]\\
     = & \frac{\partial}{\partial \valvec_i} \sum_{i \in [\abs{\valvec}]} \ppick_i(\valvec) \cd \valvec_i\\
     = & \ppick_i(\valvec) + \valvec_i \cd \frac{\partial}{\partial v_i} \ppick_i(\valvec) + \sum_{j \ne i, j \in [\abs{\valvec}]} \frac{\partial}{\partial \valvec_i} \valvec_j \cd \ppick_j(\valvec) \\
     \propto & \sum_{k \in [\abs{\valvec}], k \ne i} w_{ik}(\valvec) + \valvec_i \cd  \frac{\partial}{\partial v_i} \sum_{k \in [\abs{\valvec}], k \ne i} w_{ik}(\valvec) + \sum_{j \ne i, j \in [\abs{\valvec}]} \valvec_j  \sum_{k \in [\abs{\valvec}], k\ne i, j} \frac{\partial}{\partial \valvec_i}  \cd w_{jk}(\valvec) \tag{Definition \ref{ex:monotonefunction}}\\
     = & \sum_{k \in [\abs{\valvec}], k \ne i} w_{ik}(\valvec) + \valvec_i \cd  \frac{\partial}{\partial v_i} \sum_{k \in [\abs{\valvec}], k \ne i} w_{ik}(\valvec) + \sum_{j \ne i, j \in [\abs{\valvec}]} \valvec_j \frac{\partial}{\partial \valvec_i}  \cd w_{ji}(\valvec) \tag{$\frac{\partial}{\partial \valvec_i} w_{jk} = 0$ if $j\ne i$ and $k \ne i$}\\
     = & \sum_{j \in [\abs{\valvec}], j \ne i} w_{ij}(\valvec) + \valvec_i \cd  \frac{\partial}{\partial v_i} \sum_{j \in [\abs{\valvec}], j \ne i} w_{ij}(\valvec) - \sum_{j \ne i, j \in [\abs{\valvec}]} \valvec_j \frac{\partial}{\partial \valvec_i}  \cd w_{ij}(\valvec) \tag{$w_{ji} = 1-w_{ij}$}\\
     = & \sum_{j \in [\abs{\valvec}], j \ne i} \p{w_{ij}(\valvec) +  (\valvec_i - \valvec_j) \cd \frac{\partial}{\partial \valvec_i}  \cd w_{ij}(\valvec)}
\end{align*}
Next, we will show that every term within the sum above is positive: that is, that $w_{ij}(\valvec) +  (\valvec_i - \valvec_j) \cd \frac{\partial}{\partial \valvec_i}  \cd w_{ij}(\valvec)\geq 0$ always. 

Note that by the definition in Definition \ref{ex:monotonefunction}, we know that: 
$$\begin{cases}
    \frac{\partial}{\partial \valvec_i}w_{ij} = \frac{\partial}{\partial \valvec_i} \frac{\valvec_i - \valvec_j + c}{\valvec_i - \valvec_j + 2 c}= \frac{c}{(\abs(v_i-v_j) + 2c)^2} \quad \valvec_i \geq \valvec_j\\
    \frac{\partial}{\partial \valvec_i}w_{ij} =\frac{\partial}{\partial \valvec_i} \frac{c}{\valvec_j-\valvec_i + 2 c}= \frac{c}{(\abs(\valvec_i-\valvec_j) + 2c)^2} \quad \valvec_i < \valvec_j\\
\end{cases}$$
Thus, the desired condition holds for every pair with $\valvec_i > \valvec_j$ if: 
\begin{align*}
     0 < & w_{ij}(\valvec) +  (\valvec_i - \valvec_j) \cd \frac{\partial}{\partial \valvec_i}  \cd w_{ij}(\valvec)\\
     = &  \frac{\valvec_i - \valvec_j + c}{\valvec_i - \valvec_j + 2c}+ (\valvec_i - \valvec_j) \cd \frac{c}{(\valvec_i-\valvec_j + 2c)^2}
\end{align*}
which is automatically satisfied because both terms are positive. The desired condition holds for every $j$ pair with $\valvec_i < \valvec_j$ if: 
\begin{align*}
     0 < & w_{ij}(\valvec) +  (\valvec_i - \valvec_j) \cd \frac{\partial}{\partial \valvec_i}  \cd w_{ij}(\valvec)\\
    0 < &  \frac{c}{\valvec_j - \valvec_i + 2 c}+ (\valvec_i - \valvec_j) \cd \frac{c}{(\valvec_j-\valvec_i + 2c)^2}\\
     (\valvec_j - \valvec_i) \cd \frac{c}{(\valvec_j-\valvec_i + 2c)^2} <  &  \frac{c}{\valvec_j - \valvec_i + 2 c}\\ 
     (\valvec_j - \valvec_i) \cd \frac{1}{\valvec_j-\valvec_i + 2c} <  &  1 \tag{$\valvec_j - \valvec_i + c>0$}\\ 
     \valvec_j - \valvec_i < & \valvec_j-\valvec_i + 2c
\end{align*}
which is satisfied because $c>0$, by assumption. 
\end{proof}

\begin{restatable}{lemma}{septthreea}\label{lem:septthreea}
Consider a vector of values $\valvec$ and a new vector $\valvec'$ where with index $j$ changed and all else constant. Then, $\agg{\valvec'} \geq \agg{\valvec}$ if and only if the following condition holds:
$$\sum_{i \in \abs{\valvec}, i \ne j} (\ppick_i(\valvec) - \ppick_i(\valvec')) \cd (\valvec_j'-\valvec_i) >  (\valvec_j-\valvec_j')\cd \ppick_j(\valvec)$$
\end{restatable}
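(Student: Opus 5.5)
The plan is to compute the difference $\agg{\valvec'} - \agg{\valvec}$ directly from the definition of choice-based aggregation and rearrange it into the stated form. This uses only the fact that probabilities sum to one, so no property of BTL is needed. Because only index $j$ changes, we have $\valvec'_i = \valvec_i$ for $i \ne j$. Hence
\begin{align*}
\agg{\valvec'} - \agg{\valvec} = \sum_{i \ne j} \valvec_i \p{\ppick_i(\valvec') - \ppick_i(\valvec)} + \valvec_j' \ppick_j(\valvec') - \valvec_j \ppick_j(\valvec).
\end{align*}

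The key step is to eliminate $\ppick_j(\valvec')$. Using $\sum_i \ppick_i(\valvec) = \sum_i \ppick_i(\valvec') = 1$, we get $\ppick_j(\valvec') - \ppick_j(\valvec) = \sum_{i\ne j} \p{\ppick_i(\valvec) - \ppick_i(\valvec')}$. I would then split the last two terms by adding and subtracting $\valvec_j'\ppick_j(\valvec)$:
\begin{align*}
\valvec_j' \ppick_j(\valvec') - \valvec_j \ppick_j(\valvec) = \valvec_j' \p{\ppick_j(\valvec') - \ppick_j(\valvec)} + (\valvec_j' - \valvec_j)\ppick_j(\valvec).
\end{align*}
Substituting the sum-to-one identity and combining with the first sum, which equals $-\sum_{i\ne j}\valvec_i(\ppick_i(\valvec)-\ppick_i(\valvec'))$, gives
\begin{align*}
\agg{\valvec'} - \agg{\valvec} = \sum_{i \ne j} \p{\ppick_i(\valvec) - \ppick_i(\valvec')}\p{\valvec_j' - \valvec_i} - (\valvec_j - \valvec_j')\ppick_j(\valvec).
\end{align*}
The sign of this expression is exactly the comparison in the statement, and the equivalence follows immediately.

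The algebra is routine. The one point needing care is the strict inequality in the statement against the weak inequality $\agg{\valvec'} \geq \agg{\valvec}$. The identity above shows that strict inequality in the displayed condition corresponds exactly to $\agg{\valvec'} > \agg{\valvec}$, and the non-strict version to $\geq$. I would state the proof with matching strictness on both sides and remark that the two versions differ only on the measure-zero boundary where the welfare is unchanged.

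A second detail is models that return $\emptyset$. These have probability $0$ in both $\valvec$ and $\valvec'$ unless they are index $j$, so they contribute nothing to either sum. If $\valvec_j$ or $\valvec_j'$ is $\emptyset$, then the corresponding $\ppick_j$ term is zero and the same identity holds.
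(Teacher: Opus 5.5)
Your proof is correct and follows essentially the paper's argument: use $\sum_i \ppick_i = 1$ to write $\ppick_j(\valvec')$ in terms of the other probabilities, substitute, and rearrange, with no BTL-specific property needed. You are also right to flag the strict/weak mismatch; the paper's own derivation carries $\geq$ throughout and switches to $>$ only in its last line, so matching strictness is the correct fix, but drop your ``measure-zero'' aside, since it fails for general choice functions (e.g.\ under optimal aggregation $\ppick_j$ vanishes for a whole interval of values of $\valvec_j'$, where both sides are $0$ and welfare is unchanged).
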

\begin{proof}
This can be seen simply by rewriting the value function:
$$\agg{\valvec}= \valvec_j \cd \ppick_j(\valvec) + \sum_{i \in \valvec, i\ne j} \ppick_i(\valvec) \cd \valvec_i$$
and 
$$\agg{\valvec'} = \valvec_j'\cd \ppick_j(\valvec') + \sum_{i \in \valvec, i \ne j} \ppick_i(\valvec') \cd \valvec_i$$
Note that we can write: 
\begin{align*}
    \ppick_j(\valvec') = &   \ppick_j(\valvec) + \ppick_j(\valvec') - \ppick_j(\valvec) \\
    = &  \ppick_j(\valvec) + 1-\sum_{i \ne j} \ppick_i(\valvec')  -1 + \sum_{i\ne j} \ppick_i(\valvec) \\
    = &  \ppick_j(\valvec) + \sum_{i \in \valvec, i \ne j} \ppick_i(\valvec) - \ppick_i(\valvec')
\end{align*}
The condition where the value of $\valvec'$ is larger is given by:  
\begin{align*}
\agg{\valvec'} \geq &  \agg{\valvec}\\
\valvec_j' \cd \ppick_j(\valvec') + \sum_{i \in \valvec, i \ne j} \ppick_i(\valvec') \cd \valvec_i \geq & \valvec_j\cd \ppick_j(\valvec) + \sum_{i \in \valvec, i \ne j} \ppick_i(\valvec) \cd \valvec_i\\
\valvec_j' \cd \p{\ppick_j(\valvec) + \sum_{i \in \valvec, i \ne j} \ppick_i(\valvec) - \ppick_i(\valvec')} + \sum_{i \in \valvec, i \ne j} \ppick_i(\valvec') \cd \valvec_i \geq &  \valvec_j\cd \ppick_j(\valvec) + \sum_{i \in \valvec, i \ne j} \ppick_i(\valvec) \cd \valvec_i\\
\valvec_j' \cd \p{\sum_{i \in \valvec, i \ne j} \ppick_i(\valvec) - \ppick_i(\valvec')} + \sum_{i \in \valvec, i \ne j} \p{\ppick_i(\valvec') -  \ppick_i(\valvec)} \cd \valvec_i \geq &  \valvec_j\cd \ppick_j(\valvec)  - \valvec_j' \cd \ppick_j(\valvec)\\
\sum_{i \in \valvec, i \ne j} (\ppick_i(\valvec) - \ppick_i(\valvec')) \cd (\valvec_j' - \valvec_i) > & (\valvec_j - \valvec_j') \cd \ppick_j(\valvec)
\end{align*}
giving the desired condition. 
\end{proof}

\begin{restatable}{lemma}{monotoneppos}\label{lem:monotoneppos}
Consider any continuous increasing function $f(\valvec)$. Then, if $f$ is monotone it must have $\ppick_j(\valvec) > 0$ always. 
\end{restatable}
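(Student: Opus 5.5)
The plan is to argue by contradiction, in the spirit of the intuition given for Theorem \ref{thrm:mutexclusive}. I read the statement as follows: we have a continuous choice-based aggregation function $\ppick(\cdot)$ that is increasing in an item's own value and satisfies anonymity. If it is monotone in the sense of Definition \ref{def:monotone}, then every item that returns a response has strictly positive probability of being picked. So suppose $\ppick_j(\valvec)=0$ for some $\valvec$ and some index $j$. We will build a one-parameter family of value vectors in which raising $\valvec_j$ strictly lowers consumer welfare.

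\textbf{Step 1 (reduce to a clean configuration).} Since $\ppick_j$ is increasing in $\valvec_j$, lowering $\valvec_j$ keeps $\ppick_j=0$. I would use this, together with the freedom to choose the other values, to reach a configuration where every other item $i\neq j$ has the same value $c$ and $\valvec_j=t_0<c$ with $\ppick_j=0$. This reduction is the step I expect to be delicate, and it may need the substitutability assumption used in Theorem \ref{thrm:mutexclusive}. If it does not go through, I would instead run the argument below with $c$ replaced by $\min_{i\ne j}\valvec_i$, assuming $\valvec_j$ lies below that minimum.

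\textbf{Step 2 (locate the threshold).} Consider the path $\valvec(t)$ in which item $j$ has value $t$ and all other items are held fixed at $c$. At $t=c$ all $\nagents$ items tie. By anonymity they receive equal probability, so $\ppick_j(\valvec(c))=1/\nagents>0$. Define
\[
t^*=\sup\{t\le c:\ \ppick_j(\valvec(t))=0\}.
\]
Then $t^*\ge t_0$, and by continuity $\ppick_j(\valvec(t^*))=0$. Continuity together with $\ppick_j(\valvec(c))>0$ gives $t^*<c$. Consequently, for every $t\in(t^*,c)$ we have $\ppick_j(\valvec(t))>0$.

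\textbf{Step 3 (welfare drops).} At $t^*$ item $j$ is never chosen, so all probability goes to items of value $c$ and $\agg{\valvec(t^*)}=c$. For any $t\in(t^*,c)$,
\[
\agg{\valvec(t)}=\ppick_j(\valvec(t))\,t+\bigl(1-\ppick_j(\valvec(t))\bigr)\,c<c .
\]
The inequality is strict because $\ppick_j(\valvec(t))>0$ and $t<c$. Equivalently, in Lemma \ref{lem:septthreea} the term $(\valvec_j-\valvec_j')\ppick_j(\valvec)$ vanishes, while every summand $(\ppick_i(\valvec)-\ppick_i(\valvec'))(\valvec_j'-\valvec_i)$ is nonpositive and at least one is strictly negative. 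Thus a task-wise weak increase in values, raising $\valvec_j$ from $t^*$ to $t$, strictly decreases welfare. This contradicts monotonicity, so $\ppick_j>0$ must hold everywhere.
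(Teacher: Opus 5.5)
Your Steps 2 and 3 are correct, but Step 1 is a genuine gap as written. To pass from an arbitrary $\valvec$ with $\ppick_j(\valvec)=0$ to a configuration where every $i\neq j$ sits at a common value $c>\valvec_j$, you need to know how $\ppick_j$ responds when the \emph{other} coordinates move. Continuity, anonymity and own-monotonicity say nothing about that: raising some $\valvec_i$ could push $\ppick_j$ above $0$ before you reach the clean configuration. The missing ingredient is substitutability, in the form stated in words in Definition \ref{def:substitutability} and used in the paper's proof: raising one item's value weakly lowers every other item's probability. With it, Step 1 takes one line. Pick $c\ge\max_{i\neq j}\valvec_i$ with $c>\valvec_j$, and raise the coordinates $i\neq j$ to $c$ one at a time. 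Each raise weakly lowers $\ppick_j$, so it stays at $0$, and Steps 2--3 then go through verbatim.

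Your fallback does not avoid this, for two reasons. First, with $c=\min_{i\neq j}\valvec_i$, at $t=c$ anonymity only equates $\ppick_j$ with the probability of the minimal other item(s), which may itself be $0$. So nothing forces a threshold $t^*<c$. The paper gets positivity at this tie only by assuming the second-smallest item has positive probability and invoking substitutability plus conservation of probability. Second, $\agg{\valvec(t^*)}$ is then a weighted average of unequal values rather than $c$. Raising $\valvec_j$ could shift mass among the other items toward higher values. Concluding that welfare drops therefore needs each $\ppick_i$, $i\neq j$, to weakly decrease, which is substitutability again. The same caveat applies to your aside that every summand in Lemma \ref{lem:septthreea} is nonpositive. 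In the clean configuration you do not need it, because the summands add up to $(t-c)\,\ppick_j(\valvec(t))<0$.

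Once Step 1 is repaired, your proof is a correct and somewhat cleaner route than the paper's. The paper assumes the zero-probability item is the minimum and the only zero-probability item, and raises it to tie with the second-smallest. It uses anonymity, substitutability and conservation to get positive probability at that tie. It then applies Lemma \ref{lem:septthreea}, using substitutability a second time to sign every summand. By equalizing all the other values, you get $\ppick_j=1/\nagents$ at the full tie directly from anonymity. Welfare along your path is the convex combination $\ppick_j t+(1-\ppick_j)c$, so you need no sign information about the individual $\ppick_i$. Substitutability is used only once, in the reduction. That reduction also covers configurations with several zero-probability items, or with $j$ not the minimum, which the paper's argument handles only by assumption.
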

\begin{proof}
This proof will work by considering a hypothetical vector of values $\valvec$ and showing that monotonicity is impossible to guarantee if we allow $\ppick_j(\valvec) =0$. Specifically, we will construct a separate vector $\valvec'$ that is identical to $\valvec$ except in index $j$. We will rely on the result from Lemma \ref{lem:septthreea}, which says that such a vector has $\agg{\valvec'}\geq \agg{\valvec}$ if and only if: 
$$\sum_{i \in \abs{\valvec}, i \ne j} (\ppick_i(\valvec) - \ppick_i(\valvec')) \cd (\valvec_j'-\valvec_i) >  (\valvec_j-\valvec_j')\cd \ppick_j(\valvec)$$

Consider a hypothetical $\valvec$ with a minimal element $\valvec_1$ and $\ppick_1(\valvec) = 0$. Assume that $1$ is the only element with $\ppick_1(\valvec)=0$. In particular, consider the second smallest element $2$, and assume that $\ppick_2(\valvec) >0$.

Consider the hypothetical vector $\valvec'$ with $\valvec_1' = \valvec_2' = \valvec_2$. Because we assumed anonymity, we require $\ppick_1(\valvec') = \ppick_2(\valvec')$. Note by substitutability, we must have that increasing $\valvec_1$ can only keep constant or decrease the probability of other elements $\valvec_i$ being picked. This means that we must have $\ppick_1(\valvec') = \ppick_2(\valvec')>0$: if this is not the case, then by conservation of total probability, we must have $\ppick_j(\valvec') > \ppick_j(\valvec)$, which violates substitutability. 

Given, $\ppick_1(\valvec') = \ppick_2(\valvec')>0$, by continuity, there must exist a point $\valvec_1 < \valvec_1'' < \valvec_2$ such that $\ppick_1(\valvec'') > 0$. Note that this means that $\ppick_1(\valvec'') > \ppick_1(\valvec) = 0$. We will again show that monotonicity is violated here. 

We note that: 
\begin{enumerate}
    \item By assumption, $\valvec_j  < \valvec_j'$ and so $\valvec_j - \valvec_j'< 0$. 
    \item By assumption, for all other $i \ne j$, we have $\valvec_j'< \valvec_i$, and so $\valvec_j' - \valvec_i < 0$. 
    \item By substitutability (Def. \ref{def:substitutability}), $\ppick_i(\valvec) - \ppick_i(\valvec') \geq 0$ for all $i \ne j$.  
\end{enumerate}
Considering the inequality from Lemma \ref{lem:septthreea}, we know that $\agg{\valvec'} \geq \agg{\valvec}$ exactly whenever: 
\begin{equation}\label{eq:necessarymonotone}
\sum_{i \in \valvec, i \ne j} (\ppick_i(\valvec) - \ppick_i(\valvec')) \cd (\val_j'-\val_i) >  (\val_j-\val_j')\cd \ppick_j(\valvec)
\end{equation}
For the lefthand side, property 2 tells us that $\val_j' - \val_i$ is negative always and property 3 tells us that $\ppick_i(\valvec) - \ppick_i(\valvec') \geq 0$. Note that the LHS can only be 0 if $\ppick_i(\valvec) = \ppick_i(\valvec')$ $\forall i \ne j$: however, this is impossible by the requirement that $\ppick_j(\valvec') > \ppick_j(\valvec)$ (given by the continuity of the selection function). Thus, we know that the lefthand side of Equation \ref{eq:necessarymonotone} is always strictly negative. Thus, in order for the function to be monotone, we must have that the RHS of Equation \ref{eq:necessarymonotone} is always strictly negative. By Property 1, we know that $\val_j - \val_j'<0$, and so we must have $\ppick_j(\valvec) > 0$ strictly. 
\end{proof}

\subsection{Tension between monotonicity and model creation}\label{app:tensionmonotonecreate}

\begin{definition}[Substitutability]\label{def:substitutability}
A selection function $\ppick(\cdot)$ satisfies \emph{substitutability} if increasing the value of an item $\valvec_{\model, \task}$ either holds constant or decreases the probability of every other item being picked. That is, for $\valvec_{\task}, \valvec_{\task'}$ identical except with $\valvec_{\model, \task} < \valvec_{\model, \task}'$, we must have $\ppick_{\model}(\valvec_{\task}) \leq \ppick_{\model}(\valvec_{\task}') $. 
\end{definition}

\begin{definition}[Anonymity]\label{def:anon}
A selection function $\ppick(\cdot)$ satisfies \emph{anonymity} if the probability of picking an item depends only on its value.
\end{definition}
\ifarxiv 
\mutexclusive*
\begin{proof}
This comes almost directly from Lemma \ref{lem:monotoneppos}. First, suppose that the function is always monotone in value. Then, by Lemma \ref{lem:monotoneppos}, we know that there must be a positive probability of picking any item within $\valvec$. Then, if we create a $\valvec'$ with a new element such that the vector becomes $\valvec\oplus \val$, where it is smaller than the current minimal element $\val < \min_{i \in \abs{\valvec}} \valvec_i$, then we must have $\ppick_{\nagents+1}(\valvec')>0$. Because total probability is conserved, we must see that the probability of picking every other item $i \ne j$ weakly decreases. Because every other item in $\valvec'$ has strictly higher value than $\val$, this means that we must have $\agg{\valvec'} < \agg{\valvec}$, and so there are strict harms from adding in another agent. \\
Next, we assume that there are always weak benefits to adding in another agent: that is, $\agg{\valvec \oplus\val} > \agg{\valvec}$ always. In this case, we must have that for an element $\val < \min_{i \in \abs{\valvec}} \valvec_i$, if $\agg{\valvec \oplus\val} > \agg{\valvec}$, then we must have $\ppick_{\nagents+1}(\valvec) = 0$. By the conditions of Lemma \ref{lem:monotoneppos}, this must mean that the associated selection function cannot be monotone. 
\end{proof}
\else 

Here, we want to highlight a tension between consumer welfare increase from model \emph{replacement} (monotonicity) and consumer welfare increase from model \emph{creation}. That is, for any choice functions satisfying natural properties (substitutability and anonymity, Definitions \ref{def:substitutability},\ref{def:anon}), these two properties cannot be simultaneously satisfied. 

\begin{definition}[Substitutability]\label{def:substitutability}
A selection function $\ppick(\cdot)$ satisfies \emph{substitutability} if increasing the value of an item $\valvec_{\model, \task}$ either holds constant or decreases the probability of every other item being picked. That is, for $\valvec_{\task}, \valvec_{\task'}$ identical except with $\valvec_{\model, \task} < \valvec_{\model, \task}'$, we must have $\ppick_{\model}(\valvec_{\task}) \leq \ppick_{\model}(\valvec_{\task}') $. 
\end{definition}

\begin{definition}[Anonymity]\label{def:anon}
A selection function $\ppick(\cdot)$ satisfies \emph{anonymity} if the probability of picking an item depends only on its value.
\end{definition}

\begin{restatable}{theorem}{mutexclusive}\label{thrm:mutexclusive}
These two properties are mutually exclusive for choice-based, continuous aggregation functions satisfying substitutability and anonymity\footnote{Informally, a function satisfies anonymity if the probability of picking an item depends only on its value, and substitutability if the probability of picking an item only stays constant or increases when its value increases.}: 1) the function is monotone, and 2) there are always weak benefits to adding in another agent. 
\end{restatable}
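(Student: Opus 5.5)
The plan is to prove both directions of incompatibility, following the route already used in the arXiv version: everything reduces to Lemma \ref{lem:monotoneppos} plus a direct computation with the welfare function $\agg{\valvec} = \sum_i \ppick_i(\valvec)\valvec_i$.

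\textbf{Direction 1: monotone implies creation can hurt.} Assume the choice function is monotone. By Lemma \ref{lem:monotoneppos}, every item receives strictly positive probability in every vector, including the smallest one. Fix any $\valvec$ and append a new value $\val < \min_i \valvec_i$, giving $\valvec' = \valvec \oplus \val$. Then $\ppick_{\nagents+1}(\valvec') > 0$. Since probabilities sum to one, the mass $\ppick_{\nagents+1}(\valvec')$ is taken away from items that all have strictly higher value than $\val$. To make this precise, I would write
\[
\agg{\valvec'} - \agg{\valvec} = \sum_{i \le \nagents} \bigl(\ppick_i(\valvec') - \ppick_i(\valvec)\bigr)\valvec_i + \ppick_{\nagents+1}(\valvec')\,\val .
\]
The first sum has total coefficient $-\ppick_{\nagents+1}(\valvec')$. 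Two cases follow. If every coefficient $\ppick_i(\valvec') - \ppick_i(\valvec)$ is nonpositive, the first sum is at most $-\ppick_{\nagents+1}(\valvec')\min_i \valvec_i$. Adding $\ppick_{\nagents+1}(\valvec')\val$ then gives a strictly negative total.

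If some $\ppick_i$ rises, I cannot bound the sum this way. So I would exhibit a specific instance instead. Take all existing values equal to a common $w$; by anonymity they share the probability $1-\ppick_{\nagents+1}(\valvec')$. The welfare is then $(1-p)w + p\val < w = \agg{\valvec}$, with $p = \ppick_{\nagents+1}(\valvec') > 0$. Since property 2 asks that adding an agent is \emph{always} weakly beneficial, one such counterexample suffices.

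\textbf{Direction 2: always-beneficial creation implies non-monotone.} Assume adding an agent is always weakly beneficial. Take a vector of $\nagents \ge 1$ equal values $w$ and append $\val < w$. The computation above gives welfare $(1-p)w + p\val$, which is at least $w$ only if $p = 0$. So low-valued items must receive probability $0$. Lemma \ref{lem:monotoneppos} says monotone functions require strictly positive probability everywhere, so the function cannot be monotone.

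\textbf{Main obstacle.} The delicate point is that Lemma \ref{lem:monotoneppos}'s proof assumes the zero-probability item is unique and that the second-smallest item has positive probability. In my construction the equal-value items share mass $1>0$ by anonymity, so the hypothesis holds. I would check this explicitly and invoke continuity and substitutability as that lemma does. I would also verify that the equal-values instance is admissible, so that the reduction in Direction 1 needs no general substitutability bound.
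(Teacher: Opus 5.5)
Your proposal is correct and follows essentially the same route as the paper: both directions reduce to Lemma~\ref{lem:monotoneppos} by appending a value below the current minimum and comparing welfare. Your refinement is to specialize to an instance where all existing values equal $w$, so that $\agg{\valvec \oplus \val} - \agg{\valvec} = \ppick_{\nagents+1}(\valvec \oplus \val)\,(\val - w)$ holds exactly; this is worthwhile because it removes the paper's unjustified step that conservation of probability forces every other item's probability to weakly decrease, and it also means your case split in Direction~1 can simply be dropped in favor of that instance.
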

\begin{proof}
This comes almost directly from Lemma \ref{lem:monotoneppos}. First, suppose that the function is always monotone in value. Then, by Lemma \ref{lem:monotoneppos}, we know that there must be a positive probability of picking any item within $\valvec$. Then, if we create a $\valvec'$ with a new element such that the vector becomes $\valvec\oplus \val$, where it is smaller than the current minimal element $\val < \min_{i \in \abs{\valvec}} \valvec_i$, then we must have $\ppick_{\nagents+1}(\valvec')>0$. Because total probability is conserved, we must see that the probability of picking every other item $i \ne j$ weakly decreases. Because every other item in $\valvec'$ has strictly higher value than $\val$, this means that we must have $\agg{\valvec'} < \agg{\valvec}$, and so there are strict harms from adding in another agent. \\
Next, we assume that there are always weak benefits to adding in another agent: that is, $\agg{\valvec \oplus\val} > \agg{\valvec}$ always. In this case, we must have that for an element $\val < \min_{i \in \abs{\valvec}} \valvec_i$, if $\agg{\valvec \oplus\val} > \agg{\valvec}$, then we must have $\ppick_{\nagents+1}(\valvec) = 0$. By the conditions of Lemma \ref{lem:monotoneppos}, this must mean that the associated selection function cannot be monotone. 
\end{proof}

\fi 
\sufficientmonotone*
\begin{proof}
First, we note that perfect aggregation ($\beta=0$) is always monotone because increasing elements of a list $\valvec$ only increases the maximum element in that list. Similarly, random aggregation ($\beta=\infty$) is always monotone because increasing elements of a list $\valvec$ only increases the average of that list. 

Next, we turn to analyzing the conditions for monotonicity with $\beta \in (0, \infty)$. A function is monotone whenever the below quantity is positive for all $i \in \abs{\valvec}$: 
\begin{align*}
\frac{\partial}{\partial \valvec_i} \agg{\valvec} =  & \sum_{j \in \abs{\valvec}}\frac{\partial}{\partial \valvec_i} \valvec_j \cd \ppick_j(\valvec)\\
= & \ppick_i(\valvec) + \valvec_i \cd \frac{\partial}{\partial \valvec_i} \ppick_i(\valvec) + \sum_{j \ne i} \valvec_j \cd \frac{\partial}{\partial \valvec_i} \ppick_j(\valvec)\\
= &  \ppick_i(\valvec) + \valvec_i \cd \ppick_i(\valvec) \cd (1-\ppick_i(\valvec))/\beta - \sum_{j\ne i} \valvec_j \cd \ppick_i(\valvec) \cd \ppick_j(\valvec)/\beta \tag{Properties of BTL}\\
= & \ppick_i(\valvec)\cd\p{1 + \sum_{j\ne i}(\valvec_i-\valvec_j) \ppick_j(\valvec)/\beta}
\end{align*}
The above quantity is positive so long as
\begin{align*}
1 + \sum_{j\ne i}(v_i-v_j) p_j(\valvec)/\beta& >0\\
\beta + \sum_{j\ne i}(v_i-v_j) p_j(\valvec) & >0\\
\beta &> \sum_{j\ne i}(v_j-v_i) \cd p_j(\valvec)
\numberthis \label{eq:moncondition}
\end{align*}
If $\valvec_i \geq \valvec_j$ for all $i \ne j$ (that is, if element $i$ is the largest) then Equation \ref{eq:moncondition} is automatically satisfied for player $i$ because the righthand side is negative. Next, we will consider cases where there are some agents such that $\valvec_i < \valvec_j$. Note that as $\beta$ increases, $\ppick_j(\valvec)$ may increase or decrease, but is upper bounded by $1$, and we have $\sum_{j\ne i} \ppick_j(\valvec) \leq 1$. 
Thus, we have: 
$$\sum_{j\ne i}(v_j-v_i) \cd p_j(\valvec) \leq \max_{j \ne i} \valvec_j - \valvec_i$$
For any agent $i$, for any $\beta> \beta_i^* = \max_{j \ne i} \valvec_j - \valvec_i$, Equation \ref{eq:moncondition} is guaranteed to be positive, and the condition is monotone. 
\end{proof}

\begin{figure}
\centering 
    \includegraphics[width=3.5in]{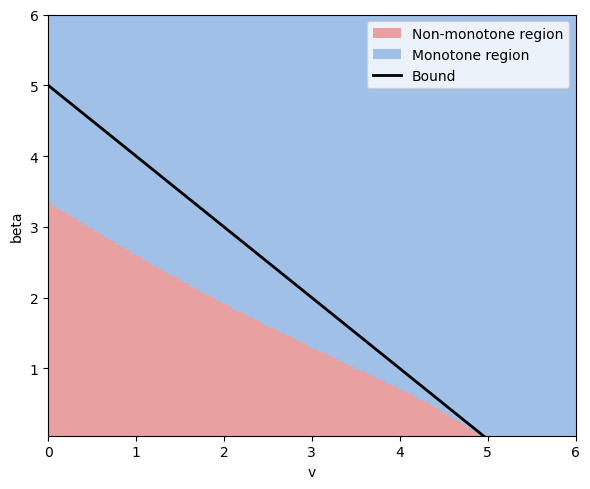}
    \caption{Figure demonstrating the bound in Lemma \ref{lem:sufficientmonotone}, where there is a set of values $[1, 3, 5, \val]$, with $\val$ varying along the $x$ axis and $\beta$ varying along the $y$ axis. {\color{blue}Blue} regions denote settings where BTL aggregation is in the monotone region with respect to $\val$, and {\color{red}red} regions denote settings where it is not. The black line gives the bound in Lemma \ref{lem:sufficientmonotone}: above this line BTL aggregation is guaranteed to be in the monotone region with respect to $\val$. }
    \label{fig:monotoneregion}
\end{figure}

\section{Proofs for Section \ref{sec:incententry}}\label{app:incententry}

\begin{lemma}\label{lem:addnew}
The change in consumer welfare when a new agent with value $\val'$ joins a task with existing list of values $\valvec$ is given by: 
$$\agg{\valvec \oplus \val'} - \agg{\valvec} =   \ppick_{\nagents+1}(\valvec \oplus \val') \cd (\valvec' - \agg{\valvec})$$
where $\ppick_{\nagents+1}(\cd)$ gives the probability that the new item is picked. 
\end{lemma}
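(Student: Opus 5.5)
The identity is pure algebra: expand both sides and use the fact that the choice probabilities on the enlarged list sum to one. Write $\valvec$ for the existing list of $\nagents$ values and $\valvec \oplus \val'$ for the enlarged list. By definition, $\agg{\valvec \oplus \val'} = \sum_{i \in [\nagents]} \ppick_i(\valvec \oplus \val')\,\valvec_i + \ppick_{\nagents+1}(\valvec \oplus \val')\,\val'$. The only nontrivial step is rewriting the first sum in terms of $\agg{\valvec}$.

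For BTL aggregation, or more generally any separable choice function as in Definition \ref{def:sep-def}, adding an item rescales the existing probabilities proportionally. With $E = \sum_{i} f(\valvec_i)$ we have
\[
\ppick_i(\valvec \oplus \val') = \frac{f(\valvec_i)}{E + f(\val')} = \ppick_i(\valvec)\cdot\frac{E}{E+f(\val')} = \ppick_i(\valvec)\,\bigl(1 - \ppick_{\nagents+1}(\valvec \oplus \val')\bigr),
\]
using $\ppick_{\nagents+1}(\valvec\oplus\val') = f(\val')/(E+f(\val'))$. This is the independence-of-irrelevant-alternatives property of BTL. Abstaining items have probability zero before and after, so they do not affect the computation. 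Hence $\sum_{i \in [\nagents]} \ppick_i(\valvec \oplus \val')\,\valvec_i = (1-\ppick_{\nagents+1}(\valvec\oplus\val'))\,\agg{\valvec}$.

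Substituting gives $\agg{\valvec \oplus \val'} = \agg{\valvec} - \ppick_{\nagents+1}(\valvec\oplus\val')\,\agg{\valvec} + \ppick_{\nagents+1}(\valvec\oplus\val')\,\val'$. Subtracting $\agg{\valvec}$ yields the claimed identity $\ppick_{\nagents+1}(\valvec\oplus\val')\,(\val' - \agg{\valvec})$.

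The one substantive point is the proportional-rescaling step. The result fails for general choice functions, such as the non-separable one in Definition \ref{ex:monotonefunction}. So I would state explicitly that the lemma is for BTL (the paper's standing assumption in this section), and check the degenerate case where every existing item abstains. In that case $\agg{\valvec}=0$ under the convention of an empty sum, $\ppick_{\nagents+1}=1$, and the identity still holds.
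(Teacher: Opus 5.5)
Your proof is correct and follows the paper's route. The paper also expands $\agg{\valvec \oplus \val'} - \agg{\valvec}$ and computes $\ppick_i(\valvec) - \ppick_i(\valvec \oplus \val')$ from the BTL exponentials via $\frac{1}{E} - \frac{1}{E + \exp(\val'/\beta)}$, which is your rescaling identity $\ppick_i(\valvec \oplus \val') = \ppick_i(\valvec)\,(1 - \ppick_{\nagents+1}(\valvec \oplus \val'))$ written out explicitly. Your extension to any separable $f$ with $E>0$ is a slightly cleaner generalization, and you are right that the identity fails for the non-separable function of Definition~\ref{ex:monotonefunction}.
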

\begin{proof}
Suppose that we have $\nagents = \abs{\valvec}-1$ agents initially. Then, we can calculate the change in value as:
\begin{align}
    \agg{\valvec \oplus \val'} - \agg{\valvec} = &\sum_{i \in [\nagents]} \valvec_i \cd \ppick_i(\valvec \oplus \val') + \val' \cd \ppick_{\nagents}(\valvec \oplus \val') - \sum_{i \in [\nagents]} \valvec_i \cd \ppick_i(\valvec)  \nonumber\\ 
    = & \val' \cd \ppick_{\nagents}(\valvec \oplus \val') -\p{\sum_{i \in [\nagents]} \valvec_i \cd \p{\ppick_i(\valvec)- \ppick_i(\valvec \oplus \val')} }\nonumber\\
    = &  \val' \cd \frac{\exp(\val'/\beta)}{E + \exp(\val'/\beta)} - \sum_{i \in [\nagents]}\valvec_i \cd \exp(\valvec_i/\beta)\p{\frac{1}{E}-\frac{1}{E + \exp(\val'/\beta)}}\nonumber\\
    = &  \val' \cd \frac{\exp(\val'/\beta)}{E + \exp(\val'/\beta)} - \sum_{i \in [\nagents]}\valvec_i \cd \exp(\valvec_i/\beta)\frac{\exp(\val'/\beta)}{E \cd (E + \exp(\val'/\beta)})\nonumber\\
    = & \frac{\exp(\val'/\beta)}{E + \exp(\val'/\beta)}\p{ \val'- \sum_{i \in [\nagents]}\valvec_i \cd \frac{\exp(\valvec_i/\beta)}{E}}\nonumber\\
    = & \ppick_{\nagents}(\valvec \oplus \val') \cd (\valvec' - \agg{\valvec})\label{eq:addinghelps}
\end{align}
as desired. 
\end{proof}

\begin{restatable}{lemma}{reallocatemoretwovalue}\label{lem:reallocatemoretwovalue}
Consider a pair of tasks where the existing allocation is $\valvec_1, \valvec_2$. A new agent has total value $\Val$ to allocate across both tasks, and is optimizing for consumer welfare. Then, the optimal arrangement is always to put all value on a single task and abstain on the other.
\end{restatable}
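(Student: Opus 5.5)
We reduce the two-task problem to a one-dimensional optimization using Lemma \ref{lem:addnew}. Write $x \in [0,\Val]$ for the value placed on task 1 and $\Val - x$ for the value placed on task 2. Define the marginal welfare gain on task $i$ from adding an item of value $y$ as
$$g_i(y) = \ppick_{\nagents+1}(\valvec_i \oplus y)\cd(y - \agg{\valvec_i}) = \frac{e^{y/\beta}}{E_i + e^{y/\beta}}\,(y - u_i),$$
where $E_i = \sum_k \exp(\valvec_{ik}/\beta)$ and $u_i = \agg{\valvec_i}$. Abstaining on task $i$ contributes exactly $0$. The total gain from a split is therefore
$$G(x) = \max\{g_1(x),0\} + \max\{g_2(\Val-x),0\},$$
since on each task the producer may either participate or abstain. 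The target is to show that $G$ is maximized at an endpoint $x\in\{0,\Val\}$, with the other task abstained.

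\textbf{Step 1: participate on only one task.} Any task with $g_i \le 0$ should be abstained on. If $g_1(x)\le 0$ (resp.\ $g_2(\Val-x)\le 0$), then all value is better moved to the other task, because $g_i$ is increasing on the region where it is positive. When $y>u_i$, both factors of $g_i$ are positive and increasing in $y$, so $g_i$ is increasing there. Consequently, whenever only one task is active, pushing all of $\Val$ onto it weakly improves $G$.

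\textbf{Step 2: both tasks active.} Suppose $g_1(x)>0$ and $g_2(\Val-x)>0$, so $x>u_1$ and $\Val-x>u_2$. We want to show that
$$g_1(x)+g_2(\Val-x) \le \max\{g_1(\Val),\, g_2(\Val)\}.$$
Each $g_i(y)$ is at most $y-u_i$, and the reallocated value gains at least linearly. Concretely,
$$g_1(\Val) \ge \ppick(\valvec_1\oplus x)\,(\Val-u_1) = \ppick(\valvec_1\oplus x)\,\big[(x-u_1)+(\Val-x)\big],$$
using that the pick probability increases in $y$. So $g_1(\Val) \ge g_1(x) + \ppick(\valvec_1\oplus x)(\Val-x)$. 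It therefore suffices to show
$$\ppick(\valvec_1\oplus x)(\Val - x) \ge g_2(\Val-x) = \ppick(\valvec_2\oplus(\Val-x))\,(\Val-x-u_2),$$
or the symmetric inequality with the roles of the two tasks swapped.

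\textbf{Main obstacle.} This comparison need not hold for both orderings, and that is the crux of the argument. The plan is to order the tasks so that task 1 is the one with the larger current pick probability for the new model, i.e. $\ppick(\valvec_1\oplus x) \ge \ppick(\valvec_2\oplus(\Val-x))$. The desired inequality then follows, because $\Val-x \ge \Val-x-u_2$ when $u_2\ge 0$. We can always choose such an ordering: label the two active tasks so that task 1 has the (weakly) larger pick probability at the given split. Moving all value to that task then yields $G(\Val)\ge G(x)$.

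If this labeling argument turns out to be delicate, a fallback is available. Use $u_i \ge 0$ and $\ppick \le 1$ to bound each term, and compare against the better endpoint via the averaging inequality $\max(a,b)\ge$ any convex combination. Combining Steps 1 and 2 shows that the optimum places all value on a single task and abstains on the other, or abstains on both if both endpoint gains are negative.
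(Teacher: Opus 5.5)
Your proposal is correct and follows essentially the same route as the paper, which also splits into the case where at least one task has non-positive gain (move all value to the other task, where the gain only grows) and the case where both gains are positive. In the latter case the paper likewise labels as task 1 the task with the larger pick probability at the current split and bounds $g_2(\Val - x)$ by $\ppick(\valvec_1\oplus x)(\Val - x)$ after dropping the $-u_2$ term; that labeling is a plain WLOG step, so your fallback paragraph is unnecessary.
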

\begin{proof}
Recall from Lemma \ref{lem:addnew} that when new value $\val'$ is added to a task with existing values $\valvec$, the change in consumer welfare is given by: 
$$\agg{\valvec \oplus \val'} - \agg{\valvec} =   \ppick_{\nagents}(\valvec \oplus \val') \cd (\valvec' - \agg{\valvec})$$
where $\ppick_{\nagents}(\cd)$ gives the probability that the new item is picked.
Thus, given an allocation $\val, \Val-\val$ across both tasks, the \emph{change} in consumer welfare across both tasks is given by: 
\begin{align}
    \ppick_{\nagents+1}(\valvec_1 \oplus \val) \cd (\val - \agg{\valvec_1}) + \ppick_{\nagents+1}(\valvec_2 \oplus \Val- \val) \cd (\Val-\val - \agg{\valvec_2}) \label{eq:valueobjs}
\end{align}
Our goal is to show that the objective in Equation \ref{eq:valueobjs} is maximized at one of the endpoints. Note that this objective is \emph{not} convex, and in particular can have multiple local maxima and minima. We will prove that Equation \ref{eq:valueobjs} is maximized at its endpoints by considering different values of $\val \in (0, \Val)$ and showing that the value obtained there must be lower. 

First, consider the setting where we have $\val$ such that $\val < \agg{\val_1}$ or $\Val-\val <\agg{\val_2}$: that is, at least one of the allocation is \emph{decreasing} the average value on the task. First, note that if the allocation is decreasing on both tasks, then the total change is negative: 
\begin{align*}
    \ppick_{\nagents+1}(\valvec_1 \oplus \val) \cd (\val - \agg{\valvec_1}) + \ppick_{\nagents+1}(\valvec_2 \oplus \Val- \val) \cd (\Val-\val - \agg{\valvec_2}) \leq 0\\
     \leq & \max_{i \in [1, 2]} \ppick_{\nagents+1}(\valvec_i \oplus \Val) \cd (\Val - \agg{\valvec_i})
\end{align*}
And so the condition is satisfied. Next, suppose that the allocation leads to negative change on exactly one task: WLOG, assume it occurs on task 1. Then, we know that: 
\begin{align*}
    \ppick_{\nagents+1}(\valvec_1 \oplus \val) \cd (\val - \agg{\valvec_1}) + \ppick_{\nagents+1}(\valvec_2 \oplus \Val- \val) \cd (\Val-\val - \agg{\valvec_2}) < & \ppick_{\nagents+1}(\valvec_1 \oplus \val) \cd 0 \\
    & + \ppick_{\nagents+1}(\valvec_2 \oplus \Val- \val) \cd (\Val-\val - \agg{\valvec_2})\\
    = & \ppick_{\nagents+1}(\valvec_2 \oplus \Val- \val) \cd (\Val-\val - \agg{\valvec_2})\\
     <  & \ppick_{\nagents+1}(\valvec_2 \oplus \Val) \cd (\Val - \agg{\valvec_2})\\
     \leq & \max_{i \in [1, 2]} \ppick_{\nagents+1}(\valvec_i \oplus \Val) \cd (\Val - \agg{\valvec_i})
\end{align*}
again, as desired. 
Finally, we consider the case where  $\val \geq  \agg{\val_1}$ and $\Val-\val \geq \agg{\val_2}$: the allocation leads to positive benefits on both tasks. WLOG, assume that the task has a higher chance of being picked on task 1 than 2: that is:
\begin{equation}
\ppick_{\nagents+1}(\valvec_1 \oplus \val) \geq  \ppick_{\nagents+1}(\valvec_2 \oplus \Val-\val) \label{eq:larger}
\end{equation}
Then, we know that:  

\begin{align*}
    \ppick_{\nagents+1}(\valvec_1 \oplus \val) \cd (\val - \agg{\valvec_1}) + \ppick_{\nagents+1}(\valvec_2 \oplus \Val- \val) \cd (\Val-\val - \agg{\valvec_2}) <& \ppick_{\nagents+1}(\valvec_1 \oplus \val) \cd (\val - \agg{\valvec_1}) \\
     & + \ppick_{\nagents+1}(\valvec_1 \oplus \Val- \val) \cd (\Val-\val) \\
     &+ \ppick_{\nagents+1}(\valvec_2 \oplus \Val- \val) \cd (- \agg{\valvec_2}) \tag{Eq. \ref{eq:larger}}\\
    =& \ppick_{\nagents+1}(\valvec_1 \oplus \val) \cd (\Val - \agg{\valvec_1})\\
     & - \ppick_{\nagents+1}(\valvec_2 \oplus \Val- \val) \cd \agg{\valvec_2}\\
     <  & \ppick_{\nagents+1}(\valvec_1 \oplus \Val) \cd (\Val - \agg{\valvec_1})\\
     \leq & \max_{i \in [1, 2]} \ppick_{\nagents+1}(\valvec_i \oplus \Val) \cd (\Val - \agg{\valvec_i})
\end{align*}
again, as desired. 

\end{proof}

\optvalueb*
\begin{proof}
First, Lemma \ref{lem:addnew} tells us that when a new value $\val'$ is added to a task with existing list of values $\valvec$, the change in consumer welfare is given by: 
$$\agg{\valvec \oplus \val'} - \agg{\valvec} =   \ppick_{\nagents}(\valvec \oplus \val') \cd (\valvec' - \agg{\valvec})$$
where $\ppick_{\nagents}(\cd)$ gives the probability that the new item is picked.

This immediately handles the setting where $\Val < \agg{\valvec}$: if total value is less than current value, any possible allocation would harm consumer welfare, and the best action is to entirely abstain from entering the market. 

Next, Lemma \ref{lem:reallocatemoretwovalue} tells us that the optimal allocation for maximizing value must place all value on a single task. 

Finally, we can directly see which allocation maximizes consumer welfare: it is whichever task $i$ maximizes: 
\begin{align*}
    \ppick_{\nagents+1}(\valvec_i \oplus \Val) \cd (\Val - \agg{\valvec_i})
\end{align*}
\end{proof}

\begin{restatable}{lemma}{reallocatemoreb}\label{lem:reallocatemoreb}
Consider a pair of tasks where the existing allocation is $\valvec_1, \valvec_2$. A new agent has total value $\Val$ to allocate across both tasks, and is optimizing for total winrate. Then, the optimal arrangement for total winrate with BTL by the new agent is given by:
\begin{itemize}
    \item If $\Val< \beta \cd \log( E_1 \cd E_2)$, where $E_i =\sum_{j \in \abs{\valvec}} \exp(\valvec_{j, i}/\beta))$, have the allocation maximally anti-correlated, with all value on the task with smaller $E_i$ value: $\val_1 = \Val, \val_2 = 0$ (or vice-versa). 
    \item If $\Val\geq  \beta \cd \log( E_1 \cd E_2)$, where $E_i =\sum_{j \in \abs{\valvec}} \exp(\valvec_{j, i}/\beta))$, the optimal arrangement is to have the allocation such that winrate is equal across both tasks.
\end{itemize}
\end{restatable}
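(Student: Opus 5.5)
Throughout, write $a = \beta\log E_1$ and $b = \beta\log E_2$, and let $\sigma(u) = 1/(1+e^{-u})$ be the logistic function. If the new agent puts $\val$ on task 1 and $\Val-\val$ on task 2, BTL gives winrates
\[
\frac{\exp(\val/\beta)}{E_1+\exp(\val/\beta)} = \sigma(u_1), \qquad \frac{\exp((\Val-\val)/\beta)}{E_2+\exp((\Val-\val)/\beta)} = \sigma(u_2),
\]
where $u_1 = (\val-a)/\beta$ and $u_2 = (\Val-\val-b)/\beta$. So I would maximize the one-variable function $f(\val) = \sigma(u_1)+\sigma(u_2)$ over $\val\in[0,\Val]$. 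The threshold in the statement is exactly $\Val = a+b$. The whole proof reduces to a critical-point analysis of $f$, using three facts: $\sigma'$ is even, $\sigma'$ is strictly decreasing in $|u|$, and $\sigma''(u)<0$ exactly when $u>0$.

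\textbf{Interior critical points.} I would first compute $f'(\val) = (\sigma'(u_1)-\sigma'(u_2))/\beta$. By the properties of $\sigma'$, $f'(\val)=0$ forces $|u_1|=|u_2|$, so either $u_1=u_2$ or $u_1=-u_2$.
\begin{itemize}
\item The case $u_1=-u_2$ rearranges to $\Val=a+b$. This is the boundary case, which I would treat separately: there $f$ is constant, so every allocation is optimal.
\item Otherwise the only interior critical point is the equal-winrate allocation $\val^*=(\Val+a-b)/2$. At $\val^*$ both coordinates equal $u^* = (\Val-a-b)/(2\beta)$, and $f''(\val^*) = 2\sigma''(u^*)/\beta^2$.
\end{itemize}
This is negative exactly when $u^*>0$, that is, when $\Val > \beta\log(E_1E_2)$. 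So the equal-winrate point is a strict local maximum above the threshold and a strict local minimum below it.

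\textbf{Case $\Val > \beta\log(E_1E_2)$.} If $\val^*\in(0,\Val)$, then $f$ is continuous on $[0,\Val]$ with a unique critical point, and that point is a local maximum. Hence $f$ increases and then decreases, so $\val^*$ is the global maximum and winrate is equalized. If instead $\val^*\notin(0,\Val)$, then $f$ is monotone and the optimum is an endpoint. I would need to argue this still fits the statement's conclusion, reading "equal winrate" as "as equal as the constraints allow", or else note that $\val^*\in[0,\Val]$ iff $|a-b|\le \Val$. I would flag this as a technical caveat to check against the intended statement.

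\textbf{Case $\Val < \beta\log(E_1E_2)$.} Any interior critical point is a minimum, so the maximum is at an endpoint. It remains to compare
\[
\sigma((\Val-a)/\beta)+\sigma(-b/\beta) \quad\text{with}\quad \sigma(-a/\beta)+\sigma((\Val-b)/\beta).
\]
Suppose $a\le b$. Consider $h(c)=\sigma((\Val-c)/\beta)-\sigma(-c/\beta)$; the first endpoint is better exactly when $h(a)\ge h(b)$. I would try to show this using $\Val<a+b$. Alternatively, I would use the symmetry $\sigma(x)+\sigma(y)$ versus $\sigma(x')+\sigma(y')$ with $x+y=x'+y'=(\Val-a-b)/\beta<0$. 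Here $(x,y)$ is more spread than $(x',y')$ when $a\le b$; on the negative-sum side, $\sigma(x)+\sigma(y)$ with fixed sum is increasing in the spread $|x-y|$, because $\sigma'$ is decreasing in $|u|$.

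\textbf{Main obstacle.} I expect this endpoint comparison to be the hardest step. Its sign is delicate when $\Val$ is small relative to $a,b$, and I would first verify the spread argument, possibly with a Karamata-type or majorization argument.
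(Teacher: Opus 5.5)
Your proposal is correct, and its skeleton matches the paper's proof. The paper also writes the derivative as $(p_1(1-p_1)-p_2(1-p_2))/\beta$, with $p_i$ the new agent's winrate on task $i$. It observes that this vanishes only at equal winrates or at $p_1=1-p_2$; the latter forces $V=\beta\log(E_1E_2)$ and makes total winrate constant. It then reads off the sign of $2p^*(1-p^*)(1-2p^*)/\beta^2$ at the equal-winrate point, which is your $2\sigma''(u^*)/\beta^2$.

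The one genuinely different step is the endpoint comparison below the threshold. The paper compares the two endpoint sums algebraically. With $x=e^{V/\beta}$, their difference factors as $(E_2-E_1)(x-1)(E_1E_2-x)$ divided by a positive quantity, so the paper's conclusion needs $1<x<E_1E_2$, a sign check it leaves implicit. Your fixed-sum spread argument is already complete as written. For $s<0$ and $d>0$ we have $|s/2+d|<|s/2-d|$, so $d\mapsto\sigma(s/2+d)+\sigma(s/2-d)$ is strictly increasing. Your route also explains why the threshold governs this comparison: below it the argument sum is negative, so spreading helps, while above it the endpoint ranking reverses.

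The feasibility caveat you flag in the above-threshold case is a point the paper does not address, and it closes easily. Values are nonnegative, so $E_i\ge 1$ and $a,b\ge 0$. Hence $V>a+b\ge|a-b|$ puts $v^*$ strictly inside $(0,V)$, and your increase-then-decrease argument gives the global maximum.
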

\begin{proof}
Denote the new agent's allocation on tasks by $\val_1, \Val-\val_1$. Denote the new agent by index $c$ and denote $E_i = \sum_{j \in \abs{\valvec_i}} \exp(\val_j/\beta)$. Then,  new agent $c$ winning on a given task with its allocated value $\val_1$ is given by the BTL function:
$$p_c(\valvec_i \oplus \val_i) = \frac{\exp(\val_i/\beta)}{E_i + \exp(\val_i/\beta)}$$
The total winrate for agent $c$ across the two tasks is:
$$W(v_1) = p_c(\valvec_1 \oplus \val_1) + p_c(\valvec_2 \oplus \Val- \val_1) = \frac{\exp(\val_1/\beta)}{E_1 + \exp(\val_1/\beta)} + \frac{\exp((\Val - \val_1)/\beta)}{E_2 + \exp((\Val - \val_1)/\beta)}$$
Next, we will analyze this function and show when it is maximized at the endpoints, rather than towards the center. First, note that the derivative of $W$ with respect to $\val_1$ is:
$$\frac{\partial W}{\partial \val_1} =  p_c'(\valvec_1 \oplus \val_1) - \ppick_c'(\valvec_2 \oplus \Val- \val_1) = (p_c(\valvec_1 \oplus \val_1) \cd (1-p_c(\valvec_1 \oplus \val_1)) -\ppick_c(\valvec_2 \oplus \Val- \val_1) \cd (1-\ppick_c(\valvec_2 \oplus \Val- \val_1)))\cd 1/\beta $$
And thus $\frac{\partial}{\partial \val_1} W(\val_1)$ is 0 either a) when $p_c(\val_1) = p_c(\Val-\val_1)$  or b) when $p_c(v_1) = 1-p_c(V-v_1)$. We will first show that $b)$ is only possible if $\Val =\log(E_1 \cd E_2)$. This occurs when: 
\begin{align*}
    p_c(\valvec_1 \oplus \val_1) =& 1-\ppick_c(\valvec_2 \oplus \Val- \val_1)\\
    \frac{\exp(\val_1/\beta)}{E_1 + \exp(\val_1/\beta)} = &  \frac{E_2}{E_2 + \exp((\Val-\val_1)/\beta)}\\
    E_2 \cd \exp(\val_1/\beta) + \exp(\Val/\beta) = & E_1 \cd E_2 + E_2 \cd \exp(\val_1/\beta) \\ 
    \exp(\Val/\beta) = & E_1 \cd E_2 \\ 
    \Val/\beta =& \log(E_1 \cd E_2) 
\end{align*}
Note that this condition holds irrespective of $\val_1$: that is, if we have $\Val =  \beta \cd \log(E_1 \cd E_2) $, then we must have $\frac{\partial}{\partial \val_1} W = 0$ always, and so any allocation will give equal total winrate across both tasks to player $c$. 

We next turn to the setting where $\Val \ne \beta \cd \log(E_1 \cd E_2)$: thus, this means that the only point of zero derivative $\frac{\partial}{\partial \val_1} W$ occurs exactly where the probabilities of winning are identical: $p_c(\val_1) = p_c(\Val-\val_1)$. We will show whether this is a maximum or minimum by examining the second derivative of the function $W$ at the this point. First, we note that by the properties of the BTL function: 
$$\frac{\partial W}{\partial \val_1} = (p_c(\valvec_1 \oplus \val_1) \cd (1-p_c(\valvec_1 \oplus \val_1)) -\ppick_c(\valvec_2 \oplus \Val- \val_1) \cd (1-\ppick_c(\valvec_2 \oplus \Val- \val_1)))\cd 1/\beta $$ 
and thus the second derivative is given by: 
\begin{align*}
    \frac{\partial^2}{\partial \val_1^2} W  = & \ppick_c(\valvec_1 \oplus \val_1) \cd (1-\ppick_c(\valvec_1 \oplus \val_1)/\beta^2 -2 \cd \ppick_c(\valvec_1 \oplus \val_1) \cd (1-\ppick_c(\valvec_1 \oplus \val_1))/\beta^2 \\
    & - (-\ppick_c(\valvec_2 \oplus \Val - \val_1) \cd (1-\ppick_c(\valvec_2 \oplus \Val - \val_1)) +2\ppick_c(\valvec_2 \oplus \Val - \val_1)^2 \cd (1-\ppick_c(\valvec_2 \oplus \Val - \val_1)))/\beta^2\\
     = & 1/\beta^2 \cd \p{\ppick_c(\valvec_1\oplus \val_1) \cd (1-\ppick_c(\valvec_1 \oplus \val_1)) \cd (1-2 \cd \ppick_c(\valvec_1 \oplus \val_1))} \\
     & + 1/\beta^2 \cd \p{\ppick_c(\valvec_2 \oplus \Val - \val_1) \cd (1-\ppick_c(\valvec_2 \oplus \Val-\val_1)) \cd (1-2 \cd \ppick_c(\valvec_2 \oplus \Val - \val_1))}
\end{align*}
Next, we can evaluate this second derivative at the point where $\ppick_c(\valvec_1 \oplus \val_1) = \ppick_c(\valvec_2 \oplus \Val - \val_1) = \ppick^*$, and thus we can simplify the above term to: 
$$2 \cd\ppick^*\cd (1-\ppick^*) \cd (1-2 \cd \ppick^*) \cd 1/\beta^2$$
Note that $\ppick^*\cd (1-\ppick^*) >0$ always, and so the sign of the second derivative depends entirely on whether the term $1-2 \cd \ppick^*$ is positive or negative. This occurs exactly when equalizing the winrates allows the new agent to have winrate $>0.5$ on both tasks. When does this occur? The value giving equal winrate is: 

\begin{align*}
    p_c(\valvec_1 \oplus \val_1) =& \ppick_c(\valvec_2 \oplus \Val- \val_1)\\
    \frac{\exp(\val_1/\beta)}{E_1 + \exp(\val_1/\beta)} = &  \frac{\exp((\Val-\val_1)/\beta)}{E_2 + \exp((\Val-\val_1)/\beta)}\\
E_2 \cd \exp(\val_1/\beta) + \exp(\Val/\beta) = & E_1 \cd \exp((\Val-\val_1)/\beta) + \exp(\Val/\beta) \\
E_2 \cd \exp(\val_1/\beta) = & E_1\cd \exp((\Val-\val_1)/\beta)\\
\exp(2\val_1/\beta) = & \frac{E_1}{E_2} \cd \exp(\Val/\beta)\\
\val_1 = &  0.5 \cd \beta  \cd \log\p{\frac{E_1}{E_2}} + 0.5 \cd \Val
\end{align*}
This gives winrate $>0.5$ whenever: 
\begin{align*}
    \exp(\val_1/\beta) > & E_1\\
    \sqrt{\frac{E_1}{E_2}} \cd \exp(\Val/(2\beta)) > &  E_1\\
    \frac{E_1}{E_2} \cd \exp(\Val/\beta) > &  E_1^2\\
    \exp(\Val/\beta) > &  E_1 \cd E_2\\
    \Val > & \beta \cd \log(E_1 \cd E_2)
\end{align*}
Thus, if $\Val$ is below this bound, then the midpoint has a positive second derivative, and thus is a minimum: in this case, the function $W$ is maximized at the endpoints, with $\val_1 = \Val, \val_2 = 0$. Note that the allocation will always have nonzero allocation on the task with smaller $E_i$, because if $E_1 < E_2$, then: 
\begin{align*}
\ppick_{\nagents+1}(\valvec_1 \oplus \Val) + \ppick_{\nagents+1}(\valvec_2 \oplus 0)  > & \ppick_{\nagents+1}(\valvec_1 \oplus 0) + \ppick_{\nagents+1}(\valvec_2 \oplus \Val)\\ 
\frac{\exp(\Val/\beta)}{\exp(\Val/\beta)+E_1} + \frac{1}{1+E_2}  > & \frac{\exp(\Val/\beta)}{\exp(\Val/\beta)+E_2} + \frac{1}{1+E_1} \\
E_2>& E_1
\end{align*}
On the other hand, if  $\Val > \beta \cd \log(E_1 \cd E_2)$, then the internal zero has a \emph{negative} second derivative, and thus is a \emph{maximum}: in this case, the function $W$ is maximized at this internal point. 
\end{proof}
\optwinb*
\begin{proof}
This proof comes almost directly from the proof of Lemma \ref{lem:reallocatemoreb}. Note that Lemma \ref{lem:reallocatemoreb} tells us that if, for any \emph{pair} of tasks $1, 2$, if the total value the new agent $c$ has between those two tasks is at least 
$$\Val\geq \beta \cd \p{\log(\sum_{i \in \abs{\valvec_1}} \exp(\valvec_{i1}/\beta)) + \log(\sum_{i \in \abs{\valvec_2}} \exp(\valvec_{i2}/\beta))} = \beta \cd \p{\log(E_1) + \log(E_2)}$$
where we have used $E_i =\sum_{j \in \abs{\valvec}} \exp(\valvec_{j,i}/\beta))$ for conciseness. If this is satisfied, then the optimal allocation \emph{across those two tasks} is for agent $C$ to equalize winrate (with winrate $>0.5$) across both of them. \\

Next, we will show that if agent $C$'s total value across all tasks is at least $\Val \geq 2 \cd \max_{i \in [\ntasks]}\beta \cd \ntasks \log(E_i)$, any allocation besides one with equal winrate across tasks must have \emph{lower} total winrate.

First, we note that if for all tasks $i$, we have $\val_i > \beta \cd \log(E_i) = \beta \cd \log(\sum_{j \in [\nagents]} \exp(\valvec_{j,i}/\beta))$, then the precondition of Lemma \ref{lem:reallocatemoreb} is satisfied, and the optimal allocation is always to equalize winrate. 

Next, we will consider the case where $\val_k < \beta \cd \log(E_k) = \beta \cd \log(\sum_{i \in [\nagents]} \exp(\valvec_{k,i}/\beta)$ for some task $k$, and show by the pigeonhole principle that there must exist some other task $j$ such that $\val_k + \val_j > \beta \cd \p{\log(E_k)+ \log(E_j)}$, and again by Lemma \ref{lem:reallocatemoreb} it would be optimal to equalize winrate across tasks. Because we know that the total value across all tasks is $\Val \geq 2 \cd \max_{i \in [\ntasks]}\beta \cd \ntasks \log(E_i)$, if on a particular task $k$ we have $\val_k < \beta \cd \log(E_k)$, we must have that the \emph{average} value across all other tasks is lower bounded by: 

\begin{align*}
    \frac{\Val - \val_k}{\ntasks-1}  \geq & \frac{2 \cd \beta \cd \ntasks \cd \max_{i \in [\ntasks]} \log(E_i) - \beta \cd \log(E_k)}{\ntasks-1}\\
      > &  \frac{2 \cd \beta \cd (\ntasks-1) \cd \max_{i \in  [\ntasks]} \log(E_i)}{\ntasks-1}\\
     = &2 \cd  \beta \cd \max_{i [\ntasks]} \log(E_i)\\
\end{align*}
By this reasoning, because the \emph{average} value across all other tasks is at least $ 2 \cd  \beta \cd \max_{i [\ntasks]} \log(E_i)$, there must exist at least one task $j$ with value

$$\val_j \geq 2 \cd  \beta \cd \max_{i [\ntasks]} \log(E_i) $$
Which implies that: 
$$\val_j + \val_k \geq 2 \cd  \beta \cd \max_{i [\ntasks]} \log(E_i) \geq \beta \cd \p{\log(E_j) +\log(E_k) } $$
If this task exists, then by the preconditions of Lemma \ref{lem:reallocatemoreb}, total winrate could be increased by equalizing winrate across both task $j$ and $k$. 
\end{proof}

\optcombb*
\begin{proof}
In order to prove this, we will again consider any arrangement $\{\valvec_{\task}\}$ with positive value across multiple tasks and show that the alternative arrangement $\{\valvec_{\task}'\}$ with value on only a single task strictly improves the $\sum_{\task \in [\ntasks]}\val_{c\task} \cd \ppick_c(\valvec_{\task})$ objective. Specifically, we will assume that the allocation has positive value only on the task with the smallest current total logexp sum: 
$$\text{argmin}_{\task \in [\ntasks]} \sum_{i \in [\nagents]} \exp(\valvec_{i, \task}/\beta) = \text{argmin}_{\task \in [\ntasks]} E_\task$$
WLOG, call this task 1. The condition we wish to show is, for new agent $c$: 
\begin{align*}
    \sum_{\task \in [\ntasks]} \valvec_{c, \task}' \cd \ppick_c(\valvec_{\task}') > &  \sum_{\task \in [\ntasks]} \valvec_{c, \task} \cd \ppick_c(\valvec_{\task})\\
    \Val_c \cd \ppick_c(\valvec_{1}') + 0 \cd \sum_{\task \ne 1} 0 \cd \ppick_c(\valvec_{\task}') > &  \sum_{\task \in [\ntasks]} \valvec_{c, \task} \cd \ppick_c(\valvec_{\task})\\
    \sum_{\task \in [\ntasks]} \valvec_{c, \task} \cd \ppick_c(\valvec_{1}') > &  \sum_{\task \in [\ntasks]} \valvec_{c, \task} \cd \ppick_c(\valvec_{\task}) \tag{Rewriting $\Val_c$}\\
        \sum_{\task \in [\ntasks]} \valvec_{c, \task} \cd (\ppick_c(\valvec_{1}')- \ppick_c(\valvec_{\task})) > & 0
\end{align*}
This last inequality holds because if $\ppick_c(\valvec_{1}')- \ppick_c(\valvec_{\task})$ for any $\task \in [\ntasks]$ by design: 
\begin{align*}
    \ppick_c(\valvec_{1}')\geq & \ppick_c(\valvec_{\task})\\
    \frac{\exp(\Val/\beta)}{\exp(\Val/\beta) + E_1} \geq & \frac{\exp(\val_{c\task}/\beta)}{\exp(\val_{c\task}/\beta) + E_{\task}} \tag{$E_{\task} = \sum_{i \ne c, i \in [\nagents]} \exp(\valvec_{i\task}/\beta)$}\\
    \exp((\Val + \val_{c\task})/\beta) + \exp(\Val/\beta)\cd E_{\task} \geq & \exp((\Val + \val_{c\task})/\beta) +  E_1\cd \exp(\val_{c\task}/\beta)\\
\exp(\Val/\beta)\cd E_{\task} \geq &  E_1\cd \exp(\val_{c\task}/\beta)\\
\end{align*}
This holds by a pair of inequalities: we know that $\exp(\Val_c/\beta) \geq \exp(\valvec_{c,\task}/\beta)$ because $\Val_c \geq \valvec_{c,\task}$. Additionally, we know that $E_\task \geq E_1$ by the definition of task 1.  
\end{proof}

\begin{restatable}{lemma}{optwinbound}\label{lem:optwinbound}
Consider a set of tasks $\ntasks\geq 1$ with constant value $\{\valvec\}$ across tasks. Then, whenever: 
$$\Val \leq \ntasks \cd \beta \cd \log\p{\sum_{i \in \abs{\valvec}} \exp(\valvec_{i}/\beta)}$$
equalizing winrate across tasks does \emph{not} maximize total winrate. 
\end{restatable}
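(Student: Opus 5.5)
Since all tasks have the same existing values $\valvec$, they share one constant $E = \sum_{i \in \abs{\valvec}} \exp(\valvec_i/\beta)$. Equalizing winrate across tasks then forces the \emph{uniform} allocation $\val_t = \Val/\ntasks$ for every $t$, because the BTL winrate $\exp(\val/\beta)/(E+\exp(\val/\beta))$ is strictly increasing in $\val$. The plan is to show that when $\Val \leq \ntasks \beta \log E$, this uniform allocation is not a strict global maximizer. We do this by reallocating between a pair of tasks, exactly as in Lemma \ref{lem:reallocatemoreb}.

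First, the uniform allocation gives each task value $\val^* = \Val/\ntasks \leq \beta \log E$. Hence $\exp(\val^*/\beta) \leq E$, and the common winrate satisfies $\ppick^* \leq 1/2$.

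Next, fix any two tasks $1,2$ and keep all other tasks at $\val^*$. The total value across these two tasks is $2\val^* \leq 2\beta\log E = \beta\log(E_1 E_2)$, with $E_1=E_2=E$. If the inequality is strict, Lemma \ref{lem:reallocatemoreb} says the pairwise total winrate is maximized at the endpoint allocation $(2\val^*, 0)$, not at the equal-winrate point. We can also see this directly from the second-derivative computation in that proof. At the equalizing point, $\partial^2 W/\partial \val_1^2 = 2\ppick^*(1-\ppick^*)(1-2\ppick^*)/\beta^2$, which is strictly positive when $\ppick^* < 1/2$. So the equal-winrate point is a strict local minimum of the two-task total winrate. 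Shifting a small $\epsilon$ of value from task 2 to task 1 therefore strictly increases total winrate. This exhibits an allocation with the same total value $\Val$ and higher total winrate, so equalizing does not maximize.

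The main obstacle is the boundary case $\Val = \ntasks\beta\log E$, where $\ppick^*=1/2$. Here Lemma \ref{lem:reallocatemoreb} shows that $\partial W/\partial \val_1 \equiv 0$ on the pair, so every split of $2\val^*$ between two tasks gives the same total winrate. Equalizing is then only weakly optimal and ties with unequal allocations. I would handle this by reading ``does not maximize'' as ``is not the unique maximizer'': any unequal split of one pair attains the same winrate, so the optimum need not equalize. If strict suboptimality is required instead, I would note that the statement holds strictly for $\Val < \ntasks\beta\log E$ and only weakly at equality. The remaining steps are routine uses of the monotonicity of BTL and the pairwise lemma.
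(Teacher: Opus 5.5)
Your argument for the strict case $\Val < \ntasks\beta\log E$ is the paper's argument: the equal-winrate allocation is the uniform one, $\val^* = \Val/\ntasks < \beta\log E$, and Lemma \ref{lem:reallocatemoreb} applied to any pair of tasks gives a strict improvement. The paper's proof in fact only treats this strict case, since it silently replaces $\leq$ by $<$. You are right to single out the boundary $\Val = \ntasks\beta\log E$, but your conclusion there is wrong whenever $\ntasks \geq 3$.

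At the boundary every split \emph{within one pair} ties, but equalizing is not even weakly optimal globally. Shift $\delta>0$ from task 2 to task 1; total winrate stays at $\ntasks/2$. Tasks 1 and 3 now carry total value $2\val^* + \delta > \beta\log(E^2)$ with unequal winrates. By the equalizing branch of Lemma \ref{lem:reallocatemoreb}, rebalancing that pair strictly increases total winrate. Equivalently, write the winrate on task $t$ as $\sigma(x_t)$ with $x_t = \val_t/\beta - \log E$ and $\sigma(x) = \tfrac12 + \tfrac{x}{4} - \tfrac{x^3}{48} + O(x^5)$. The perturbation $(x_1,x_2,x_3) = (a,a,-2a)$ is feasible for small $a>0$, keeps $\sum_t x_t = 0$, and raises total winrate by $a^3/8 + O(a^5) > 0$. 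So for $\ntasks \geq 3$ the lemma holds strictly with $\leq$, and no reinterpretation as ``not the unique maximizer'' is needed. A genuine tie occurs only for $\ntasks = 2$, where $\sigma(x)+\sigma(-x)=1$ makes every allocation optimal. Your caveat should be confined to that case. It should also cover the degenerate cases $\ntasks = 1$ and $\Val = 0$, which the stated $\ntasks \geq 1$ nominally allows; there your perturbation, which needs two tasks and $\val^*>0$, does not apply.
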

\begin{proof}
Consider the allocation that equalizes winrate: because each task is identical, this must also involve equalizing value across tasks. Because
$$\Val < \ntasks \cd \beta \cd \log\p{\sum_{i \in \abs{\valvec}} \exp(\valvec_{i}/\beta)} =  \ntasks \cd \beta \cd\p{\log(E)}$$
we know that this allocation must have $\val_i = \Val/\ntasks < \beta \cd \log(E)$. By Lemma \ref{lem:reallocatemoreb}, we know that this means that for any two tasks, total winrate can be improved by placing all value on a single task, and none on another. This means that equalizing winrate across tasks cannot be the allocation that maximizes total winrate. 
\end{proof}

\weightedwinratebetter*
\begin{proof}
By Lemma \ref{lem:addnew}, the change in consumer welfare when a new agent with value $\val'$ joins is given by: 
$$\agg{\valvec \oplus \val'} - \agg{\valvec} =   \ppick_{\nagents}(\valvec \oplus \val') \cd (\valvec' - \agg{\valvec})$$
where $\valvec$ is the vector of values for the task before a new agent joins, and $\ppick_{\nagents}(\cd)$ gives the probability of the new ($\nagents$th) agent's response being picked. 

Using the results  of Theorem \ref{thrm:optwinb}, agents best-responding to winrate would spread their value $\Val$ across tasks so as to equalize winrate, while by Lemma \ref{lem:optcombb}, for weighted winrate they would put all value on a single task, and value 0 elsewhere. 

Denote $\valvec_i \oplus 0, \valvec_i \oplus \Val, \valvec_i \oplus \val_i$ as the allocations where the new agent allocates 0, $\Val, \val_i$ respectively. 

The change in utility for winrate is: 
\begin{align*}
    \sum_{i \in [\ntasks]} \ppick_{\nagents}(\valvec_i \oplus \val_i) \cd \p{\val_i - \agg{\valvec_i}} = \sum_{i \in [\ntasks]} \ppick_{\nagents}^* \cd \p{\val_i - \agg{\valvec_i}}
\end{align*}
where in the second step we have used that $\ppick(\cd)$ is constant for all tasks with winrate best response. 
The change in utility for weighted winrate is: 
$$\ppick_{\nagents}(\valvec_1 \oplus \Val) \cd \p{\Val - \agg{\valvec_1}} + \sum_{i \in [\ntasks], i \ne 1}\ppick_{\nagents}(\valvec_i \oplus 0) \cd \p{0 - \agg{\valvec_i}}$$
where without loss of generality we have assumed that the task that weighted winrate places all weight on is task 1. Recall from Lemma \ref{lem:optcombb} that this task must be the task that maximizes $\ppick_{\nagents}(\valvec_1 \oplus \Val)$. 
Weighted winrate is better whenever: 
\begin{align*}
    \ppick_{\nagents}(\valvec_1 \oplus \Val) \cd \p{\Val - \agg{\valvec_1}} + \sum_{i \in [\ntasks], i \ne 1}\ppick_{\nagents}(\valvec_i \oplus 0) \cd \p{0 - \agg{\valvec_i}} \geq &  \sum_{i \in [\ntasks]} \ppick_{\nagents}^* \cd \p{\val_i - \agg{\valvec_i}}\\
    \ppick_{\nagents}(\valvec_1 \oplus \Val) \cd \p{\Val - \agg{\valvec_1}} - \sum_{i \in [\ntasks], i \ne 1}\ppick_{\nagents}(\valvec_i \oplus 0) \cd \agg{\valvec_i} \geq &  \sum_{i \in [\ntasks]} \ppick_{\nagents}^* \cd \p{\val_i - \agg{\valvec_i}}\\
    \ppick_{\nagents}(\valvec_1 \oplus \Val) \cd \p{\Val - \agg{\valvec_1}}  \geq &  \sum_{i \in [\ntasks]} \ppick_{\nagents}^* \cd \p{\val_i - \agg{\valvec_i}} \\
     & + \sum_{i \in [\ntasks], i \ne 1}\ppick_{\nagents}(\valvec_i \oplus 0) \cd \agg{\valvec_i}\\
    \ppick_{\nagents}(\valvec_1 \oplus \Val) \cd \p{\Val - \agg{\valvec_1}} - \sum_{i \in [\ntasks]} \ppick_{\nagents}^* \cd \val_i \geq &  -\sum_{i \in [\ntasks]} \ppick_{\nagents}^* \cd   \agg{\valvec_i} + \sum_{i \in [\ntasks], i \ne 1}\ppick_{\nagents}(\valvec_i \oplus 0) \cd \agg{\valvec_i}\\
    \ppick_{\nagents}(\valvec_1 \oplus \Val) \cd \p{\Val - \agg{\valvec_1}} - \ppick_{\nagents}^* \cd \Val \geq &  -\sum_{i \in [\ntasks]} \ppick_{\nagents}^* \cd   \agg{\valvec_i} + \sum_{i \in [\ntasks], i \ne 1}\ppick_{\nagents}(\valvec_i \oplus 0) \cd \agg{\valvec_i}\\
         \Val \cd \p{\ppick_{\nagents}(\valvec_1 \oplus \Val) - \ppick_{\nagents}^*} \geq & \ppick_{\nagents}(\valvec_1 \oplus \Val) \cd \agg{\valvec_1}  -\sum_{i \in [\ntasks]} \ppick_{\nagents}^* \cd   \agg{\valvec_i}\\
         &+ \sum_{i \in [\ntasks], i \ne 1}\ppick_{\nagents}(\valvec_i \oplus 0) \cd \agg{\valvec_i}\\
    \Val \cd \p{\ppick_{\nagents}(\valvec_1 \oplus \Val) - \ppick_{\nagents}^*} \geq &   \agg{\valvec_1} \cd (\ppick_{\nagents}(\valvec_1 \oplus \Val) -\ppick_{\nagents}^*)\\
    & + \sum_{i \in [\ntasks], i \ne 1}(\ppick_{\nagents}(\valvec_i \oplus 0)-\ppick_{\nagents}^*) \cd \agg{\valvec_i}
\end{align*}
Note that $\ppick_{\nagents}(\valvec_i \oplus 0)\leq \ppick_{\nagents}^*$, and thus we can drop the last term from the summation on the righthand side because it is always negative, in order to upper bound the term on the righthand side. 
\begin{align*}
    \Val \cd \p{\ppick_{\nagents}(\valvec_1 \oplus \Val) - \ppick_{\nagents}^*} \geq &  \cd \agg{\valvec_1} \cd (\ppick_{\nagents}(\valvec_1 \oplus \Val) -\ppick_{\nagents}^*) \\
\end{align*}
This holds whenever $\Val \geq \agg{\valvec_1}$, which is true by the assumption of Theorem \ref{thrm:optwinb} (when agents would spread out for winrate), which holds that: 
\begin{align*}
    \Val \geq &  2 \cd \ntasks \cd \beta \cd \max_{j \in [\ntasks]}\log\p{\sum_{i \in \abs{\valvec_j}} \exp(\valvec_{i,j}/\beta)}
\end{align*}
Note that by properties of the logexpsum, we have that: 
\begin{align*}
    \log\p{\sum_{i \in \abs{\valvec_j}} \exp(\valvec_{i,j}/\beta)} \geq \max_{k \in \abs{\valvec_j}}\valvec_{k,j}/\beta
\end{align*}
Additionally, by the properties of BTL, we know that $\agg{\valvec_i} \leq \max_{k \in \abs{\valvec_i}} \valvec_{k,i}$. Thus, we know that: 
\begin{align*}
    \Val \geq & 2 \cd \ntasks \cd \beta \cd \max_{j \in [\ntasks]}\log\p{\sum_{i \in \abs{\valvec_j}} \exp(\valvec_{i,j}/\beta)}\\
    > &  2 \cd \ntasks \cd \beta \cd \max_{j \in [\ntasks]} \cd \max_{k \in \abs{\valvec_j}} \valvec_{k,j}/\beta\\
    \geq &  2 \cd \ntasks \cd \max_{j \in [\ntasks]} \agg{\valvec_j}
\end{align*}
which implies $\Val \geq \agg{\valvec_1}$, as desired. 
\end{proof}

We note that from Lemmas \ref{lem:optvalueb} and \ref{lem:optcombb}, optimizing for consumer welfare and for weighted winrate would both incentivize producers to place all value on a single task. However, it is \emph{not} the case that maximizing for weighted winrate and consumer welfare would both incentivize placing weight on the \emph{same} task: it could be that they would pick different tasks $i \ne j$. For intuition, weighted winrate incentivizes producers to focus on a task where they maximize their probability of being selected, while consumer welfare would be maximized by picking a task that maximizes the probability of being picked, multiplied by the gap in value $\Val - \agg{\val_i}$. 
Lemma \ref{lem:weightedvaluegap} provides a bound on the gap in social welfare induced by this slight misalignment in objectives. 

\begin{restatable}{lemma}{weightedvaluegap}\label{lem:weightedvaluegap}
Given a set of existing tasks with values $\{\valvec_{\task}\}$, weighted winrate and consumer welfare may \emph{not} incentivize picking the same task to place all value on. However, the suboptimality of weighted winrate \emph{on this task} is bounded by $\agg{\valvec_{i^*}} - \agg{\valvec_{j^*}}$, where $i^*$ is the index of the task that maximizes the probability $\ppick_{\nagents+1}(\valvec_{i} \oplus \Val)$, and $j^*$ is the index of the task with smallest current value $\agg{\valvec_{j^*}}$.  
\end{restatable}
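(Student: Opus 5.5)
Both objectives place all of $\Val$ on a single task, by Lemmas \ref{lem:optvalueb} and \ref{lem:optcombb}. The task choices differ: weighted winrate picks $i^* = \arg\max_i \ppick_{\nagents+1}(\valvec_i \oplus \Val)$, while consumer welfare picks $k^* = \arg\max_i \ppick_{\nagents+1}(\valvec_i\oplus\Val)\cd(\Val - \agg{\valvec_i})$. By Lemma \ref{lem:addnew}, the welfare gain from placing $\Val$ on task $i$ is $g(i) = \ppick_{\nagents+1}(\valvec_i \oplus \Val)(\Val - \agg{\valvec_i})$. The plan is to bound the loss $g(k^*) - g(i^*)$ on the chosen task by comparing both quantities to a common upper envelope. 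Throughout, I write $p_i := \ppick_{\nagents+1}(\valvec_i\oplus\Val)$ for brevity and assume $\Val \ge \agg{\valvec_i}$ on the relevant tasks (otherwise the welfare-optimal action is abstention, handled separately by Lemma \ref{lem:optvalueb}).

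The key step is to split the product. For any task $k$, $g(k) = p_k(\Val - \agg{\valvec_k}) \le p_{i^*}(\Val - \agg{\valvec_k})$, since $p_{i^*}$ is maximal and the gap is nonnegative. Next, since $j^*$ minimizes $\agg{\valvec_j}$, we have $\Val - \agg{\valvec_k}\le \Val - \agg{\valvec_{j^*}}$. Hence $g(k^*) \le p_{i^*}(\Val - \agg{\valvec_{j^*}})$. Subtracting $g(i^*) = p_{i^*}(\Val - \agg{\valvec_{i^*}})$ gives
\begin{equation*}
g(k^*) - g(i^*) \le p_{i^*}\p{\agg{\valvec_{i^*}} - \agg{\valvec_{j^*}}} \le \agg{\valvec_{i^*}} - \agg{\valvec_{j^*}},
\end{equation*}
using $p_{i^*}\le 1$ and $\agg{\valvec_{i^*}}\ge\agg{\valvec_{j^*}}$. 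This is the claimed bound.

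For the "may not coincide" part, I would give a two-task example. Take task 1 with many mediocre existing models, so $E_1$ is large but $\agg{\valvec_1}$ is small. Take task 2 with a single strong model, so $E_2$ is smaller but $\agg{\valvec_2}$ is large. Then $p_2 > p_1$, but $p_1(\Val-\agg{\valvec_1}) > p_2(\Val - \agg{\valvec_2})$ for suitable $\beta$ and $\Val$, which I would verify numerically.

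The main obstacle is the qualifier \emph{on this task}. The bound above covers only the chosen task. It ignores the extra loss from weighted winrate returning value $0$ elsewhere rather than abstaining, and I would state that explicitly. A secondary subtlety is the case where $\Val<\agg{\valvec_{i^*}}$, so that $g(i^*)<0$. There the same inequality chain still holds, because I only used $\Val - \agg{\valvec_k}\ge 0$ at $k=k^*$, and $p_{i^*}\le 1$ handles the last step. So I would check the signs carefully rather than rely on the nonnegativity assumption.
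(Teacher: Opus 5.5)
Your proof is correct and is essentially the paper's argument: both replace $p_{k^*}$ by the maximal $p_{i^*}$ (using $\Val - \agg{\valvec_{k^*}} \ge 0$) and then bound $p_{i^*} \le 1$. The only difference is that the paper stops at the slightly sharper $p_{i^*}\bigl(\agg{\valvec_{i^*}} - \agg{\valvec_{k^*}}\bigr)$ and leaves the step to $\agg{\valvec_{j^*}}$ implicit, whereas you use $\agg{\valvec_{k^*}} \ge \agg{\valvec_{j^*}}$ directly; your explicit example showing the two objectives can pick different tasks goes beyond the paper, which only asserts this.
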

\begin{proof}
WLOG suppose that task 1 is the task that maximizes consumer welfare, and task 2 is the one that maximizes weighted winrate. Note from Lemma \ref{lem:optcombb} that we must have that task 2 maximizes $\ppick_{\nagents+1}(\valvec_{i} \oplus \Val)$. The gap in consumer welfare is given by: 
\begin{align*}
    & \ppick_{\nagents+1}(\valvec_1 \oplus \Val) \cd (\Val - \agg{\valvec_1}) - \ppick_{\nagents+1}(\valvec_2 \oplus \Val) \cd (\Val - \agg{\valvec_2}) \\
    <&  \ppick_{\nagents+1}(\valvec_2 \oplus \Val) \cd (\Val - \agg{\valvec_1}) - \ppick_{\nagents+1}(\valvec_2 \oplus \Val) \cd (\Val - \agg{\valvec_2})\\
     = & \ppick_{\nagents+1}(\valvec_2 \oplus \Val) \cd \p{\Val - \agg{\valvec_1} - \Val + \agg{\valvec_2}}\\
    = & \ppick_{\nagents+1}(\valvec_2 \oplus \Val) \cd \p{\agg{\valvec_2} - \agg{\valvec_1}}\\
    \leq & \agg{\valvec_2} - \agg{\valvec_1}
\end{align*}
\end{proof}

\scalinglaw*
\begin{proof}
This can be seen almost immediately from the proofs of Lemmas \ref{lem:optvalueb} or \ref{lem:optcombb}: the final allocation is one that allocates up to the maximum value on the task $\val_i^*$ sequentially according to which task maximizes that objective. This is optimal by the same reasoning with proofs of Lemmas \ref{lem:optvalueb} and \ref{lem:optcombb}: any other allocation could be strictly improved by re-allocating value to tasks that are higher on that objective. Finally, note that for consumer welfare objective, if the \enquote{remainder} of value is less than the current aggregate value on a task $\agg{\valvec_i}$, then the optimal choice is to abstain. 
\end{proof}

\section{Proofs for Section \ref{sec:incentretrain}}\label{app:incentretrain}

\begin{definition}\label{def:instantaneoususerutility}
For model replacement, instantaneous consumer welfare is the \emph{sum} of change in consumer welfare, given the instantaneous change in all producers' values over tasks. 
That is, if $\task_j^*$ is the task that producer $j$ chooses to improve in, then instantaneous consumer welfare is given by: 
\begin{align*}
\sum_{i \in [\nagents]} \frac{\partial}{\partial \valvec_{i, \task_j^*,}}\agg{\valvec_{\task_j^*}} & = \sum_{i \in [\nagents]} \frac{\partial}{\partial \valvec_{i, \task_j^*,}}\valvec_{i, \task_j^*} \cd \ppick_i(\valvec_{\task_j^*})
\end{align*}
\end{definition}
\begin{restatable}{lemma}{derivs}\label{lem:derivs}
Given a fixed task $i$, the instantaneous change in the producer's objective given a change in $\valvec_i$ is given by: 
\begin{itemize}
    \item For winrate: $\ppick_i(\valvec) \cd (1-\ppick_i(\valvec))/\beta$. 
    \item For weighted winrate: $\ppick_i(\valvec) + \valvec_i \cd \ppick_i(\valvec) \cd (1-\ppick_i(\valvec))/\beta$. 
    \item For consumer welfare, with two agents:   $\ppick_i(\valvec) + (\valvec_i-\valvec_j) \cd \ppick_i(\valvec) \cd (1-\ppick_i(\valvec))/\beta$. 
\end{itemize}
\end{restatable}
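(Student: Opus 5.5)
The plan is a direct differentiation of each objective with respect to the producer's own value $\valvec_i$ on the fixed task, using the standard BTL derivative identities. First I record these identities once. Writing $\ppick_k(\valvec) = \exp(\valvec_k/\beta)/\sum_{\ell}\exp(\valvec_\ell/\beta)$, the quotient rule gives
\begin{align*}
\frac{\partial}{\partial \valvec_i}\ppick_i(\valvec) &= \ppick_i(\valvec)\cd(1-\ppick_i(\valvec))/\beta, \\
\frac{\partial}{\partial \valvec_i}\ppick_j(\valvec) &= -\ppick_i(\valvec)\cd\ppick_j(\valvec)/\beta \quad (j\ne i).
\end{align*}
These are the same "properties of BTL" already invoked in the proof of Lemma \ref{lem:sufficientmonotone}, so I would simply cite that computation.

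For winrate, the objective on the task is $\ppick_i(\valvec)$, so the first identity gives the claimed expression immediately. For weighted winrate (Definition \ref{def:weightedwinrate}), the objective is $\valvec_i\cd\ppick_i(\valvec)$. The product rule gives $\ppick_i(\valvec) + \valvec_i\cd\frac{\partial}{\partial\valvec_i}\ppick_i(\valvec)$, and substituting the first identity yields the stated formula.

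For consumer welfare, I would reuse the general expression derived in the proof of Lemma \ref{lem:sufficientmonotone}:
\begin{align*}
\frac{\partial}{\partial\valvec_i}\agg{\valvec} = \ppick_i(\valvec)\cd\p{1+\sum_{j\ne i}(\valvec_i-\valvec_j)\ppick_j(\valvec)/\beta}.
\end{align*}
Specializing to $\nagents=2$, the sum has a single term $j$, and $\ppick_j(\valvec) = 1-\ppick_i(\valvec)$. Expanding then gives $\ppick_i(\valvec) + (\valvec_i-\valvec_j)\cd\ppick_i(\valvec)\cd(1-\ppick_i(\valvec))/\beta$, as claimed.

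There is no real obstacle here. The only point needing care is the consumer welfare case: one must remember that changing $\valvec_i$ also moves the other agent's selection probability, which is the cross term $-\valvec_j\ppick_i\ppick_j/\beta$, and then use $\ppick_j = 1-\ppick_i$ to collapse the expression into the stated form.
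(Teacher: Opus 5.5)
Your proposal is correct and matches the paper's proof, which also differentiates each objective directly using the BTL identities and then substitutes $\ppick_j(\valvec) = 1-\ppick_i(\valvec)$ for the two-agent consumer welfare case. The only cosmetic difference is that you start from the general $\nagents$-agent expression in the proof of Lemma \ref{lem:sufficientmonotone}, whereas the paper writes out the cross term $-\valvec_j\cd\ppick_j(\valvec)\cd\ppick_i(\valvec)/\beta$ explicitly before specializing.
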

\begin{proof}
This can be seen simply by taking the derivative of each objective. 

For winrate, 
\begin{align*}
    \frac{\partial}{\partial \valvec_i} \ppick_i(\valvec)  = & \ppick_i(\valvec) \cd (1-\ppick_i(\valvec))/\beta 
\end{align*}
For weighted winrate,  
\begin{align*}
    \frac{\partial}{\partial \valvec_i} \valvec_i \cd \ppick_i(\valvec)  = & 
    \ppick_i(\valvec) + \valvec_i \cd \ppick_i(\valvec) \cd (1-\ppick_i(\valvec))/\beta 
\end{align*}
For consumer welfare, 
\begin{align*}
    \frac{\partial}{\partial \valvec_i} \agg{\valvec}  = & \frac{\partial}{\partial \valvec_i} \sum_{j \in \abs{\valvec}} \valvec_j \cd \ppick_j(\valvec)\\
    = & \frac{\partial}{\partial \valvec_i} \valvec_i \cd \ppick_i(\valvec) + \frac{\partial}{\partial \valvec_i}  \sum_{j \ne i}\valvec_j \cd \ppick_j(\valvec) \\
     = & \ppick_i(\valvec) + \valvec_i \cd \ppick_i(\valvec) \cd (1-\ppick_i(\valvec))/\beta - \valvec_j \cd \ppick_j(\valvec) \cd \ppick_i(\valvec)/\beta \tag{Properties of BTL}
\end{align*}
For the case with $\nagents=2$, the above term reduces to: 
\begin{align*}
    \ppick_i(\valvec) + \valvec_i \cd \ppick_i(\valvec) \cd (1-\ppick_i(\valvec))/\beta - \valvec_j \cd (1-\ppick_i(\valvec)) \cd \ppick_i(\valvec)/\beta = \ppick_i(\valvec) + (\valvec_i-\valvec_j) \cd \ppick_i(\valvec) \cd (1-\ppick_i(\valvec))/\beta
\end{align*}
as desired. 
\end{proof}

\winvaluepick*
\begin{proof}
This is proved through Lemmas \ref{lem:winpick} and \ref{lem:valuepick}. 
\end{proof}

\begin{restatable}{lemma}{winrate}\label{lem:winpick}
For $\nagents=2$ agents optimizing for winrate,  agents would always pick the same task to increase value in. 
\end{restatable}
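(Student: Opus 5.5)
The plan is to use the winrate derivative from Lemma \ref{lem:derivs}. In model replacement with $\nagents=2$, producer $a$ compares tasks by the marginal gain in its winrate, $\ppick_a(\valvec_\task)\cd(1-\ppick_a(\valvec_\task))/\beta$. It picks the task $\task$ maximizing this quantity, and producer $b$ picks the task maximizing $\ppick_b(\valvec_\task)\cd(1-\ppick_b(\valvec_\task))/\beta$. The key observation is that with exactly two agents, $\ppick_b(\valvec_\task) = 1-\ppick_a(\valvec_\task)$ on every task. Therefore
\[
\ppick_b(\valvec_\task)\cd(1-\ppick_b(\valvec_\task)) = (1-\ppick_a(\valvec_\task))\cd\ppick_a(\valvec_\task),
\]
so the two producers' marginal winrate gains coincide task by task.

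The steps, in order, are as follows.
\begin{enumerate}
\item Fix any pair of tasks $i,j$ and write $q_\task = \ppick_a(\valvec_\task)$ for $\task\in\{i,j\}$.
\item By Lemma \ref{lem:derivs}, producer $a$ prefers improving on $i$ over $j$ exactly when $q_i(1-q_i) > q_j(1-q_j)$.
\item By the symmetry identity above, producer $b$ prefers $i$ over $j$ under exactly the same inequality.
\item Hence, for every pair of tasks, both producers rank the pair identically. They therefore have the same total ordering over tasks and choose the same argmax task.
\end{enumerate}
Equivalently, each producer prefers the task whose winrate split is closest to $1/2$, that is, the most contested task. This is what underlies the homogenization interpretation.

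The only delicate point is ties. If $q_i(1-q_i) = q_j(1-q_j)$, which happens for instance when $q_j = 1-q_i$, both producers are indifferent between $i$ and $j$. In that case I would state the result as "the set of optimal tasks is identical for both producers," so that any common tie-breaking rule yields the same pick. I do not expect any real obstacle: the argument is a one-line consequence of the complementarity $\ppick_a + \ppick_b = 1$ together with the symmetry of $x(1-x)$ under $x \mapsto 1-x$. The main care needed is stating the tie-handling convention precisely.
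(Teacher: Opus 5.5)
Your proposal is correct and follows the paper's own argument: both use $\ppick_b(\valvec_\task) = 1-\ppick_a(\valvec_\task)$ to show the two producers' marginal winrate gains, $\ppick(1-\ppick)/\beta$, coincide on every task, so their preferences over tasks are identical. Your explicit handling of ties, where the sets of optimal tasks coincide and any common tie-breaking rule gives the same pick, is a small clarification that the paper leaves implicit.
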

\begin{proof}
If there are only two agents, then they split the winrate, so $\ppick_1(\valvec_{\task}) = 1-\ppick_2(\valvec_{\task})$. 
This immediately implies that $\frac{\partial}{\partial \val_1} \ppick_1(\valvec_{\task}) = \frac{\partial}{\partial \val_2}\ppick_2(\valvec_{\task})$, which means that both producers have the exact same preferences over all tasks to improve in. 
\end{proof}

\begin{restatable}{lemma}{valuepick}\label{lem:valuepick}
For $\nagents=2$ agents, if they are optimizing for consumer welfare, then they would \emph{never} prefer to pick the same task.   
\end{restatable}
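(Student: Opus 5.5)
The plan is to mirror the proof of Lemma \ref{lem:winpick}: find an identity linking the two producers' marginal objectives on each task, so that one producer's ranking of tasks is exactly the reverse of the other's.

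\textbf{Step 1 (marginal welfare for each producer).} Fix a task $\task$ with values $\val_{a,\task}, \val_{b,\task}$. Write $\ppick_a = \ppick_a(\valvec_\task)$ and $\ppick_b = 1-\ppick_a$. By the consumer-welfare case of Lemma \ref{lem:derivs}, the marginal welfare from raising $A$'s value on $\task$ is
\[
D_a(\task) = \ppick_a + (\val_{a,\task}-\val_{b,\task})\,\ppick_a \ppick_b/\beta .
\]
Symmetrically, the marginal welfare from raising $B$'s value on $\task$ is
\[
D_b(\task) = \ppick_b + (\val_{b,\task}-\val_{a,\task})\,\ppick_a \ppick_b/\beta .
\]

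\textbf{Step 2 (the key identity).} Adding these gives $D_a(\task)+D_b(\task) = \ppick_a+\ppick_b = 1$ for every task $\task$, because the cross terms cancel.

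There is also a conceptual reason for this, which I would mention as a sanity check. BTL probabilities are invariant to shifting all values on a task by a common constant. Hence raising both values by $\epsilon$ raises $\agg{\valvec_\task}$ by exactly $\epsilon$, so the directional derivative along $(1,1)$ equals $1$.

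\textbf{Step 3 (reversed orderings).} Because $D_b(\task) = 1 - D_a(\task)$ on every task, the two rankings are reversed. For any pair of tasks $i \ne j$, $D_a(i) > D_a(j)$ holds if and only if $D_b(i) < D_b(j)$. So whenever $A$ strictly prefers improving on task $i$ over task $j$, $B$ strictly prefers $j$ over $i$. Consequently the task maximizing $D_a$ strictly minimizes $D_b$ among any pair on which $A$ has a strict preference. This yields the inverted total ordering described after Theorem \ref{thrm:winvaluepick}.

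\textbf{Step 4 (ties and the main subtlety).} The main point needing care is the tie case. If $D_a(i) = D_a(j)$, then also $D_b(i) = D_b(j)$, so both producers are indifferent between $i$ and $j$. I would state the lemma as: they never \emph{strictly} prefer the same task, and a common choice can arise only from joint indifference.

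If $A$'s maximizer is unique, I would also confirm that $B$ does not pick it. The one exception is when every task ties for $B$, and then all $D_a$ values are equal as well, which is again the indifference case. Everything else is direct substitution from Lemma \ref{lem:derivs}.
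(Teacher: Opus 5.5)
Your proposal is correct and takes essentially the paper's approach. The paper substitutes $\ppick_b = 1-\ppick_a$ into $B$'s preference inequality between two tasks and finds it is exactly the reverse of $A$'s, which is your identity $D_a(\task)+D_b(\task)=1$ in disguise; the paper records that identity separately as Lemma~\ref{lem:same}. Your explicit treatment of ties is a small improvement over the paper's weak-inequality argument, though your closing ``exception'' is vacuous: a unique maximizer of $D_a$ is automatically the strict unique minimizer of $D_b$.
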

\begin{proof}
WLOG, suppose that agent $A$ would increase value in task 1. This means that: 
\begin{align*}
    \frac{\partial}{\partial \valvec_{a,1}} \agg{\valvec_1} \geq &  \frac{\partial}{\partial \valvec_{a,2}} \agg{\valvec_2}\\
    \frac{\partial}{\partial \valvec_{a,1}} \valvec_{a,1} \cd \ppick_a(\valvec_1) + \valvec_{b,1} \cd (1-\ppick_a(\valvec_1) ) \geq  & \frac{\partial}{\partial \valvec_{a,2}} \valvec_{a,2} \cd \ppick_a(\valvec_2) + \valvec_{b,2} \cd (1-\ppick_a(\valvec_2) )\\
    \ppick_a(\valvec_1) + (\valvec_{a,1} - \valvec_{b,1}) \cd \ppick_a(\valvec_1) \cd (1-\ppick_a(\valvec_1))/\beta \geq & \ppick_a(\valvec_2) + (\valvec_{a,2} - \valvec_{b,2}) \cd \ppick_a(\valvec_2) \cd (1-\ppick_a(\valvec_2))/\beta
\end{align*}
Note by symmetry, the condition where agent $B$ would prefer to increase value in task 2 is given by: 
\begin{align*}
        \ppick_b(\valvec_1) + (\valvec_{b,1} - \valvec_{a,1}) \cd \ppick_b(\valvec_1) \cd (1-\ppick_b(\valvec_1))/\beta \leq & \ppick_b(\valvec_2) + (\valvec_{b,2} - \valvec_{a,2}) \cd \ppick_b(\valvec_2) \cd (1-\ppick_b(\valvec_2))/\beta\\
        1-\ppick_a(\valvec_1) + (\valvec_{b,1} - \valvec_{a,1}) \cd \ppick_a(\valvec_1) \cd (1-\ppick_a(\valvec_1))/\beta \leq & 1-\ppick_a(\valvec_2) + (\valvec_{b,2} - \valvec_{a,2}) \cd \ppick_a(\valvec_2) \cd (1-\ppick_a(\valvec_2))/\beta \tag{$\ppick_a(\cd) = 1-\ppick_b(\cd)$}\\
        -\ppick_a(\valvec_1) - (\valvec_{a,1} - \valvec_{b,1}) \cd \ppick_a(\valvec_1) \cd (1-\ppick_a(\valvec_1))/\beta \leq & -\ppick_a(\valvec_2) - (\valvec_{a,2} - \valvec_{b,2}) \cd \ppick_a(\valvec_2) \cd (1-\ppick_a(\valvec_2))/\beta \\
        \ppick_a(\valvec_1) + (\valvec_{a,1} - \valvec_{b,1}) \cd \ppick_a(\valvec_1) \cd (1-\ppick_a(\valvec_1))/\beta \geq &\ppick_a(\valvec_2) + (\valvec_{a,2} - \valvec_{b,2}) \cd \ppick_a(\valvec_2) \cd (1-\ppick_a(\valvec_2))/\beta
\end{align*}
which is exactly equal to the condition when player $A$ would prefer task 1 over 2. \\
This tells us that for any pair of tasks $i, j$, agents $A$ and $B$ would have exactly inverted preferences over which task they would prefer to improve in. Applied to all $\ntasks$ tasks, this says that agents $A$ and $B$ would have exactly inverted orders over which all tasks they would choose to improve in. 
\end{proof}

\valueincent*
\begin{proof}
Given any two tasks $1, 2$, the condition where player $A$ prefers to increase value in task $1$ occurs exactly when: 
\begin{align*}
        \frac{\partial}{\partial \valvec_{a,1}} \valvec_{a,1} \cd \ppick_a(\valvec_1) \geq &  \frac{\partial}{\partial \valvec_{a,2}} \valvec_{a,2} \cd \ppick_a(\valvec_2)\\
        \ppick_a(\valvec_1) + \valvec_{a,1} \cd \ppick_a(\valvec_1) \cd (1-\ppick_a(\valvec_1)/\beta \geq & \ppick_a(\valvec_2) + \valvec_{a,2} \cd \ppick_a(\valvec_2) \cd (1-\ppick_a(\valvec_2)/\beta\\
        \ppick_a(\valvec_1) - \ppick_a(\valvec_2) \geq &  \p{\valvec_{a,2} \cd \ppick_a(\valvec_2) \cd (1-\ppick_a(\valvec_2)) - \valvec_{a,1} \cd \ppick_a(\valvec_1) \cd (1-\ppick_a(\valvec_1)}/\beta \vcentcolon=A_{12}
\end{align*}

Note by identical reasoning the condition where player $B$ prefers to increase value in task $1$ is given by: 
\begin{align*}
        \ppick_b(\valvec_1) - \ppick_b(\valvec_2) \geq &  \p{\valvec_{b,2} \cd \ppick_b(\valvec_2) \cd (1-\ppick_b(\valvec_2)) - \valvec_{b,1} \cd \ppick_b(\valvec_1) \cd (1-\ppick_b(\valvec_1)}/\beta\\
        -\ppick_a(\valvec_1) + \ppick_a(\valvec_2) \geq &  \p{\valvec_{b,2} \cd \ppick_a(\valvec_2) \cd (1-\ppick_a(\valvec_2)) - \valvec_{b,1} \cd \ppick_a(\valvec_1) \cd (1-\ppick_a(\valvec_1)}/\beta \tag{$\ppick_a(\cd) = 1-\ppick_b(\cd)$}\\
        \ppick_a(\valvec_1) - \ppick_a(\valvec_2) \leq &  -\p{\valvec_{b,2} \cd \ppick_a(\valvec_2) \cd (1-\ppick_a(\valvec_2)) - \valvec_{b,1} \cd \ppick_a(\valvec_1) \cd (1-\ppick_a(\valvec_1)}/\beta \vcentcolon= B_{12}
\end{align*}
Thus, the condition where player $A$ and player $B$ both prefer task $1$ to $2$ is when: 
$$A_{12}<\abs{\ppick_a(\valvec_1) - \ppick_a(\valvec_2)} = \abs{\ppick_b(\valvec_1) - \ppick_b(\valvec_2)} < B_{12}$$
The condition where player $A$ and player $B$ both prefer task $2$ to $1$ is when: 
$$B_{12}<\abs{\ppick_a(\valvec_1) - \ppick_a(\valvec_2)} = \abs{\ppick_b(\valvec_1) - \ppick_b(\valvec_2)} < A_{12}$$
And the condition where they must have different preferences is when: 
$$\max(A_{12}, B_{12})<\abs{\ppick_a(\valvec_1) - \ppick_a(\valvec_2)} = \abs{\ppick_b(\valvec_1) - \ppick_b(\valvec_2)}$$
Finally, note that the scenario where players both pick the same task (e.g. task 1), occurs when: 
\begin{align*}
            \ppick_a(\valvec_1) + \valvec_{a,1} \cd \ppick_a(\valvec_1) \cd (1-\ppick_a(\valvec_1)/\beta \geq & \ppick_a(\valvec_2) + \valvec_{a,2} \cd \ppick_a(\valvec_2) \cd (1-\ppick_a(\valvec_2)/\beta
\end{align*}
and: 
\begin{align*}
            \ppick_b(\valvec_2) + \valvec_{b,2} \cd \ppick_b(\valvec_2) \cd (1-\ppick_b(\valvec_2)/\beta \leq & \ppick_b(\valvec_1) + \valvec_{b,1} \cd \ppick_b(\valvec_1) \cd (1-\ppick_b(\valvec_1)/\beta
\end{align*}
and implies: 
\begin{align*}
            (\valvec_{a,1}  + \valvec_{b,1})\cd \ppick_a(\valvec_1) \cd (1-\ppick_a(\valvec_1)/\beta \geq &  (\valvec_{a,2}+ \valvec_{b,2}) \cd \ppick_a(\valvec_2) \cd (1-\ppick_a(\valvec_2)/\beta\\
\end{align*}
\end{proof}

This table shows an example where weighted winrate does lead to increased specialization: winrate would incentivize players to improve on the same task and weighted winrate would incentivize producers to improve on different tasks. However, this is the \emph{wrong} type of specialization: consumer welfare would be higher if each producer swapped which task they improved on, and in fact, would be higher if producers picked the same task to improve on, as given by winrate. 

\begin{table}[h]
    \begin{subtable}[h]{0.45\textwidth}
        \centering
\begin{tabular}{|c|c|c|}
\hline
\textbf{} & 1         & 2         \\ \hline
A        & 74.3 & 16.45 \\ \hline
B        & 70.4 & 14.34 \\ \hline
\end{tabular}
       \caption{Value: player $A$ has higher value on both tasks.}
       \label{tab:value}
     \end{subtable}
    \hfill
    \begin{subtable}[h]{0.45\textwidth}
        \centering
\begin{tabular}{|c|c|c|}
\hline
 & 1 & 2       \\ \hline
A        & 0.029  & \textbf{0.0979} \\ \hline
B        & 0.029  & \textbf{0.0979} \\ \hline
\end{tabular}
        \caption{Winrate (derivative).}
        \label{tab:derivewinrate}
     \end{subtable}
     \vspace{2ex}

         \begin{subtable}[h]{0.45\textwidth}
        \centering
\begin{tabular}{|c|c|c|}
\hline
\textbf{} & 1 & 2        \\ \hline
A        & \textbf{2.48}   & 2.47 \\ \hline
B        & 1.44  & \textbf{1.48}\\ \hline
\end{tabular}
       \caption{Weighted winrate (derivative).}
       \label{tab:deriveww}
    \end{subtable}
    \hfill
    \begin{subtable}[h]{0.45\textwidth}
        \centering
\begin{tabular}{|c|c|c|}
\hline
& 1& 2         \\ \hline
A        & 1.057   & \textbf{1.095}  \\ \hline
B        & \textbf{-0.057}  & -0.095 \\ \hline
\end{tabular}
        \caption{Consumer welfare (derivative)}
        \label{tab:derivevalue}
     \end{subtable}
     \caption{Case with two producers $A, B$ picking between two tasks $1, 2$ (maximum value in each row \textbf{bolded} for convenience). Here, note that producers $A, B$ each pick different tasks $1, 2$ according to weighted winrate, but end up resulting in worse instantaneous consumer welfare than if they had picked according to winrate. Note that player $B$ is dominated in both tasks, and in particular, on both tasks it has \emph{negative} derivative for consumer welfare. That is, no matter which task player $B$ picked, its increase in value strictly harms consumer welfare. Note all quantities are rounded. }
     \label{tab:exinst}
\end{table}

\bothneg*
\begin{proof}
WLOG, we assume agent $A$ picks task 1 and agent $B$ picks task 2. We will begin by eliminating conditions where weighted winrate improves consumer welfare, and then prove that in all remaining situations, one player must be in the non-monotonic regime on both picked tasks. 

By results from other lemmas, we know that the following situations are ones where weighted winrate strictly improves consumer welfare over winrate: 
\begin{enumerate}
    \item (By Lemma \ref{lem:changelower}): Both players get higher winrate in picking their task: $\ppick_b(\valvec_2) > \ppick_b(\valvec_1)$ and $\ppick_a(\valvec_1) > \ppick_a(\valvec_2)$. 
    \item (By Lemma \ref{lem:bothdom}): At least one agent has winrate $<0.5$ on the task that it picks. 
\end{enumerate}
Combining both facts, this must mean that it has winrate $<0.5$ on both tasks, and the other agent has winrate $>0.5$ on both tasks. WLOG, assume that it is agent $A$ that does worse. Note that this implies $\valvec_{bi} >\valvec_{ai}$ for $i=1, 2$. 
When is the case where we get \emph{worse} total sum of derivatives by using weighted winrate? 
\begin{align*}
\frac{\partial}{\partial \valvec_{a,1}} \agg{\valvec_1} + \frac{\partial}{\partial \valvec_{b,2}} \agg{\valvec_2} < &  1\\
    \ppick_a(\valvec_1) + (\valvec_{a,1} -\valvec_{b,1}) \cd\ppick_a(\valvec_1) \cd (1-\ppick_a(\valvec_1))/\beta &  \\
    +\ppick_b(\valvec_2) + (\valvec_{b,2} -\valvec_{a,2}) \cd\ppick_b(\valvec_2) \cd (1-\ppick_b(\valvec_2))/\beta & < 1\\
        \ppick_a(\valvec_1) + (\valvec_{a,1} -\valvec_{b,1}) \cd\ppick_a(\valvec_1) \cd (1-\ppick_a(\valvec_1))/\beta &  \\
   + 1-\ppick_a(\valvec_2) + (\valvec_{b,2} -\valvec_{a,2}) \cd\ppick_b(\valvec_2) \cd (1-\ppick_b(\valvec_2))/\beta & < 1\\
    \ppick_a(\valvec_1) + (\valvec_{a,1} -\valvec_{b,1})/\beta \cd\ppick_a(\valvec_1) \cd (1-\ppick_a(\valvec_1))< &    \ppick_a(\valvec_2)  +(\valvec_{a,2} -\valvec_{b,2})/\beta \cd\ppick_b(\valvec_2) \cd (1-\ppick_b(\valvec_2))\\
    \ppick(\Delta_1) + \Delta_1/\beta \cd \ppick(\Delta_1) \cd (1-\ppick(\Delta_1)) <  & \ppick(\Delta_2) + \Delta_2/\beta \cd \ppick(\Delta_2) \cd (1-\ppick(\Delta_2))\\
    f(\Delta_1) < & f(\Delta_2)
\end{align*}
We will apply results from the lemmas above to tell us more about when the condition above holds. 
Consider the function $f(x)$ as defined above. Because the winrate of agent $a$ is smaller in task 2, and player $A$ has winrate $<0.5$ in each of them, we know that $\Delta_2 < \Delta_1 < 0$. 

By Lemma \ref{lem:helperproof}, we know that the function $f(x)$ has a single point of 0 derivative for $x<0$, and the regime where $f(x)$ is decreasing in $x$ (and where $p(x) < 0.5)$ is also one where $f(x)<0$.  

Applying this here, we know that $\Delta_2 < \Delta_1$, and yet $f(\Delta_2) > f(\Delta_1)$. Because there is only a single point of 0 derivative, we know that this means that $f'(\Delta_2) < 0$, and thus $f(\Delta_2) <0$, which means that $f(\Delta_1) <0$ as well, as desired.
\end{proof}

\begin{restatable}{lemma}{same}\label{lem:same}
Given the setting of Theorem \ref{thrm:bothneg}, whenever both agents pick the same task to improve on, the change in consumer welfare is constant (equal to 1). 
\end{restatable}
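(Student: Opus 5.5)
We argue directly from Definition \ref{def:instantaneoususerutility} together with the consumer-welfare derivative from Lemma \ref{lem:derivs}. Both agents pick the same task; call it $\task$, write $\valvec = \valvec_{\task}$, and set $\ppick_a = \ppick_a(\valvec)$, $\ppick_b = \ppick_b(\valvec) = 1 - \ppick_a$. The instantaneous consumer welfare is then the sum of two partial derivatives on this single task:
\begin{align*}
\frac{\partial}{\partial \valvec_a}\agg{\valvec} + \frac{\partial}{\partial \valvec_b}\agg{\valvec}.
\end{align*}
By Lemma \ref{lem:derivs} with $\nagents = 2$, this equals
\begin{align*}
\ppick_a + (\valvec_a - \valvec_b)\,\ppick_a(1-\ppick_a)/\beta + \ppick_b + (\valvec_b - \valvec_a)\,\ppick_b(1-\ppick_b)/\beta .
\end{align*}

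The key step is to use $\ppick_b = 1 - \ppick_a$. It gives $\ppick_a + \ppick_b = 1$, and it also gives $\ppick_b(1-\ppick_b) = (1-\ppick_a)\ppick_a$. The two cross terms therefore have the common factor $\ppick_a(1-\ppick_a)/\beta$ and coefficients $(\valvec_a - \valvec_b)$ and $(\valvec_b - \valvec_a)$. They cancel exactly, so the sum equals $1$ whatever $\valvec$ and $\beta$ are.

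A sanity check by shift invariance confirms this. Adding $\epsilon$ to both values leaves the BTL probabilities unchanged, so $\agg{\valvec}$ increases by exactly $\epsilon$. Hence the directional derivative along $(1,1)$ is $1$.

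No step is a real obstacle. The only point needing care is that the cancellation relies on $\nagents = 2$, where $\ppick_b(1-\ppick_b) = \ppick_a(1-\ppick_a)$. This is the setting of Theorem \ref{thrm:bothneg}, so the hypothesis is available.
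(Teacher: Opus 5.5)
Your proof is correct and matches the paper's argument: expand both partial derivatives via Lemma~\ref{lem:derivs}, use $\ppick_a + \ppick_b = 1$ and $\ppick_b(1-\ppick_b) = \ppick_a(1-\ppick_a)$, and the cross terms cancel. Your shift-invariance check is a useful extra the paper does not give: it yields the identity without computing any derivatives, and it extends to any number of agents who all improve on a common task.
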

\begin{proof}
Denote the players as $A$ and $B$, and WLOG label the task they both pick as $1$. Then, the instantaneous consumer welfare is given by: 
\begin{align*}
    &  \frac{\partial}{\partial \valvec_{a,1}} \agg{\valvec_1} + \frac{\partial}{\partial \valvec_{b,1}} \agg{\valvec_1}\\
     = & \ppick_a(\valvec_1) + (\valvec_{1a} -\valvec_{1b})\cd\ppick_a(\valvec_1) \cd (1-\ppick_a(\valvec_1))/\beta  \\ 
     & +\ppick_b(\valvec_1)+(\valvec_{1b} -\valvec_{1a}) \cd\ppick_b(\valvec_1) \cd (1-\ppick_b(\valvec_1))/\beta \tag{Value of derivative}\\
      = & 1-\ppick_b(\valvec_1) + (\valvec_{1a} -\valvec_{1b}) \cd\ppick_a(\valvec_1) \cd (1-\ppick_a(\valvec_1))/\beta \\
      &+\ppick_b(\valvec_1) + (\valvec_{1b} -\valvec_{1a}) \cd\ppick_a(\valvec_1) \cd (1-\ppick_a(\valvec_1))/\beta \tag{$\ppick_a(\valvec_1) = 1-\ppick_b(\valvec_1)$}\\ 
     = &  1
\end{align*}
\end{proof}

\begin{restatable}{lemma}{bothdom}\label{lem:bothdom}
Given the setting of Theorem \ref{thrm:bothneg}, if both players pick tasks where they have winrate at least 0.5, then they're doing strictly better than if they both picked the same task. 
\end{restatable}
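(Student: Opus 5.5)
The plan is to compare the two instantaneous consumer welfare quantities directly, using the closed-form derivatives from Lemma \ref{lem:derivs}. Lemma \ref{lem:same} says that if both agents pick the same task, instantaneous consumer welfare equals exactly $1$. So it suffices to show that when $A$ picks task $1$ and $B$ picks task $2$, each with winrate at least $0.5$ on its picked task, the sum of the two welfare derivatives strictly exceeds $1$.

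First, I will write the instantaneous consumer welfare (Definition \ref{def:instantaneoususerutility}) for the split choice using the two-agent formula in Lemma \ref{lem:derivs}:
\begin{align*}
\frac{\partial}{\partial \valvec_{a,1}} \agg{\valvec_1} + \frac{\partial}{\partial \valvec_{b,2}} \agg{\valvec_2} = {} & \ppick_a(\valvec_1) + (\valvec_{a,1}-\valvec_{b,1})\cd \ppick_a(\valvec_1)(1-\ppick_a(\valvec_1))/\beta \\
& + \ppick_b(\valvec_2) + (\valvec_{b,2}-\valvec_{a,2})\cd \ppick_b(\valvec_2)(1-\ppick_b(\valvec_2))/\beta .
\end{align*}
Second, I will use the two-agent BTL identity
\[
\ppick_a(\valvec_1) = \frac{1}{1+\exp((\valvec_{b,1}-\valvec_{a,1})/\beta)}.
\]
From it, $\ppick_a(\valvec_1)\ge 0.5$ holds if and only if $\valvec_{a,1}-\valvec_{b,1}\ge 0$. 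Likewise, $\ppick_b(\valvec_2)\ge 0.5$ holds if and only if $\valvec_{b,2}-\valvec_{a,2}\ge 0$. Since $\ppick(1-\ppick)>0$, both correction terms are then nonnegative, and the sum is at least $\ppick_a(\valvec_1)+\ppick_b(\valvec_2)\ge 0.5+0.5=1$.

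The only delicate step is strictness. If at least one of the picked winrates is strictly above $0.5$, say $\ppick_a(\valvec_1)>0.5$, then $\ppick_a(\valvec_1)+\ppick_b(\valvec_2)>1$. Moreover, the corresponding value gap is strictly positive, so the correction term is strictly positive as well; either observation already gives a total strictly greater than $1$. The boundary case where both winrates equal exactly $0.5$ forces $\valvec_{a,1}=\valvec_{b,1}$ and $\valvec_{a,2}=\valvec_{b,2}$. Then the sum is exactly $1$, and the two options tie.

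I will therefore read ``at least $0.5$'' as excluding this doubly-tied degenerate case, or equivalently state the conclusion as weakly better with strict inequality whenever either winrate exceeds $0.5$. Beyond flagging this edge case, no real obstacle is expected. The whole argument reduces to the sign of the value gaps, which the $0.5$ threshold controls exactly in the two-agent BTL model.
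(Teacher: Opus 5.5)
Your proposal is correct and follows the paper's argument: write the split-task instantaneous welfare with the two-agent formula from Lemma~\ref{lem:derivs}, use that winrate $\ge 0.5$ makes both value gaps nonnegative so the total is at least $\ppick_a(\valvec_1)+\ppick_b(\valvec_2)\ge 1$, and compare with the constant $1$ from Lemma~\ref{lem:same}. Your treatment of the edge case also improves on the paper. The paper's proof silently assumes winrates strictly above $0.5$, and when both winrates equal exactly $0.5$ the two options do tie, so the ``strictly better'' conclusion needs the strictness you identified.
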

\begin{proof}
Denote the players as $A$ and $B$, and WLOG assume that $A$ prefers task 1, and $B$ prefers task 2. The instantaneous consumer welfare is given by: 

\begin{align*}
    &  \frac{\partial}{\partial \valvec_{a,1}} \agg{\valvec_1} + \frac{\partial}{\partial \valvec_{b,2}} \agg{\valvec_2}\\
     = & \ppick_a(\valvec_1) + (\valvec_{1a} -\valvec_{1b})\cd\ppick_a(\valvec_1) \cd (1-\ppick_a(\valvec_1))/\beta  \\ 
     & +\ppick_b(\valvec_2)+(\valvec_{2b} -\valvec_{2a}) \cd\ppick_b(\valvec_2) \cd (1-\ppick_b(\valvec_2))/\beta \tag{Value of derivative}\\
     = & \ppick_a(\valvec_1) + \ppick_b(\valvec_2) \\
      + & (\valvec_{1a} -\valvec_{1b})\cd\ppick_a(\valvec_1) \cd (1-\ppick_a(\valvec_1))/\beta + \ppick_b(\valvec_2)+(\valvec_{2b} -\valvec_{2a}) \cd\ppick_b(\valvec_2) \cd (1-\ppick_b(\valvec_2))/\beta \\
      >& 1 \tag{$\ppick_a(\valvec_1),   \ppick_b(\valvec_2) >0.5$}\\
     + &(\valvec_{1a} -\valvec_{1b})\cd\ppick_a(\valvec_1) \cd (1-\ppick_a(\valvec_1))/\beta + \ppick_b(\valvec_2)+(\valvec_{2b} -\valvec_{2a}) \cd\ppick_b(\valvec_2) \cd (1-\ppick_b(\valvec_2))/\beta \\
      >& 1 \tag{$(\valvec_{1a} -\valvec_{1b})>0, (\valvec_{2b} -\valvec_{2a})>0$}
\end{align*}
as desired. 
\end{proof}

\begin{restatable}{lemma}{changelower}\label{lem:changelower}
Consider the setting of Theorem \ref{thrm:bothneg}, and assume that at least one player has \emph{lower} winrate on the task that it picks (e.g., 
    $\ppick_b(\valvec_1) > \ppick_b(\valvec_2)$, if player $B$ picks task 2). Then, instantaneous consumer welfare for this setting is at least as large as if both players picked the same task. 
\end{restatable}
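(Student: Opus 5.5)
The plan is to reduce the claim to a comparison of two scalar quantities and then read that comparison off from the two weighted-winrate choices. Without loss of generality, under weighted winrate $A$ picks task 1 and $B$ picks task 2, and the hypothesis is $\ppick_b(\valvec_1) > \ppick_b(\valvec_2)$, equivalently $\ppick_a(\valvec_1) < \ppick_a(\valvec_2)$. By Lemma \ref{lem:same}, the benchmark (both agents picking the same task) gives instantaneous consumer welfare exactly $1$. So it suffices to show
$$\frac{\partial}{\partial \valvec_{a,1}} \agg{\valvec_1} + \frac{\partial}{\partial \valvec_{b,2}} \agg{\valvec_2} \geq 1 .$$

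\textbf{Step 1 (rewrite the target).} Write $q_t = \ppick_a(\valvec_t)(1-\ppick_a(\valvec_t))$; this quantity is symmetric in $a$ and $b$. Using Lemma \ref{lem:derivs} together with $\ppick_b = 1-\ppick_a$, the left-hand side minus $1$ equals
$$\ppick_a(\valvec_1) - \ppick_a(\valvec_2) + \big[(\valvec_{a,1}-\valvec_{b,1})q_1 - (\valvec_{a,2}-\valvec_{b,2})q_2\big]/\beta .$$
Equivalently, this says that $A$'s consumer-welfare derivative on task 1 is at least its consumer-welfare derivative on task 2, since $1-g(x)=g(-x)$ for $g(x)=\ppick+x\ppick(1-\ppick)/\beta$. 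I would regroup this expression as
$$T=\Big[\ppick_a(\valvec_1)-\ppick_a(\valvec_2) + (\valvec_{a,1}q_1-\valvec_{a,2}q_2)/\beta\Big] + \Big[(\valvec_{b,2}q_2-\valvec_{b,1}q_1)/\beta\Big].$$

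\textbf{Step 2 (use the weighted-winrate choices).} By Lemma \ref{lem:derivs}, $A$ preferring task 1 under weighted winrate is exactly the statement that the first bracket is $\geq 0$. For $B$, preferring task 2 means
$$\ppick_b(\valvec_2) + \valvec_{b,2}q_2/\beta \geq \ppick_b(\valvec_1) + \valvec_{b,1}q_1/\beta .$$
This rearranges to say that the second bracket is at least $\ppick_b(\valvec_1)-\ppick_b(\valvec_2)$, which is strictly positive by the hypothesis. Hence $T > 0$, giving instantaneous welfare strictly above $1$, which is at least the same-task value.

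\textbf{Main obstacle.} The only delicate point is the bookkeeping in Step 1: one has to split the cross terms $-\valvec_{b,t}q_t$ and $\valvec_{a,t}q_t$ so that each bracket matches exactly one player's weighted-winrate first-order comparison. Once that split is found, the hypothesis on $B$'s winrate supplies exactly the leftover positive term. I would also double-check that both brackets use the same $q_t$, which holds because with $\nagents=2$ we have $\ppick_b(1-\ppick_b)=\ppick_a(1-\ppick_a)$. Finally, the case where instead $A$ has lower winrate on its picked task is symmetric, obtained by swapping the roles of $A$ and $B$.
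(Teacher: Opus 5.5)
Your proposal is correct and is essentially the paper's proof: the paper also reduces to showing the sum of derivatives exceeds $1$ (Lemma \ref{lem:same}). It then uses $A$'s weighted-winrate preference for task 1 to absorb $\ppick_a(\valvec_1)-\ppick_a(\valvec_2)$ together with the $\valvec_{a,t}$ terms, leaving exactly $(\valvec_{b,2}q_2-\valvec_{b,1}q_1)/\beta$, which $B$'s preference for task 2 bounds below by $\ppick_b(\valvec_1)-\ppick_b(\valvec_2)>0$. Your bookkeeping is cleaner than the paper's, which drops a few factors of $1/\beta$, and your observation that the welfare gap equals the difference of $A$'s two consumer-welfare derivatives is a nice consistency check with Lemma \ref{lem:valuepick}.
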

\begin{proof}
First, let's write out multiple conditions. We know that players $A$ and $B$ prefer tasks 1 and 2. We also assume that at least one player has \emph{lower} winrate on the task that it prefers: WLOG, we will assume that this is player $B$ or: 
\begin{align}
    \ppick_b(\valvec_1) > \ppick_b(\valvec_2) \label{eq:4}
\end{align}
If we have that player A prefers task 1, then we know that: 
 \begin{align}
    \ppick_a(\valvec_1) + \valvec_{a,1} \cd\ppick_a(\valvec_1) \cd (1-\ppick_a(\valvec_1))/\beta > &  \ppick_a(\valvec_2) + \valvec_{a,2} \cd \ppick_a(\valvec_2) \cd (1-\ppick_a(\valvec_2) ) \nonumber \\
    \ppick_a(\valvec_1) - \ppick_a(\valvec_2) >  & \valvec_{a,2} \cd \ppick_a(\valvec_2) \cd (1-\ppick_a(\valvec_2) ) - \valvec_{a,1} \cd\ppick_a(\valvec_1) \cd (1-\ppick_a(\valvec_1))/\beta \label{eq:1}
 \end{align}
If we know that player $B$ prefers task 2, we know that:
\begin{align}
    \ppick_b(\valvec_1) + \valvec_{b,1} \cd\ppick_a(\valvec_1) \cd (1-\ppick_a(\valvec_1))/\beta < &  \ppick_b(\valvec_2) + \valvec_{b,2} \cd \ppick_a(\valvec_2) \cd (1-\ppick_a(\valvec_2) )/\beta \nonumber\\
    \ppick_b(\valvec_1)-\ppick_b(\valvec_2)  < &    \valvec_{b,2} \cd \ppick_a(\valvec_2) \cd (1-\ppick_a(\valvec_2) )/\beta-\valvec_{b,1} \cd\ppick_a(\valvec_1) \cd (1-\ppick_a(\valvec_1))/\beta\label{eq:2}
\end{align}
Note that by Equation \ref{eq:4}, we know that the LHS of Equation \ref{eq:2} is positive, and therefore the RHS is also positive. 
Next, the condition we want to prove is that instantaneous consumer welfare for having player $A$ pick task 1 and player $B$ pick task 2 is at least as large as if they both picked the same task (which gives consumer welfare 1). That is, we want to show: 
\begin{align*}
    &\ppick_a(\valvec_1) + (\valvec_{a,1} -\valvec_{b,1}) \cd\ppick_a(\valvec_1) \cd (1-\ppick_a(\valvec_1))/\beta \\
    +&\ppick_b(\valvec_2) + (\valvec_{b,2} -\valvec_{a,2}) \cd\ppick_b(\valvec_2) \cd (1-\ppick_b(\valvec_2))/\beta > 1\\
    \Leftrightarrow & \quad \ppick_a(\valvec_1) + (\valvec_{a,1} -\valvec_{b,1}) \cd\ppick_a(\valvec_1) \cd (1-\ppick_a(\valvec_1))/\beta \\
    +&1-\ppick_a(\valvec_2) + (\valvec_{b,2} -\valvec_{a,2}) \cd\ppick_a(\valvec_2) \cd (1-\ppick_a(\valvec_2))/\beta > 1 \tag{$\ppick_b(\valvec_2) = 1-\ppick_a(\valvec_2)$}\\
         \Leftrightarrow  &\quad  \ppick_a(\valvec_1)-\ppick_a(\valvec_2)   \\
    +&(\valvec_{a,1} -\valvec_{b,1}) \cd\ppick_a(\valvec_1) \cd (1-\ppick_a(\valvec_1))/\beta+  (\valvec_{b,2} -\valvec_{a,2}) \cd\ppick_a(\valvec_2) \cd (1-\ppick_a(\valvec_2))/\beta > 0 \\
             \Leftarrow  &\quad  \valvec_{a,2} \cd \ppick_a(\valvec_2) \cd (1-\ppick_a(\valvec_2) ) - \valvec_{a,1} \cd\ppick_a(\valvec_1) \cd (1-\ppick_a(\valvec_1))/\beta \tag{By Eq. \ref{eq:1}}  \\
    +&(\valvec_{a,1} -\valvec_{b,1}) \cd\ppick_a(\valvec_1) \cd (1-\ppick_a(\valvec_1))/\beta+  (\valvec_{b,2} -\valvec_{a,2}) \cd\ppick_a(\valvec_2) \cd (1-\ppick_a(\valvec_2))/\beta > 0 \\
     \Leftrightarrow  &\quad  -\valvec_{b,1} \cd\ppick_a(\valvec_1) \cd (1-\ppick_a(\valvec_1))/\beta+  \valvec_{b,2} \cd \ppick_a(\valvec_2) \cd (1-\ppick_a(\valvec_2))/\beta > 0 
\end{align*}
Note that the last condition is satisfied by Equation \ref{eq:2}: knowing that the RHS of this equation is positive is equivalent to satisfying the condition above.  
\end{proof}

\begin{lemma}\label{lem:helperproof}
Consider the function $f(x, \beta) = \ppick(x/\beta) + x/\beta\cd \ppick(x/\beta) \cd (1-\ppick(x/\beta))$, where $\ppick(x/\beta)$ gives the BTL probability of picking an item, given a gap in values of $x$ and noise level $\beta$. Then, a) $f(x, \beta)$ has a single point of 0 derivative for $x<0$, and b) whenever $f'(x, \beta) <0$ and $\ppick(x/\beta)<0.5$, we know that $f(x, \beta) <0$. 
\end{lemma}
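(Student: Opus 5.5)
Set $u = x/\beta$ and $\sigma(u) = e^u/(1+e^u)$, so that $\ppick(x/\beta) = \sigma(u)$ and $\sigma' = \sigma(1-\sigma)$. With this substitution,
$$f = g(u) = \sigma(u) + u\,\sigma(u)(1-\sigma(u)).$$
Since $u$ is a positive multiple of $x$, the signs of the derivative and the locations of the critical points are unchanged. It therefore suffices to analyze $g$ on $u<0$, where $\sigma(u)<1/2$. The plan is to compute $g'$ explicitly, show it has exactly one sign change on $(-\infty,0)$, and then read off part b) from how $g$ behaves to the left of that critical point.

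\textbf{Step 1: compute $g'$.} Using $(\sigma(1-\sigma))' = \sigma(1-\sigma)(1-2\sigma)$,
$$g'(u) = \sigma(1-\sigma) + \sigma(1-\sigma) + u\,\sigma(1-\sigma)(1-2\sigma) = \sigma(1-\sigma)\bigl(2 + u(1-2\sigma(u))\bigr).$$
The prefactor $\sigma(1-\sigma)$ is strictly positive, so the sign of $g'$ equals the sign of $h(u) = 2 + u(1-2\sigma(u))$. Note that $1-2\sigma(u) = -\tanh(u/2)$, which gives $h(u) = 2 - u\tanh(u/2)$.

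\textbf{Step 2: $h$ has a unique zero on $u<0$ (part a).} The function $u\tanh(u/2)$ is even, equals $0$ at $u=0$, and is strictly increasing in $|u|$ without bound. Its derivative is $\tanh(u/2) + (u/2)\,\mathrm{sech}^2(u/2)$, which has the sign of $u$. Hence $h(0) = 2 > 0$, $h$ is strictly increasing on $(-\infty,0)$, and $h \to -\infty$ as $u \to -\infty$. So $h$ has exactly one zero $u^*<0$. Consequently $g' < 0$ on $(-\infty,u^*)$ and $g' > 0$ on $(u^*,0)$, which proves part a).

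\textbf{Step 3: sign of $g$ on the decreasing branch (part b).} For $u<0$ we have $\sigma<1/2$, so the condition $f'<0$ with $\ppick<0.5$ means exactly $u<u^*$. I must show $g<0$ there. Factor
$$g = \sigma\bigl(1 + u(1-\sigma)\bigr),$$
so the sign of $g$ is the sign of $k(u) = 1 + u(1-\sigma(u)) = 1 + u/(1+e^u)$.

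From $h(u^*) = 0$ we get $u^*(1-2\sigma^*) = -2$, where $\sigma^* = \sigma(u^*)$. Since $1-\sigma^* > 1-2\sigma^*$ and $u^*<0$,
$$u^*(1-\sigma^*) < u^*(1-2\sigma^*) = -2.$$
Hence $k(u^*) < -1 < 0$.

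It remains to show $k$ stays negative for all $u<u^*$. For this I show $k$ is increasing on that range:
$$k'(u) = \frac{1 + e^u - u e^u}{(1+e^u)^2},$$
which is positive for $u<0$ because $-ue^u>0$ there. Therefore $k(u) < k(u^*) < 0$ for all $u<u^*$, so $g<0$ throughout, which proves part b).

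The main obstacle is Step 2, establishing uniqueness of the critical point. The $\tanh$ rewriting of $h$ is what makes its monotonicity transparent; without it one would have to handle the mixed term $u(1-2\sigma)$ directly. As a backup, the same conclusion follows from differentiating $h$ directly, giving $h' = (1-2\sigma) - 2u\sigma(1-\sigma)$, and observing that both terms are positive for $u<0$.
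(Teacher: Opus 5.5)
Your proof is correct and follows the paper's route. Both use the same factorization $g'=\sigma(1-\sigma)\bigl(2+u(1-2\sigma)\bigr)$, a monotonicity argument on $u<0$ for part a), and the same key inequality $u(1-\sigma)<u(1-2\sigma)$ for part b). Your part a) shows $h$ itself is increasing, which also gives existence of the zero and the sign pattern of $f'$; the paper's comparison of $p(u)$ with $1/2+1/u$ only gives ``at most one crossing.'' In part b), the paper applies the inequality pointwise wherever $h(u)<0$, getting $1+u(1-\sigma)<2+u(1-2\sigma)<0$ directly, so your detour through $u^*$ and the monotonicity of $k$ is valid but unnecessary.
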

\begin{proof}
First, we calculate the derivative of $f(x)$ (for simplicity, replacing $\ppick = \ppick(x/\beta)$): 
\begin{align*}
    f'(x) = &  \ppick \cd (1-\ppick)/\beta +  \ppick \cd (1-\ppick)/\beta + \ppick/\beta \cd (\ppick \cd (1-\ppick)^2/\beta + \ppick \cd (0-\ppick \cd (1-\ppick))/\beta)\\
    = & 2 \cd \ppick \cd (1-\ppick)/\beta + x/\beta^2 \cd \ppick \cd ((1-\ppick)^2 - \ppick \cd (1-\ppick))\\ 
     = & 2 \cd \ppick \cd (1-\ppick)/\beta  + x/\beta^2 \cd \ppick\cd (1-\ppick) \cd (1-\ppick) - \ppick )\\ 
     = & \ppick \cd (1-\ppick)/\beta \cd (2+ x/\beta \cd (1-2\ppick))
\end{align*}
This is negative if $2 + x/\beta\cd (1-2\ppick) < 0$. First, we will note that there is only a single point of zero derivative when $x<0$. This is because the derivative is zero whenever: 
\begin{align*}
    2 + x/\beta\cd (1-2\ppick)  & =0 \\
    1-2 \cd \ppick = &  -2/(x/\beta)\\ 
    -2 \cd \ppick = & -1 -2/(x/\beta)\\ 
     \ppick(x/\beta) = & 0.5 +1/(x/\beta)\\ 
\end{align*}
where in the last line, we have added in the fact that $\ppick(\cd)$ is a function of $x/\beta$. First, we note that $\ppick(x/\beta)$ is always increasing in $x$. Second we note that $1/x$ is negative for $x<0$. As $x$ increases (becomes less negative), $1/\abs{x}$ increases, and thus $1/x$ \emph{decreases}. Thus, because the LHS is \emph{increasing} in $x$ and the RHS is \emph{decreasing} in $x$, they can only have a single point where they cross for $x<0$. 

Next, we want to show that if $f'(x)<0$ and $x<0$, then $f(x) < 0$. Note that $f(x)<0$ if: 
\begin{align*}
    \ppick + x/\beta \cd \ppick(1-\ppick) < &  0\\
    1 + x/\beta\cd (1-\ppick) <& 0
\end{align*}
Note that we have: 
$$0 > 2 + x /\beta\cd (1-2\ppick) > 1 + x/\beta \cd (1-2\ppick) > 1 + x/\beta \cd (1-\ppick)$$
where the last line is because $0 < 1-2 \cd \ppick < 1-\ppick$, because $\ppick <0.5$ by assumption. 
\end{proof}


\newpage
\ifarxiv
\else 
\section*{NeurIPS Paper Checklist}

The checklist is designed to encourage best practices for responsible machine learning research, addressing issues of reproducibility, transparency, research ethics, and societal impact. Do not remove the checklist: {\bf The papers not including the checklist will be desk rejected.} The checklist should follow the references and follow the (optional) supplemental material.  The checklist does NOT count towards the page
limit. 

Please read the checklist guidelines carefully for information on how to answer these questions. For each question in the checklist:
\begin{itemize}
    \item You should answer \answerYes{}, \answerNo{}, or \answerNA{}.
    \item \answerNA{} means either that the question is Not Applicable for that particular paper or the relevant information is Not Available.
    \item Please provide a short (1--2 sentence) justification right after your answer (even for \answerNA). 
\end{itemize}

{\bf The checklist answers are an integral part of your paper submission.} They are visible to the reviewers, area chairs, senior area chairs, and ethics reviewers. You will also be asked to include it (after eventual revisions) with the final version of your paper, and its final version will be published with the paper.

The reviewers of your paper will be asked to use the checklist as one of the factors in their evaluation. While \answerYes{} is generally preferable to \answerNo{}, it is perfectly acceptable to answer \answerNo{} provided a proper justification is given (e.g., error bars are not reported because it would be too computationally expensive'' or ``we were unable to find the license for the dataset we used''). In general, answering \answerNo{} or \answerNA{} is not grounds for rejection. While the questions are phrased in a binary way, we acknowledge that the true answer is often more nuanced, so please just use your best judgment and write a justification to elaborate. All supporting evidence can appear either in the main paper or the supplemental material, provided in appendix. If you answer \answerYes{} to a question, in the justification please point to the section(s) where related material for the question can be found.

IMPORTANT, please:
\begin{itemize}
    \item {\bf Delete this instruction block, but keep the section heading ``NeurIPS Paper Checklist"},
    \item  {\bf Keep the checklist subsection headings, questions/answers and guidelines below.}
    \item {\bf Do not modify the questions and only use the provided macros for your answers}.
\end{itemize}


\begin{enumerate}

\item {\bf Claims}
    \item[] Question: Do the main claims made in the abstract and introduction accurately reflect the paper's contributions and scope?
    \item[] Answer: \answerYes{} 
    \item[] Justification: Our main claim is that winrate can incentivize homogenization, and weighted winrate improves incentives: these are proven in the relevant sections. 
    \item[] Guidelines:
    \begin{itemize}
        \item The answer \answerNA{} means that the abstract and introduction do not include the claims made in the paper.
        \item The abstract and/or introduction should clearly state the claims made, including the contributions made in the paper and important assumptions and limitations. A \answerNo{} or \answerNA{} answer to this question will not be perceived well by the reviewers. 
        \item The claims made should match theoretical and experimental results, and reflect how much the results can be expected to generalize to other settings. 
        \item It is fine to include aspirational goals as motivation as long as it is clear that these goals are not attained by the paper. 
    \end{itemize}

\item {\bf Limitations}
    \item[] Question: Does the paper discuss the limitations of the work performed by the authors?
    \item[] Answer: \answerYes{} 
    \item[] Justification: These are discussed in Section \ref{sec:model}, which describes our model and key assumptions. We also discuss generalizations in Section \ref{sec:conclude}. 
    \item[] Guidelines:
    \begin{itemize}
        \item The answer \answerNA{} means that the paper has no limitation while the answer \answerNo{} means that the paper has limitations, but those are not discussed in the paper. 
        \item The authors are encouraged to create a separate ``Limitations'' section in their paper.
        \item The paper should point out any strong assumptions and how robust the results are to violations of these assumptions (e.g., independence assumptions, noiseless settings, model well-specification, asymptotic approximations only holding locally). The authors should reflect on how these assumptions might be violated in practice and what the implications would be.
        \item The authors should reflect on the scope of the claims made, e.g., if the approach was only tested on a few datasets or with a few runs. In general, empirical results often depend on implicit assumptions, which should be articulated.
        \item The authors should reflect on the factors that influence the performance of the approach. For example, a facial recognition algorithm may perform poorly when image resolution is low or images are taken in low lighting. Or a speech-to-text system might not be used reliably to provide closed captions for online lectures because it fails to handle technical jargon.
        \item The authors should discuss the computational efficiency of the proposed algorithms and how they scale with dataset size.
        \item If applicable, the authors should discuss possible limitations of their approach to address problems of privacy and fairness.
        \item While the authors might fear that complete honesty about limitations might be used by reviewers as grounds for rejection, a worse outcome might be that reviewers discover limitations that aren't acknowledged in the paper. The authors should use their best judgment and recognize that individual actions in favor of transparency play an important role in developing norms that preserve the integrity of the community. Reviewers will be specifically instructed to not penalize honesty concerning limitations.
    \end{itemize}

\item {\bf Theory assumptions and proofs}
    \item[] Question: For each theoretical result, does the paper provide the full set of assumptions and a complete (and correct) proof?
    \item[] Answer: \answerYes{} 
    \item[] Justification: These are discussed in Section \ref{sec:model}, which describes our model and key assumptions. All proofs are in the appendix. 
    \item[] Guidelines:
    \begin{itemize}
        \item The answer \answerNA{} means that the paper does not include theoretical results. 
        \item All the theorems, formulas, and proofs in the paper should be numbered and cross-referenced.
        \item All assumptions should be clearly stated or referenced in the statement of any theorems.
        \item The proofs can either appear in the main paper or the supplemental material, but if they appear in the supplemental material, the authors are encouraged to provide a short proof sketch to provide intuition. 
        \item Inversely, any informal proof provided in the core of the paper should be complemented by formal proofs provided in appendix or supplemental material.
        \item Theorems and Lemmas that the proof relies upon should be properly referenced. 
    \end{itemize}

    \item {\bf Experimental result reproducibility}
    \item[] Question: Does the paper fully disclose all the information needed to reproduce the main experimental results of the paper to the extent that it affects the main claims and/or conclusions of the paper (regardless of whether the code and data are provided or not)?
    \item[] Answer: \answerYes{} 
    \item[] Justification: The appendix describes each experiment in detail. We either provide data directly or link to the data source. In the final paper we will provide all code to reproduce experiments. 
    \item[] Guidelines:
    \begin{itemize}
        \item The answer \answerNA{} means that the paper does not include experiments.
        \item If the paper includes experiments, a \answerNo{} answer to this question will not be perceived well by the reviewers: Making the paper reproducible is important, regardless of whether the code and data are provided or not.
        \item If the contribution is a dataset and\slash or model, the authors should describe the steps taken to make their results reproducible or verifiable. 
        \item Depending on the contribution, reproducibility can be accomplished in various ways. For example, if the contribution is a novel architecture, describing the architecture fully might suffice, or if the contribution is a specific model and empirical evaluation, it may be necessary to either make it possible for others to replicate the model with the same dataset, or provide access to the model. In general. releasing code and data is often one good way to accomplish this, but reproducibility can also be provided via detailed instructions for how to replicate the results, access to a hosted model (e.g., in the case of a large language model), releasing of a model checkpoint, or other means that are appropriate to the research performed.
        \item While NeurIPS does not require releasing code, the conference does require all submissions to provide some reasonable avenue for reproducibility, which may depend on the nature of the contribution. For example
        \begin{enumerate}
            \item If the contribution is primarily a new algorithm, the paper should make it clear how to reproduce that algorithm.
            \item If the contribution is primarily a new model architecture, the paper should describe the architecture clearly and fully.
            \item If the contribution is a new model (e.g., a large language model), then there should either be a way to access this model for reproducing the results or a way to reproduce the model (e.g., with an open-source dataset or instructions for how to construct the dataset).
            \item We recognize that reproducibility may be tricky in some cases, in which case authors are welcome to describe the particular way they provide for reproducibility. In the case of closed-source models, it may be that access to the model is limited in some way (e.g., to registered users), but it should be possible for other researchers to have some path to reproducing or verifying the results.
        \end{enumerate}
    \end{itemize}

\item {\bf Open access to data and code}
    \item[] Question: Does the paper provide open access to the data and code, with sufficient instructions to faithfully reproduce the main experimental results, as described in supplemental material?
    \item[] Answer: \answerNo{} 
    \item[] Justification: Code will be released upon a final version. 
    \item[] Guidelines:
    \begin{itemize}
        \item The answer \answerNA{} means that paper does not include experiments requiring code.
        \item Please see the NeurIPS code and data submission guidelines (\url{https://neurips.cc/public/guides/CodeSubmissionPolicy}) for more details.
        \item While we encourage the release of code and data, we understand that this might not be possible, so \answerNo{} is an acceptable answer. Papers cannot be rejected simply for not including code, unless this is central to the contribution (e.g., for a new open-source benchmark).
        \item The instructions should contain the exact command and environment needed to run to reproduce the results. See the NeurIPS code and data submission guidelines (\url{https://neurips.cc/public/guides/CodeSubmissionPolicy}) for more details.
        \item The authors should provide instructions on data access and preparation, including how to access the raw data, preprocessed data, intermediate data, and generated data, etc.
        \item The authors should provide scripts to reproduce all experimental results for the new proposed method and baselines. If only a subset of experiments are reproducible, they should state which ones are omitted from the script and why.
        \item At submission time, to preserve anonymity, the authors should release anonymized versions (if applicable).
        \item Providing as much information as possible in supplemental material (appended to the paper) is recommended, but including URLs to data and code is permitted.
    \end{itemize}

\item {\bf Experimental setting/details}
    \item[] Question: Does the paper specify all the training and test details (e.g., data splits, hyperparameters, how they were chosen, type of optimizer) necessary to understand the results?
    \item[] Answer: \answerYes{} 
    \item[] Justification: These are only relevant for Appendix \ref{app:errorpatternsrouter}, where we're reproducing results in an existing paper. 
    \item[] Guidelines:
    \begin{itemize}
        \item The answer \answerNA{} means that the paper does not include experiments.
        \item The experimental setting should be presented in the core of the paper to a level of detail that is necessary to appreciate the results and make sense of them.
        \item The full details can be provided either with the code, in appendix, or as supplemental material.
    \end{itemize}

\item {\bf Experiment statistical significance}
    \item[] Question: Does the paper report error bars suitably and correctly defined or other appropriate information about the statistical significance of the experiments?
    \item[] Answer: \answerNo{} 
    \item[] Justification: No results involve or require error bars. 
    \item[] Guidelines:
    \begin{itemize}
        \item The answer \answerNA{} means that the paper does not include experiments.
        \item The authors should answer \answerYes{} if the results are accompanied by error bars, confidence intervals, or statistical significance tests, at least for the experiments that support the main claims of the paper.
        \item The factors of variability that the error bars are capturing should be clearly stated (for example, train/test split, initialization, random drawing of some parameter, or overall run with given experimental conditions).
        \item The method for calculating the error bars should be explained (closed form formula, call to a library function, bootstrap, etc.)
        \item The assumptions made should be given (e.g., Normally distributed errors).
        \item It should be clear whether the error bar is the standard deviation or the standard error of the mean.
        \item It is OK to report 1-sigma error bars, but one should state it. The authors should preferably report a 2-sigma error bar than state that they have a 96\% CI, if the hypothesis of Normality of errors is not verified.
        \item For asymmetric distributions, the authors should be careful not to show in tables or figures symmetric error bars that would yield results that are out of range (e.g., negative error rates).
        \item If error bars are reported in tables or plots, the authors should explain in the text how they were calculated and reference the corresponding figures or tables in the text.
    \end{itemize}

\item {\bf Experiments compute resources}
    \item[] Question: For each experiment, does the paper provide sufficient information on the computer resources (type of compute workers, memory, time of execution) needed to reproduce the experiments?
    \item[] Answer: \answerNo{} 
    \item[] Justification: All experiments can be done easily on a laptop. 
    \item[] Guidelines:
    \begin{itemize}
        \item The answer \answerNA{} means that the paper does not include experiments.
        \item The paper should indicate the type of compute workers CPU or GPU, internal cluster, or cloud provider, including relevant memory and storage.
        \item The paper should provide the amount of compute required for each of the individual experimental runs as well as estimate the total compute. 
        \item The paper should disclose whether the full research project required more compute than the experiments reported in the paper (e.g., preliminary or failed experiments that didn't make it into the paper). 
    \end{itemize}
    
\item {\bf Code of ethics}
    \item[] Question: Does the research conducted in the paper conform, in every respect, with the NeurIPS Code of Ethics \url{https://neurips.cc/public/EthicsGuidelines}?
    \item[] Answer: \answerYes{} 
    \item[] Justification: There are no human experiments or novel data, and our core goal of our research is to improve consumer welfare. 
    \item[] Guidelines:
    \begin{itemize}
        \item The answer \answerNA{} means that the authors have not reviewed the NeurIPS Code of Ethics.
        \item If the authors answer \answerNo, they should explain the special circumstances that require a deviation from the Code of Ethics.
        \item The authors should make sure to preserve anonymity (e.g., if there is a special consideration due to laws or regulations in their jurisdiction).
    \end{itemize}

\item {\bf Broader impacts}
    \item[] Question: Does the paper discuss both potential positive societal impacts and negative societal impacts of the work performed?
    \item[] Answer: \answerYes{} 
    \item[] Justification: Discussed in the introduction, conclusion, and model section. 
    \item[] Guidelines:
    \begin{itemize}
        \item The answer \answerNA{} means that there is no societal impact of the work performed.
        \item If the authors answer \answerNA{} or \answerNo, they should explain why their work has no societal impact or why the paper does not address societal impact.
        \item Examples of negative societal impacts include potential malicious or unintended uses (e.g., disinformation, generating fake profiles, surveillance), fairness considerations (e.g., deployment of technologies that could make decisions that unfairly impact specific groups), privacy considerations, and security considerations.
        \item The conference expects that many papers will be foundational research and not tied to particular applications, let alone deployments. However, if there is a direct path to any negative applications, the authors should point it out. For example, it is legitimate to point out that an improvement in the quality of generative models could be used to generate Deepfakes for disinformation. On the other hand, it is not needed to point out that a generic algorithm for optimizing neural networks could enable people to train models that generate Deepfakes faster.
        \item The authors should consider possible harms that could arise when the technology is being used as intended and functioning correctly, harms that could arise when the technology is being used as intended but gives incorrect results, and harms following from (intentional or unintentional) misuse of the technology.
        \item If there are negative societal impacts, the authors could also discuss possible mitigation strategies (e.g., gated release of models, providing defenses in addition to attacks, mechanisms for monitoring misuse, mechanisms to monitor how a system learns from feedback over time, improving the efficiency and accessibility of ML).
    \end{itemize}
    
\item {\bf Safeguards}
    \item[] Question: Does the paper describe safeguards that have been put in place for responsible release of data or models that have a high risk for misuse (e.g., pre-trained language models, image generators, or scraped datasets)?
    \item[] Answer: \answerNA{} 
    \item[] Justification: Not relevant for given experiments. 
    \item[] Guidelines:
    \begin{itemize}
        \item The answer \answerNA{} means that the paper poses no such risks.
        \item Released models that have a high risk for misuse or dual-use should be released with necessary safeguards to allow for controlled use of the model, for example by requiring that users adhere to usage guidelines or restrictions to access the model or implementing safety filters. 
        \item Datasets that have been scraped from the Internet could pose safety risks. The authors should describe how they avoided releasing unsafe images.
        \item We recognize that providing effective safeguards is challenging, and many papers do not require this, but we encourage authors to take this into account and make a best faith effort.
    \end{itemize}

\item {\bf Licenses for existing assets}
    \item[] Question: Are the creators or original owners of assets (e.g., code, data, models), used in the paper, properly credited and are the license and terms of use explicitly mentioned and properly respected?
    \item[] Answer: \answerYes{} 
    \item[] Justification: All data and results are cited when appropriate. 
    \item[] Guidelines:
    \begin{itemize}
        \item The answer \answerNA{} means that the paper does not use existing assets.
        \item The authors should cite the original paper that produced the code package or dataset.
        \item The authors should state which version of the asset is used and, if possible, include a URL.
        \item The name of the license (e.g., CC-BY 4.0) should be included for each asset.
        \item For scraped data from a particular source (e.g., website), the copyright and terms of service of that source should be provided.
        \item If assets are released, the license, copyright information, and terms of use in the package should be provided. For popular datasets, \url{paperswithcode.com/datasets} has curated licenses for some datasets. Their licensing guide can help determine the license of a dataset.
        \item For existing datasets that are re-packaged, both the original license and the license of the derived asset (if it has changed) should be provided.
        \item If this information is not available online, the authors are encouraged to reach out to the asset's creators.
    \end{itemize}

\item {\bf New assets}
    \item[] Question: Are new assets introduced in the paper well documented and is the documentation provided alongside the assets?
    \item[] Answer: \answerNA{} 
    \item[] Justification: No new assets are created. 
    \item[] Guidelines:
    \begin{itemize}
        \item The answer \answerNA{} means that the paper does not release new assets.
        \item Researchers should communicate the details of the dataset\slash code\slash model as part of their submissions via structured templates. This includes details about training, license, limitations, etc. 
        \item The paper should discuss whether and how consent was obtained from people whose asset is used.
        \item At submission time, remember to anonymize your assets (if applicable). You can either create an anonymized URL or include an anonymized zip file.
    \end{itemize}

\item {\bf Crowdsourcing and research with human subjects}
    \item[] Question: For crowdsourcing experiments and research with human subjects, does the paper include the full text of instructions given to participants and screenshots, if applicable, as well as details about compensation (if any)? 
    \item[] Answer: \answerNA{} 
    \item[] Justification: No human experiments. 
    \item[] Guidelines:
    \begin{itemize}
        \item The answer \answerNA{} means that the paper does not involve crowdsourcing nor research with human subjects.
        \item Including this information in the supplemental material is fine, but if the main contribution of the paper involves human subjects, then as much detail as possible should be included in the main paper. 
        \item According to the NeurIPS Code of Ethics, workers involved in data collection, curation, or other labor should be paid at least the minimum wage in the country of the data collector. 
    \end{itemize}

\item {\bf Institutional review board (IRB) approvals or equivalent for research with human subjects}
    \item[] Question: Does the paper describe potential risks incurred by study participants, whether such risks were disclosed to the subjects, and whether Institutional Review Board (IRB) approvals (or an equivalent approval/review based on the requirements of your country or institution) were obtained?
    \item[] Answer: \answerNA{} 
    \item[] Justification: No human experiments. 
    \item[] Guidelines:
    \begin{itemize}
        \item The answer \answerNA{} means that the paper does not involve crowdsourcing nor research with human subjects.
        \item Depending on the country in which research is conducted, IRB approval (or equivalent) may be required for any human subjects research. If you obtained IRB approval, you should clearly state this in the paper. 
        \item We recognize that the procedures for this may vary significantly between institutions and locations, and we expect authors to adhere to the NeurIPS Code of Ethics and the guidelines for their institution. 
        \item For initial submissions, do not include any information that would break anonymity (if applicable), such as the institution conducting the review.
    \end{itemize}

\item {\bf Declaration of LLM usage}
    \item[] Question: Does the paper describe the usage of LLMs if it is an important, original, or non-standard component of the core methods in this research? Note that if the LLM is used only for writing, editing, or formatting purposes and does \emph{not} impact the core methodology, scientific rigor, or originality of the research, declaration is not required.
    \item[] Answer: \answerNA{}{} 
    \item[] Justification: LLMs only used for minor editing and literature review. 
    \item[] Guidelines:
    \begin{itemize}
        \item The answer \answerNA{} means that the core method development in this research does not involve LLMs as any important, original, or non-standard components.
        \item Please refer to our LLM policy in the NeurIPS handbook for what should or should not be described.
    \end{itemize}

\end{enumerate}
\fi

\end{document}